\documentclass[12pt,english]{article}
\usepackage{amsmath}
\usepackage{amssymb}
\usepackage{color}
\usepackage[colorinlistoftodos]{todonotes}
\usepackage{color,amsmath}
\usepackage{pgf,tikz}
\usetikzlibrary{arrows}
\usetikzlibrary{decorations} 
\usetikzlibrary{shapes,snakes,arrows}
\usepackage{multicol}
\usepackage{subfigure}

\usetikzlibrary{automata}
\usetikzlibrary{positioning}
\usetikzlibrary{shapes.multipart}
\usepackage{rotating}

\usepackage{booktabs}
\usepackage{multirow}
\usepackage{lscape}

\setlength{\unitlength}{1cm}
\setlength{\oddsidemargin}{-10pt} \setlength{\evensidemargin}{-10pt}
\setlength{\textheight}{24cm}
\setlength{\textwidth}{17cm}
\setlength{\topmargin}{-60pt}

\usepackage{mdwlist}
\usepackage{footmisc}
\usepackage{amsthm}
\usepackage{ulem}

\newcommand*{\field}[1]{\mathbb{#1}}

\newtheorem{lemma}{Lemma}
\newtheorem{theorem}{Theorem}
\newtheorem{corollary}{Corollary}
\newtheorem{proposition}{Proposition}
\newtheorem{remark}{Remark}\rm
\newtheorem{example}{Example}\rm

\newcommand{\setSC}{\mathcal{SC}}
\newcommand{\tR}{\tilde{R}}
\usepackage[ruled,vlined]{algorithm2e}

\usepackage{floatrow}
\newfloatcommand{capbtabbox}{table}[][\FBwidth]
\usepackage{blindtext}
\usepackage{bm}

\renewcommand{\emph}[1]{{\textit{#1}}}

\usepackage{float}

\begin{document}

\title{Shapley-Scarf Housing Markets: Respecting Improvement,  Integer Programming, and Kidney Exchange\footnote{This work is financed by 
COST Action CA15210 ENCKEP, supported by COST (European Cooperation in Science and Technology) -- http://www.cost.eu/.}}

\author{P\'eter Bir\'o\thanks{Centre for Economic and Regional Studies, and Corvinus University of Budapest, 1097 Budapest, Tóth Kálmán utca 17, Hungary. P.\ Bir\'o is supported by the Hungarian Scientific Research Fund -- OTKA (no.\ K129086).} \and Flip Klijn\thanks{Institute for Economic Analysis (CSIC) and Barcelona GSE, Campus UAB,
08193 Bellaterra (Barcelona), Spain; e-mail: \texttt{flip.klijn@iae.csic.es}. F.\ Klijn gratefully acknowledges financial support from AGAUR--Generalitat de Catalunya (2017-SGR-1359) and the Spanish Ministry of Science and Innovation through grant ECO2017-88130-P AEI/FEDER, UE and the Severo Ochoa Programme for Centres of Excellence in R\&D (CEX2019-000915-S).} \and  Xenia Klimentova\thanks{INESC TEC, Porto, Portugal.} 
\and Ana Viana\thanks{INESC TEC and ISEP -- School of Engineering, Polytechnic of Porto, Porto, Portugal.}}

\date{\today}

\maketitle    
\begin{abstract}
\noindent In a housing market of Shapley and Scarf \cite{SS1974}, each agent is endowed with one indivisible object and has preferences over all objects. An allocation of the objects is in the (strong) core if there exists no (weakly) blocking coalition. In this paper we show that in the case of strict preferences the unique strong core allocation (or competitive allocation) ``respects improvement'': if an agent's object becomes more attractive for some other agents, then the agent's allotment in the unique strong core allocation weakly improves. We obtain a general result in case of ties in the preferences and provide new integer programming formulations for computing (strong) core and competitive allocations. Finally, we conduct computer simulations to compare the game-theoretical solutions with maximum size and maximum weight exchanges for markets that resemble the pools of kidney exchange programmes.
\vspace*{0.1cm}

\noindent {\textbf{Keywords:} housing market, \and respecting improvement, \and core, \and competitive allocations, \and integer programming, \and kidney exchange programmes.}
\end{abstract}
\renewcommand{\thefootnote}{\arabic{footnote}}

\section{Introduction}\label{sec:intro}
\vspace{-3mm}
\par

Shapley and Scarf \cite{SS1974} introduced so-called ``housing markets'' to model trading in commodities that are inherently indivisible. Specifically, in a housing market each agent is endowed with an object (e.g., a house or a kidney donor) and has ordinal preferences over all objects, including her own. The aim is to find plausible or desirable allocations where each agent is assigned one object. A standard approach in the literature is to discard allocations that can be blocked by a coalition of agents. Specifically, a coalition of agents blocks an allocation if they can trade their endowments so that each of the agents in the coalition obtains a strictly preferred allotment. Similarly, a coalition of agents weakly blocks an allocation if they can trade their endowments so that each of the agents in the coalition obtains a weakly preferred allotment and at least one of them obtains a strictly preferred allotment. Thus, an allocation is in the (strong) core if it is not (weakly) blocked. A distinct but also well-studied solution concept is obtained from competitive equilibria, each of which consists of a vector of prices for the objects and a (competitive) allocation such that each agent's allotment is one of her most preferred objects among those that she can afford.
Interestingly, the three solution concepts are entwined: the strong core is contained in the set of competitive allocations, and each competitive allocation pertains to the core.

In a separate line of research, Balinski and S\"onmez \cite{BS1999} studied the classical two-sided college admissions model of Gale and Shapley \cite{GaleShapley1962} and proved that the student-optimal stable matching mechanism (SOSM) \emph{respects improvement} of student's quality. This means that under SOSM, an improvement of a student's rank at a college will, ceteris paribus, lead to a weakly preferred match for the student. 
The natural transposition of this property to (one-sided) housing markets requires that an agent obtains a weakly preferred allotment whenever her object becomes more desirable for other agents.
We study the following question: Do the most prominent solution concepts for Shapley and Scarf's \cite{SS1974} housing market ``respect improvement''? We obtain several positive answers to this question, which we describe in more detail in the next subsection.

The respecting improvement property is important in many applications where centralised clearinghouses use mechanisms to implement barter exchanges. A leading example are kidney exchange programmes (KEPs), where end-stage renal patients exchange their willing but immunologically incompatible donors (Roth et al.\ \cite{RSU2004}).
In the context of KEPs, the respecting improvement property means that whenever a patient brings a ``better'' donor (e.g., younger or with universal blood type 0 instead of A, B, or AB) or registers an additional donor, the KEP should assign her the same or a better exchange donor.
However, in current KEPs, the typical objective is to maximise the number of transplants and their overall qualities (see, e.g., \cite{biroetal2019b}) which can lead to violations of the respecting improvement property. As an illustration, consider the maximisation of the number of transplants in Figure~\ref{fig:exampleintro}, where each node represents a patient-donor pair. Directed edges represent compatibility between the donor in one pair and the patient in another, and patients may have different levels of preference over their set of compatible donors. %
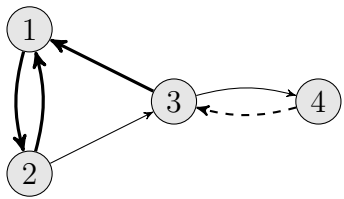
\begin{figure}[h]
    \centering
    \begin{tikzpicture}[scale=1.2,->, >=stealth',scale = 0.2, inner sep=2mm]
\tikzstyle{nd}=[circle,draw,fill=black!10,inner sep=0pt,minimum size=6mm]
 \node[nd] (1) at (0,4)  {$1$};
 \node[nd] (2) at (0,-4) {$2$};
 \node[nd] (3) at (8,0) {$3$};
 \node[nd] (4) at (16,0) {$4$};
 \path[every node/.style={font=\sffamily\footnotesize}]
    (1) edge[bend right=15,very thick] (2) 
    (2) edge[bend right=15,very thick] (1) 
    (3) edge [very thick] (1) 
    (2) edge (3) 
    (3) edge[bend left = 15]  (4) 
    (4) edge[dashed, bend left = 15, thick] (3) 
    ;
 \end{tikzpicture} 
 
    \caption{The maximisation of the number of transplants does not respect improvement.}   
    \label{fig:exampleintro}
\end{figure}
Initially there are only continuous edges, where a thick (thin) edge points to the most (least) preferred donor. For example, patient 3 has two compatible donors: donors 1 and 4, and donor 1 is preferred to donor 4. Obviously, the unique way to maximise the number of (compatible) transplants is obtained by picking the three-cycle (1,2,3). Suppose that patient 4 receives antigen desensitisation treatment so that donor 3 becomes compatible for her, or patient 3 succeeds in bringing a second donor to the KEP and this donor turns out to be compatible for patient 4. Then, the discontinuous edge is included and patient-donor pair 3 ``improves.'' But now the unique way to maximise the number of (compatible) transplants is obtained by picking the 2 two-cycles (1,2) and (3,4), which means that patient 3 receives a kidney that is strictly worse than the kidney she would have received initially.

Similarly, the allocations induced by the standard objectives of KEPs need not be in the (strong) core. We refer to Example~\ref{example:ill} for an illustration of this for the case of the maximisation of the number of transplants. As a consequence, blocking coalitions may exist.
This is an undesirable feature because patient groups could make a potentially justified claim that the  matching procedure is not in their best interest. A particular instance could occur in the organisation of international kidney exchanges if a group of patient-donor pairs, all citizens of the same country, learn that an internal (i.e., national) matching would all of them give a more preferred kidney.

We conduct simulations to determine to what extent the typical KEP objectives lead to the violation of the respecting improvement property and the existence of blocking coalitions.
The simulations also include the three standard game-theoretical solution concepts (as they also do not ``completely'' satisfy the respecting improvement property), for which we first develop novel integer programming formulations to speed up the computations. 

Next, we describe our contributions in more detail and review the related literature.

\subsection{Contributions}

Section~\ref{sec:RI} contains our theoretical results on the respecting improvement property. First, we show that for strict preferences the unique strong core allocation (which coincides with the unique competitive allocation) respects improvement (Theorem~\ref{theorem:strictRI}). 
In the case of preferences with ties, since the strong core can be empty, we focus on the set of competitive allocations. Since typically multiple competitive allocations exist, we have to make setwise comparisons.
We obtain a natural extension of our first result: focusing on the agent's allotments obtained at competitive allocations, we establish that her most preferred allotment in the new market is weakly preferred to her most preferred allotment in the initial market; and similarly, her least preferred allotment in the new market is weakly preferred to her least preferred allotment in the initial market (Proposition~\ref{proposition:compRI}).
Finally, we also prove that when preferences have ties the strong core respects improvement conditional on the strong core being non-empty. More precisely, under the assumption that strong core allocations exist in both the initial and new markets, we show that the agent under consideration weakly prefers each allotment in the new strong core to each allotment in the initial strong core (Theorem~\ref{theorem:weakRI}).

Then we tackle an important assumption in the housing market of Shapley and Scarf, namely that allocations can contain exchange cycles of any length, i.e., cycles are unbounded. As a consequence, some allocations obtained from the theory of housing markets might be difficult to implement in the case of KEPs. The reason is that all transplants in a cycle are usually carried out simultaneously to avoid reneging. So, if the number of surgical teams and operation rooms is small, some of the transplants have to be conducted necessarily in a non-simultaneous way. In many countries, this ``risky'' solution is not allowed \cite{biroetal2019a}.
The definition of core, set of competitive allocations, and strong core can be adjusted to the requirement that the length of exchange cycles does not exceed an exogenously given maximum.
However, in this case the core (and hence also the set of competitive allocations and the strong core) can be empty.\footnote{The corresponding decision problem is NP-hard \cite{BMcD2010,Huang2010} even for tripartite graphs (also known as the cyclic 3D stable matching problem \cite{NG1991}).} Conditional on the existence of a core, competitive, or strong core allocation, we show that even if preferences are strict, when the length of exchange cycles is limited (upper bound 3 or higher), the core, the set of competitive allocations, and the strong core do not respect improvement in terms of the most preferred allotment (Proposition~\ref{proposition:length3}).

In view of practical applications such as KEPs, we provide as a second main contribution, in Section~\ref{sec:IP}, novel integer programming (IP) formulations for finding core, competitive, and strong core allocations. Our simple sets of constraints for the three solution concepts clearly show the hierarchy between them by pinpointing the additional requirements needed when moving from one solution concept to a stronger one. Our formulations are concise and useful for practical computations. 


%

Section~\ref{sec:comp} contains our third main contribution, which complements our theoretical analysis and consists of computer simulations comparing core/competitive/strong core with maximum number of transplants and maximum total weight allocations for both bounded length and unbounded length exchanges. The total weight of an exchange is the sum of the weights associated to the arcs involved in the exchange. In the simulations we use the IP models developed in \cite{Klimentova2020stable} for bounded length exchanges. For unbounded length exchanges we use our novel IP formulations. To carry out our simulations we draw markets from pools similar to those observed in KEPs. 
We analyse the impact in the objective functions of the stability requirements associated with core, competitive, and strong core allocations. For unbounded length we also study the \emph{price of fairness}: the decrease in the number of transplants of maximum weight, core, competitive and strong core allocations, when compared with the maximum size solution. The analysis proceeds with an indirect assessment of how far other solutions would be from the strong core. Such indicator provides some insight  in/into the number of patients in a pool that could get a strictly better match. The section  concludes with a computational analysis of the frequency of violations of the respecting improvement property. It was observed that even though there exist cases of violation of the property for (Wako-, strong) core, their quantity is dramatically lower that those for the maximum size and maximum size solutions.

\subsection{Literature review}

\paragraph{Housing markets}

The non-emptiness of the core was proved in \cite{SS1974} by showing the balancedness of the corresponding NTU-game, and also in a constructive way, by showing that David Gale's  famous Top Trading Cycles algorithm (TTC) always yields competitive allocations. Roth and Postlewaite \cite{RP1977} later showed that for strict preferences the TTC results in the unique strong core allocation, which coincides with the unique competitive allocation in this case. However, if preferences are not strict (i.e., ties are present), the strong core can be empty or contain more than one allocation, but the TTC still produces all competitive allocations.
Wako \cite{Wako1984} showed that the strong core is always a subset of the set of competitive allocations.
Quint and Wako \cite{QW2004} provided an efficient algorithm for finding a strong core allocation whenever there exists one. Their work was further generalised and simplified by Cechl\'arov\'a and Fleiner \cite{Cechlarova2010} who used graph models.
Wako~\cite{Wako1999} showed that the set of competitive allocations coincides with a core defined by antisymmetric weak domination. This equivalence is key for our extension of the definition of competitive allocations to the case of bounded exchange cycles.

\paragraph{Respecting improvement} 

For Gale and Shapley's \cite{GaleShapley1962}  college admissions model, Balinski and S\"onmez \cite{BS1999} proved that the student-optimal stable matching mechanism (SOSM) respects improvement of student's quality. 
Kominers \cite{Kominers2019} generalised this result to more general settings. Balinski and So\"nmez \cite{BS1999} also showed that SOSM is the unique stable mechanism that respects improvement of student quality. Abdulkadiroglu and So\"nmez \cite{AS2003} proposed and discussed the use of TTC in a model of school choice, which is closely related to the college admissions model. Klijn \cite{Klijn2019} proved that TTC respects improvement of student quality.

Focusing on the other side of the market, Hatfield et al. \cite{HKN2016} studied the existence of mechanisms that respect improvement of a \emph{college's} quality. The fact that colleges can match with multiple students leads to a strong impossibility result:  Hatfield et al. \cite{HKN2016} proved that there is no stable nor Pareto-efficient mechanism that respects improvement of a college's quality. In particular, the (Pareto-efficient) TTC mechanism does not respect improvement of a college's quality.

In the context of KEPs with pairwise exchanges, the incentives for bringing an additional donor to the exchange pool was first studied by Roth et al.\ \cite{RSU2005}. In the model of housing markets their \emph{donor-monotonicity} property boils down to the respecting improvement property. Roth et al.\ \cite{RSU2005} showed that so-called priority mechanisms are donor-monotonic if each agent's preferences are dichotomous, i.e., she is indifferent between all acceptable donors.
However, if agents have non-dichotomous preferences, then any mechanism that maximises the number of pairwise exchanges (so, in particular any priority mechanism) does not respect improvement. This can be easily seen by means of Example~\ref{example:pairwise1} in Section~\ref{sec:RI}.

\paragraph{IP formulations for matching}

Quint and Wako \cite{QW2004} already gave IP formulations for finding core and strong core allocations, but the number of constraints in their paper is highly exponential, as their formulations contain a no-blocking condition for each set of agents and any possible exchanges among these agents. Other studies  provided IP formulations for other matching problems.
In particular, for Gale and Shapley's \cite{GaleShapley1962} college admissions model, Ba\"iou and Balinski \cite{baiou2000stable} already described the stable admissions polytope, which can be used as a basic IP formulation. Further recent papers in this line of research focused on college admissions with special features \cite{ABMcB2016}, stable project allocation under distributional constraints \cite{ABSz2018}, the hospital--resident problem with couples \cite{BMMcB2014}, and ties \cite{KM2014,Delormeetal2019}.

\paragraph{Kidney exchange programmes}

Starting from the seminal work by Saidman et al.\ \cite{Saidman06}, initial research on KEPs focused on integer programming (IP) models for selecting pairs for transplantation in such a way that maximum (social) welfare, generally measured by the number of patients transplanted, is achieved. Authors in \cite{Constantino2013, DickersonMPST16, makhau15} proposed new, compact formulations that, besides extending the models in \cite{Saidman06} to accommodate for non-directed donors and patients with multiple donors, also aimed to efficiently solve problems of larger size.  Later studies~\cite{JPP16,dickerson_failure-aware_2014,Carvalhoetal2020,McElfresh2019}
modelled the possibility of pair dropout or cancellation of transplants (if e.g.\ new incompatibilities are revealed). While~\cite{JPP16} and \cite{dickerson_failure-aware_2014} aimed to find a solution that maximises expected welfare, \cite{Carvalhoetal2020} proposed robust optimisation models that search for a solution that, in the event of a cancellation, can be replaced by an alternative (recourse) solution that in terms of selected patients is as ``close'' as possible to the initial solution. McElfresh et al.\ \cite{McElfresh2019} also addressed the robustness of solutions, but they did not consider the possibility of recourse.

In a different line of research, \cite{Carv17, fair19, Mincu2020} studied KEPs where agents (e.g.\ hospitals, regional and national programmes) can collaborate. Allowing agents to control their internal exchanges, Carvalho et al.\ \cite{Carv17} studied strategic interaction using non-cooperative game theory. Specifically, for the two-agent case, they designed a game such that some Nash equilibrium maximises the overall social welfare. Considering multiple matching periods, Klimentova et al.\ \cite{fair19} assumed agents to be non-strategic. Taking into account that at each period there can be multiple optimal solutions, each of which can benefit different agents, the authors proposed an integer programming model to achieve an overall fair allocation. 
Finally, Mincu et al.\ \cite{Mincu2020} proposed integer programming formulations for the case where optimisation goals and constraints can be distinct for different agents.

A recent line of research acknowledges the importance of considering patients' preferences (associated with e.g.\ graft quality) over matches. Bir\'o and Cechl\'arov\'a \cite{BC2007} considered a model for unbounded length kidney exchanges, where patients most care about the graft they receive, but as a secondary factor (whenever there is a tie) they prefer to be involved in an exchange cycle that is as short as possible. The authors showed that although core allocations can still be found by the TTC algorithm, finding a core allocation with maximum number of transplants is a computationally hard problem (inapproximable, unless $P=NP$).
Recently, Klimentova et al.\ \cite{Klimentova2020stable} provided integer programming formulations when each patient has preferences over the organs that she can receive. The authors focused on allocations that among all (strong) core allocations have 
maximum cardinality. Moving away from the (strong) core, they also analysed the trade-off between maximum cardinality and the number of blocking cycles. Finally, the reader is referred to \cite{biroetal2019b} for a recent review of KEPs.

\section{Preliminaries}\label{sec:defs}

We consider housing markets as introduced by Shapley and Scarf \cite{SS1974}. Let $N=\{1,\ldots,n\}$, $n\geq 2$, be the set of \emph{agents}. Each agent $i\in N$ is endowed with one object denoted by $e_i=i$. Thus, $N$ also denotes the set of \emph{objects}. Each agent $i\in N$ has complete and transitive \emph{(weak) preferences} $R_{i}$ over objects. We denote the strict part of $R_i$ by $P_{i}$, i.e., for all $j,k\in N$, $j P_i k$ if and only if $j R_i k$ and not $k R_i j$. Similarly, we denote the indifference part of $R_i$ by $I_{i}$, i.e., for all $j,k\in N$, $j I_i k$ if and only if $j R_i k$ and $k R_i j$. Let $R\equiv {(R_i)}_{i\in N}$. A \emph{(housing) market} is a pair $(N,R)$. Object $j\in N$ is \emph{acceptable} to agent $i\in N$ if $j R_i i$. Agent $i$'s preferences are called \emph{strict} if they do not exhibit \emph{ties} between acceptable objects, i.e., for all acceptable $j,k\in N$ with $j\neq k$ we have $j P_i k$ or $k P_i j$. A housing market has strict preferences if each agent has strict preferences. A housing market where agents do not necessarily have strict preferences is referred to as a housing market with weak preferences.

Given a housing market $M=(N,R)$ and a set $S\subseteq N$, the \emph{submarket} $M_S$ is the housing market where $S$ is the set of agents/objects and where the preferences ${(R_i)}_{i\in S}$ are restricted to the objects in $S$. 

The \emph{acceptability graph} of a housing market $M=(N,R)$ is the directed graph 
$G_M=(N,E)$, 
or $G$ for short, where the set of nodes is $N$ and where $(i,j)$ is a directed edge in $E$ if $j$ is an acceptable object for $i$, i.e., $j R_i i$. In particular, all self-loops $(i,i)$ are in the graph (but for convenience they are omitted in all figures). Let $\tilde{N}\subseteq N$ and $\tilde{E}\subseteq E \cap (\tilde{N} \times \tilde{N})$. For each $i\in \tilde{N}$, the set of agent $i$'s \emph{most preferred edges} in graph $\tilde{G}\equiv (\tilde{N},\tilde{E})$ or simply $\tilde{E}$ is the set $\tilde{E}^{T,i} \equiv \{(i,j) : (i,j)\in \tilde{E} \mbox{ and for each } (i,k)\in \tilde{E}, j R_i k\}$. The most preferred edges in graph $\tilde{G}$ is the set $\cup_{i\in \tilde{N}} \tilde{E}^{T,i}$.

Let $M=(N,R)$ be a housing market. An allocation is a redistribution of the objects such that each agent receives exactly one object. Formally, an \emph{allocation} is a vector $x=\left(x_i\right)_{i\in N}\in N^{N}$ such that:
\begin{itemize*}
\item[(i)] for each $i\in N$, $x_i\in N$ denotes agent $i$'s \emph{allotment}, i.e., the object that she receives, and
\item[(ii)] no object is assigned to more than one agent, i.e.,
$\cup_{i\in N}\{x_i\}=N.$
\end{itemize*}
We will focus on \emph{individually rational} allocations, i.e., allocations where each agent receives an acceptable object. Then, an allocation $x$ can equivalently be described by its corresponding \emph{cycle cover} $g^x$ of the acceptability graph $G$. Formally, $g^x=(N,E^x)$ is the subgraph of $G$ where $(i,j)\in E^x$ if and only if $x_i=j$. Thus, the graph $g^x$ consists of disconnected \textit{trading cycles} or \textit{exchange cycles} that cover $G$. We will often write an (individually rational) allocation in cycle-notation, i.e., as a set of exchange cycles (where we sometimes omit self-cycles). We refer to Example~\ref{example:ill} for an illustration.

An allocation $x$ Pareto-dominates an allocation $z$ if for each $i\in N$, $x_i R_i z_i$, and for some $j\in N$, $x_j P_j z_j$. An allocation is \emph{Pareto-efficient} if it is not Pareto-dominated by any allocation. Two allocations $x,z$ are \emph{welfare-equivalent} if for each $i\in N$, $x_i I_i z_i$.


Next, we recall the definition of solution concepts that have been studied in the literature. A non-empty coalition $S\subseteq N$ \emph{blocks} an allocation $x$ if there is an allocation $z$ such that
(1) $\{z_i : i\in S\} = S$ and
(2) for each $i\in S$, $z_i P_i x_i$.
An allocation $x$ is in the \emph{core}\footnote{In the literature the core is sometimes called the weak core or ``regular'' core.} of the market if there is no coalition that blocks $x$.

A non-empty coalition $S\subseteq N$ \emph{weakly blocks} an allocation $x$ if there is an allocation $z$ such that
(1) $\{z_i : i\in S\} = S$, (2) for each $i\in S$, $z_i R_i x_i$, and (3) for some $j\in S$, $z_j P_j x_j$.
An allocation $x$ is in the \emph{strong core}\footnote{In the literature the strong core is sometimes called the strict core.} of the market if there is no coalition that weakly blocks $x$.

A price-vector is a vector $p={(p_i)}_{i\in N}\in \field{R}^N$ where $p_i$ denotes the price of object $i$.
A competitive equilibrium is a pair $(x,p)$ where $x$ is an allocation and $p$ is a price-vector such that:
\begin{itemize*}
\item[(i)] for each agent $i\in N$, object $x_i$ is affordable, i.e., $p_{x_i} \leq p_i$ and
\item[(ii)] for each agent $i\in N$, each object she prefers to $x_i$ is not affordable, i.e., $j P_i x_i$ implies 
$p_j> p_i$.
\end{itemize*}
An allocation is a \emph{competitive allocation} if it is part of some competitive equilibrium. Since there are $n$ objects, we can assume, without loss of generality, that prices are integers in the set $\{1,2,\ldots,n\}$.
\begin{remark}\label{remark:prices}
{\rm If $(x,p)$ is such that
\vspace*{-0.2cm}
\begin{itemize*}
\item for each $i\in N$, $p_{x_i} \leq p_{i}$, or
\item for each $i\in N$, $p_{i} \leq p_{x_i}$,
\end{itemize*}
\vspace*{-0.2cm}
then for each $i\in N$, $p_{x_i} = p_i$. This follows immediately by looking at each exchange cycle separately (see, e.g., the proof of Lemma~1 in \cite{Cechlarova2010}). Hence, at each competitive equilibrium $(x,p)$, for each $i\in N$, $p_{x_i} = p_i$.
}
\end{remark}
%

Wako \cite{Wako1999} proved that the set of competitive allocations can be defined equivalently as a different type of core.
Formally, a non-empty coalition $S\subseteq N$ \emph{antisymmetrically weakly blocks} an allocation $x$ if there is an allocation $z$ such that
(1) $\{z_i : i\in S\} = S$, (2) for each $i\in S$, $z_i R_i x_i$, (3) for some $j\in S$, $z_j P_j x_j$, and (4) for each $i\in S$, if $z_iI_ix_i$ then $z_i=x_i$.
Requirements (1--3) say that coalition $S$ weakly blocks $x$. The additional requirement (4) is that if an agent in $S$ is indifferent between her allotments at $x$ and $z$ then she must get the very same object, i.e., $z_i=x_i$. An allocation $x$ is in the \emph{core defined by antisymmetric weak domination} if there is no coalition that antisymmetrically weakly blocks $x$. Wako \cite{Wako1999} proved that the set of competitive allocations coincides with the core defined by antisymmetric weak domination. Henceforth, we will often refer to the set of competitive allocations as the \emph{Wako-core}.

\begin{lemma}\label{lemma:cycles}
The strong core, the set of competitive allocations (i.e., the Wako-core), and the core consist of individually rational allocations. Moreover, the cores are equivalently characterised by the absence of \emph{blocking cycles} in the acceptability graph $G=(N,E)$. In other words, in the definition of each of the three cores, it is sufficient to require no-blocking by coalitions $S$, say $S=\{i_1,\ldots,i_k\}$, such that for each $l=1,\ldots,k$ (mod $k$), $z_{i_l}=i_{l+1}$ and $(i_l,i_{l+1})\in E$.
\end{lemma}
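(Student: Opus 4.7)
The proof splits naturally into two parts: (i) individual rationality of allocations in any of the three cores, and (ii) extraction of a single blocking cycle from any blocking coalition.

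For part (i), I would argue by contraposition: suppose $x$ is not individually rational, so there exists $i\in N$ with $iP_ix_i$. Then the singleton coalition $S=\{i\}$ together with $z_i=i$ satisfies $\{z_i:i\in S\}=S$ and $z_iP_ix_i$. This immediately gives blocking in the sense of the core, and since strict preference also implies the strict-preference clause in the definitions of weak blocking and antisymmetric weak blocking (with condition (4) vacuous because $z_iI_ix_i$ fails), coalition $\{i\}$ also weakly blocks and antisymmetrically weakly blocks $x$. Hence any allocation in the core, Wako-core, or strong core is individually rational.

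For part (ii), assume $x$ is individually rational and let $S$ together with $z$ be a (possibly weakly or antisymmetrically weakly) blocking coalition. By condition (1), the restriction $z|_S$ is a permutation of $S$ and therefore decomposes into disjoint cycles $C_1,\ldots,C_m$. In the plain-core case, every agent in $S$ strictly improves by (2), so any single cycle $C_t=\{i_1,\ldots,i_k\}$ with $z_{i_l}=i_{l+1}$ is itself a blocking coalition. In the (Wako- or strong-) core cases I would choose the cycle $C_t$ that contains the agent $j$ guaranteed by condition (3) to strictly improve; the remaining agents of $C_t$ inherit the weak preference $z_{i_l}R_{i_l}x_{i_l}$ from (2), and in the antisymmetric case the local condition (4) on each $i\in S$ transfers to $C_t\subseteq S$ without change. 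Thus in all three cases we obtain a single cycle that blocks $x$ in the corresponding sense.

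It remains to check that the edges of such a blocking cycle lie in $E$, i.e., that each $i_{l+1}$ is acceptable to $i_l$. This is where part (i) is used: individual rationality of $x$ yields $x_{i_l}R_{i_l}i_l$, and combined with $z_{i_l}=i_{l+1}R_{i_l}x_{i_l}$ (for weak or antisymmetric weak blocking) or $z_{i_l}=i_{l+1}P_{i_l}x_{i_l}$ (for plain blocking), transitivity gives $i_{l+1}R_{i_l}i_l$, so $(i_l,i_{l+1})\in E$.

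The only mildly subtle step is the preservation of clause (4) when passing from $S$ to $C_t$ in the Wako-core case, but since (4) is an agent-by-agent condition and $C_t\subseteq S$, it is automatic; the rest is essentially a bookkeeping argument on cycle decompositions of permutations.
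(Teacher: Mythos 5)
Your proposal is correct and takes essentially the same route as the paper: the paper likewise extracts the exchange cycle of the blocking allocation that contains the strictly improving agent and notes that individual rationality (of $x$, hence of the blocking allocation via transitivity) places the cycle's edges in $E$. You merely spell out two steps the paper leaves implicit — the singleton-coalition argument for individual rationality and the agent-by-agent restriction of clause (4) to the chosen cycle — both of which are handled correctly.
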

\begin{proof}
Individual rationality is immediate.
To prove the statement for the strong core, let $x$ be an individually rational allocation. Suppose there is a non-empty coalition $T$ that weakly blocks $x$ through some allocation $w$. Let $j\in T$ be such that $w_j P_j x_j$. Let $S\subseteq T$ be the agents that constitute the exchange cycle, say $(i_1,\ldots,i_k)$, in $w$ that involves agent $j$, i.e., without loss of generality, $j=i_1$. Since $w$ is individually rational, $S=\{i_1,\ldots,i_k\}$ weakly blocks $x$ through the allocation $z$ defined by
\[  z_i \equiv \left \{ \begin{array}{ll}
                w_i & \mbox{if $i\in S$;}\\
                x_i & \mbox{if $i\not \in S$}
                           \end{array}
\right. \]
and for each $l=1,\ldots,k$ (mod $k$), $z_{i_l}=i_{l+1}$ and $(i_l,i_{l+1})\in E$.
This proves the statement for the strong core. The statements for the core and the Wako-core follow similarly.
\end{proof}
\smallskip

An individually rational allocation $x$ is a \emph{maximum size allocation} if for each individually rational allocation $z$, 
$|\{i\in N : x_i\neq i\}| \geq |\{i\in N : z_i\neq i\}|$. Below we provide an example to illustrate the three cores and maximum size allocation.

\begin{example}\label{example:ill} 
{\rm Let $N=\{1,\ldots,6\}$ and let preferences be given by Table~\ref{illpref}, where each agent's own object and all her unacceptable objects are not displayed. For instance, agent 1 is indifferent between objects 2 and 3, and strictly prefers both objects to object 5. 
\begin{figure}[H]
\begin{floatrow}

\capbtabbox{%

\begin{tabular}{c|c|c|c|c|c}
1&2&3&4&5&6\\
\hline
2,3& 1 &2 &3 &2&1\\ 
5&3&4&2&6& \\
\end{tabular}

}{%
  \caption{Preferences}\label{illpref}%
}

\hspace*{-0.5cm}
\ffigbox{%

  \begin{tikzpicture}[->, >=stealth',scale = 0.2, inner sep=2mm]
\tikzstyle{nd}=[circle,draw,fill=black!10,inner sep=0pt,minimum size=6mm]
 \node[nd] (1) at (0,0)  {$1$};
 \node[nd] (2) at (0,-8) {$2$};
 \node[nd] (3) at (-8,0) {$3$};
 \node[nd] (4) at (-8,-8) {$4$};
 \node[nd] (5) at (8,0) {$5$};
 \node[nd] (6) at (8,-8) {$6$};
 \path[every node/.style={font=\sffamily\footnotesize}]
    (1) edge[bend right=15, very thick] (2) 
    (1) edge[very thick] (3)
    (1) edge[darkgray] (5)
    
    (2) edge[bend right=15, very thick] (1) 
    (2) edge[bend right=15, darkgray] (3) 
    
    (3) edge[bend right=15,  very thick] (2) 
    (3) edge[bend right=15,darkgray] (4) 
    
    (4) edge[bend right=15, very thick] (3) 
    (4) edge[darkgray] (2) 
    
    (5) edge[very thick] (2) 
    (5) edge[darkgray] (6) 
    
    (6) edge[very thick] (1) 
    ;
 \end{tikzpicture}

}{%
  \caption{Acceptability graph}\label{illgraph}%
}
\capbtabbox{%
\begin{tabular}{l}
     $x^a=\{(1,3,2)\}$ \\
     $x^b=\{(1,2),(3,4)\}$ \\
     $x^c=\{(1,5,2), (3,4)\}$ \\
     $x^d=\{(1,3,4,2)\}$\\
     $x^e=\{(1,5,6), (2,3,4)\}$ 
\end{tabular}
}{%
  \caption{Allocations}\label{illall}%
}

\end{floatrow}
\end{figure}
\noindent Figure~\ref{illgraph} displays the induced acceptability graph.\footnote{Throughout the paper, self-loops are omitted from the acceptability graphs in the examples.} Here, a thick edge denotes the most preferred object(s) and a thin edge denotes the second most preferred object (if any).

Consider the allocations defined in Table~\ref{illall}. For instance, $x^d$ (in cycle-notation, but without self-cycles) is the allocation $x^d=(x^d_1,x^d_2,x^d_3,x^d_4,x^d_5,x^d_6)=(3,1,4,2,5,6)$. It can be verified that $x^a$ is the unique strong core allocation, $x^a$ and $x^b$ are the competitive allocations, while $x^a$, $x^b$, $x^c$, and $x^d$ form the core. 
Hence, the strong core is a singleton and a proper subset of the set of competitive allocations, while the latter set is also a proper subset of the core. Finally, $x^e$ is the unique maximum size allocation and does not pertain to the core.
\hfill $\diamond$
}
\end{example}
\smallskip

Shapley and Scarf \cite{SS1974} (see also page~135, Roth and Postlewaite \cite{RP1977}) showed that the set of competitive allocations is non-empty and coincides with the set of allocations that are obtained through David Gale's Top Trading Cycles algorithm, which is discussed in the next subsection.\footnote{If preferences are not strict, then the Top Trading Cycles algorithm is applied to the preference profiles that can be obtained by breaking ties in all possible ways.} Roth and Postlewaite \cite{RP1977} showed that if preferences are strict, then there is a unique strong core allocation which coincides with the unique competitive allocation. In general, when preferences are not strict, the strong core can be empty (see, e.g., Footnote~\ref{wakoexample}) or contain more than one allocation (see, e.g., Example~\ref{example:ill}).
However, Wako \cite{Wako1984} showed that the strong core is always a subset of the set of competitive allocations.\footnote{Wako \cite{Wako1991} showed that the strong core coincides with the set of competitive allocations if and only if any two competitive allocations are welfare-equivalent. Hence, whenever the set of competitive allocations is a singleton it coincides with the strong core.} Furthermore, it is easy to see that the set of competitive allocations is always a subset of the core (Shapley and Scarf \cite{SS1974}). 
%

If preferences are strict, the unique competitive allocation is Pareto-efficient (because it is in the strong core) and Pareto-dominates any other allocation
(Lemma~1, Roth and Postlewaite \cite{RP1977}); in particular, any other core allocation is Pareto-inefficient.
If preferences are not strict, it is possible that each competitive allocation is Pareto-dominated by some allocation that is not competitive.\footnote{Example~1 in Sotomayor \cite{Sotomayor2005}, which is attributed to Jun Wako, is illustrative: $N=\{1,2,3\}$ with $2 P_1 3 P_1 1$, $1 I_2 3 P_2 2$, $2 P_3 1 P_3 3$. The set of competitive allocations consists of $x=\{(1,2),(3)\}$ and $x'=\{(1),(2,3)\}$, which are Pareto-dominated by core allocations $\{(1,2,3)\}$ and $\{(1,3,2)\}$, respectively. Moreover, $x_1 P_1 x_1'$ and $x_3' P_3 x_3$. The strong core is empty.\label{wakoexample}}

Finally, competitive allocations need not be welfare-equivalent: in fact, different agents can strictly prefer distinct competitive allocations (see, e.g., Footnote~\ref{wakoexample}). However, Wako \cite{Wako1991} showed that all strong core allocations are welfare-equivalent. The latter result also immediately follows from Quint and Wako's \cite{QW2004} algorithm, which is discussed in the next subsection.

\subsection*{Definitions for bounded length exchanges}

Motivated by kidney exchange programmes, here we consider housing markets where the length of allowed exchange cycles in allocations is limited. Assuming that blocking coalitions are subject to the same limitation, the core and strong core can be adjusted straightforwardly (see also \cite{BMcD2010}).
In view of Wako's \cite{Wako1999} result we  similarly adjust the set of competitive allocations by using the (equivalent) Wako-core.

For a housing market $M=(N,R)$, let $k$ be an integer that indicates the maximal allowed length of exchange cycles. An allocation is a $k$-allocation if each exchange cycle has length at most $k$. 
Formally, an allocation $x$ is a \emph{$k$-allocation} if there exists a partition of $N=S_1\cup S_2\cup\dots \cup S_q$ such that for each $p\in\{1,\dots,q\}$, $|S_p|\leq k$ and $\{x_i : i\in S_p \} =S_p$.  
The definition of the three cores can be adjusted accordingly as well.
Specifically, the \emph{$k$-core} consists of the $k$-allocations for which there is no blocking coalition of size at most $k$; the strong \emph{$k$-core} consists of the $k$-allocations for which there is no weakly blocking coalition of size at most $k$; the Wako-\emph{$k$-core} consists of the $k$-allocations that are not antisymmetrically weakly dominated through a coalition of size at most $k$.
Due to the ``nestedness'' of the three blocking notions, it follows that the strong $k$-core is a subset of the Wako $k$-core, and that the Wako $k$-core is a subset of the $k$-core.
It is also easy to verify that, similarly to the unbounded case, for strict preferences the strong $k$-core coincides with the Wako-$k$-core.

To keep notation as simple as possible, whenever the context is clear, we will omit ``$k$'' from $k$-allocation, $k$-core, etc. and instead refer to $k$-housing markets to invoke the above restriction on exchange cycles, blocking coalitions, allocations, and cores.

The absence of blocking coalitions is also called \emph{stability} in the literature, that is widely used especially for bounded length exchanges. In the case of \emph{pairwise exchanges} (i.e., for $k=2$), the problem is equivalent to the so-called \emph{stable roommates problem}, and \emph{stable marriage problem} if the graph is bipartite, as introduced in \cite{GaleShapley1962}. For strict preferences the core, Wako-core, and strong core are all equivalent and they correspond to the set of \emph{stable matchings}. For weak preferences the core and Wako-core are the same, and correspond to \emph{weakly stable matchings}, whilst the strong core corresponds to \emph{strongly stable matchings}. See more about these concepts in \cite{Manlove2013book}.

\subsection{Algorithms to find all strong core allocations}

In this section we describe the TTC algorithm and its extension by Quint and Wako \cite{QW2004} for finding strong core allocations. Our concise and standardised descriptions provide an easy summary of the current state of the art. Moreover, the graphs defined in the algorithms are crucial tools in Section~\ref{sec:RI} where we prove that the strong core ``respects improvement.'' We consider housing markets with strict preferences and weak preferences separately. In the first case the strong core is always a singleton (which consists of the unique competitive allocation), while in the second case it can be empty.

\subsubsection*{Strict preferences}

Let $M=(N,R)$ be a housing market with strict preferences.
We will construct a subgraph $G^{CP}$ of $G$ by using the Top Trading Cycles (TTC) algorithm of David Gale \cite{SS1974}. The node set of $G^{CP}$ is $N$ and its directed edges $E^{CP}=E^C \cup E^P$ are partitioned into two sets  $E^C$ and $E^P$, where $E^C$ will denote the edges in the TTC cycles and $E^P$ will denote a particular subset of edges pointing to more preferred objects.
\smallskip

\noindent \textbf{TTC algorithm, construction of $G^{CP}$}\\
Set $E^C\equiv \emptyset$, $E^P\equiv\emptyset$, and $M_1\equiv M$. Let
$G_1=(N_1,E_1)\equiv (N,E)$ denote the acceptability graph of $M_1$.
We iteratively construct ``shrinking'' submarkets $M_t$ ($t=2,3,\ldots$) whose acceptability graph will be denoted by $G_t=(N_t,E_t)$. Set $t\equiv 1$.
\begin{enumerate}
    \item Let $E^T_t$ be the set of most preferred edges in $G_t$.
    \item Let $c_t$ be a (top trading) cycle in $(N_t,E^T_t)$. Let $C_t$ and $E_t$ denote the node set and edge set of $c_t$, respectively.
    \item Add the edges of $c_t$ to $E^C$, i.e., $E^C\equiv E^C \cup E_t$.
    \item Let $E^T_t(\vec{C_t})$ denote the subset of edges of $E^T_t$ pointing to $C_t$ from outside $C_t$. Formally, $E^T_t(\vec{C_t})\equiv \{(i,j)\in E^T_t: i\in N_t\setminus C_t \mbox{ and } j\in C_t\}$. Add $E^T_t(\vec{C_t})$ to $E^P$, i.e., $E^P\equiv E^P \cup E^T_t(\vec{C_t})$.
    \item If $N_t=C_t$, stop. Otherwise, let $N_{t+1}\equiv N_t\setminus C_t$, denote the submarket $M_{N_{t+1}}$ by $M_{t+1}$, and go to step 1.
\end{enumerate}
When the algorithm terminates the set of cycles in $E^C$ is the unique competitive allocation and hence the unique strong core allocation.

We classify the relation between any two agents through graph $G^{CP}$ as follows. Let $i,j\in N$ with $i\neq j$. Then, exactly one of the following situations holds:
\begin{itemize}
    \item $i$ and $j$ are \emph{independent}: there is no directed path from $i$ to $j$ or from $j$ to $i$;
    \item $i$ and $j$ are \emph{cycle-members}: there is a path from $i$ to $j$ that entirely consists of edges in $E^C$, i.e., $i$ and $j$ are in the same top trading cycle;\footnote{Obviously, in this case there is also a path from $j$ to $i$ that consists of edges in $E^C$.}
    \item $i$ is a \emph{predecessor} of $j$ (and $j$ is a \emph{successor} of $i$): there is a path from $i$ to $j$ in $G^{CP}$ using at least one edge from $E^P$;\footnote{Note that in this case $j$ was removed from the market before $i$. Hence, there is no path from $i$ to $j$ using only edges in $E^C$ and $j$ is not a predecessor of $i$.} or
    \item $j$ is a \emph{predecessor} of $i$ (and $i$ is a \emph{successor} of $j$): there is a path from $j$ to $i$ in $G^{CP}$ using at least one edge from $E^P$.
\end{itemize}

In case $i$ is a predecessor of $j$, we define \emph{the best path} from $i$ to $j$ to be the path from $i$ to $j$ in $G^{CP}$ where at each node $k\neq j$ on the path, the path follows agent $k$'s (unique) most preferred edge in 
$$\{(k,l) \in E^{CP} : \mbox{ there is a path from $l$ to $j$ using edges in $E^{CP}$}\}.$$
Let $p^b(i,j)$ denote the unique best path from $i$ to $j$ in $G^{CP}$.
For each node $k\neq j$ on $p^b(i,j)$, if there are multiple paths from $k$ to $j$, then $p^b(i,j)$ follows the edge that points to the object that is part of the earliest top trading cycle.

%

\subsubsection*{Weak preferences}

Let $M=(N,R)$ be a housing with weak preferences. 
We will now describe the efficient algorithm of Quint and Wako \cite{QW2004} for finding a strong core allocation  whenever there exists one. We use the simplified interpretation of Cechl\'arov\'a and Fleiner \cite{Cechlarova2010} and construct a subgraph $G^{SP}$ of $G$ with node set $N$ and edge set $E^{SP}\equiv E^S\cup E^P$, which will be useful for our later analysis. 

We first recall two definitions. A \emph{strongly connected component} of a directed graph is a subgraph where there is a directed path from each node to every other node. An \emph{absorbing set} is a strongly connected component with no outgoing edge. Note that each directed graph has at least one absorbing set.\smallskip

\noindent \textbf{Quint-Wako algorithm, construction of $G^{SP}$}\\
Set $E^S\equiv \emptyset$, $E^P\equiv\emptyset$, and $M_1=M$. Let
$G_1=(N_1,E_1)\equiv (N,E)$ denote the acceptability graph of $M_1$. We iteratively construct ``shrinking'' submarkets $M_t$ ($t=2,3,\ldots$) whose acceptability graph will be denoted by $G_t=(N_t,E_t)$. Set $t\equiv 1$.
\begin{enumerate}
    \item Let $E^T_t$ be the set of most preferred edges in $G_t$.
    \item Let $S_t$ be an absorbing set in $(N_t,E^T_t)$.
    Let $N_t(S_t)$ and $E^T_t(S_t)$ denote the node set and edge set of $S_t$.
    \item Add the edges of $S_t$ to $E^S$, i.e., $E^S\equiv E^S \cup E^T_t(S_t)$.
    \item Let $E^T_t(\vec{S_t})$ denote the subset of edges of $E^T_t$ pointing to $N_t(S_t)$ from outside $N_t(S_t)$. Formally, $E^T_t(\vec{S_t})\equiv \{(i,j)\in E^T_t: i\in N_t\setminus N_t(S_t) \mbox{ and } j\in N_t(S_t)\}$. Add $E^T_t(\vec{S_t})$ to $E^P$, i.e., $E^P\equiv E^P \cup E^T_t(\vec{S_t})$.
    \item If $N_t=N_t(S_t)$, stop. Otherwise, let $N_{t+1}\equiv N_t\setminus N_t(S_t)$, denote the submarket $M_{N_{t+1}}$ by $M_{t+1}$, and go to step 1.
\end{enumerate}

Quint and Wako \cite{QW2004} proved that there is a strong core allocation for $M$ if and only if for each absorbing set $S_t$ defined in the above algorithm there exists a cycle cover, i.e., a set of cycles covering all the nodes of $S_t$. Finding a cycle cover, if one exists, can be done with the classical Hungarian method \cite{kuhn1955hungarian} for finding a perfect matching for the corresponding bipartite graph where the objects are on one side, the agents are on the other side, and there is an undirected arc between an object-agent pair if the object is among the agent's most preferred objects (which might include her own object). We refer to \cite{Abraham2007}, \cite{QW2004}, and \cite{Cechlarova2010} for further details on this reduction.

\begin{remark}\label{remark:strongcore}{\rm
If for each absorbing set $S_t$ defined in the above algorithm there exists a cycle cover, then the set of cycle covers (one cycle cover for each absorbing set) constitutes a strong core allocation. Conversely, as shown in the proof of Theorem~5.5 in Quint and Wako \cite{QW2004}, each strong core allocation can be written as a set of cycle covers (one for each absorbing set $S_t$).
Hence, if the strong core is non-empty, all its allocations can be obtained by selecting all possible cycle covers in the algorithm.}
\hfill $\diamond$
\end{remark}

\begin{remark}\label{remark:welfare}{\rm
In the Quint-Wako algorithm, each agent obtains the same welfare at any two cycle covers in which she is involved (because the agent is indifferent between any two of her outgoing edges in an absorbing set). Together with Remark~\ref{remark:strongcore}, this immediately proves Theorem~2(2) in Wako \cite{Wako1991}, which states that all strong core allocations are welfare-equivalent.}
\hfill $\diamond$
\end{remark}

We classify the relation between any two agents through graph $G^{SP}$ as follows. Let $i,j\in N$ with $i\neq j$. Then, exactly one of the following situations holds:
\begin{itemize}
    \item $i$ and $j$ are \emph{independent}: there is no directed path from $i$ to $j$ or from $j$ to $i$;
    \item $i$ and $j$ are \emph{absorbing set members}: there is a path from $i$ to $j$ that entirely consists of edges in $E^S$, i.e., $i$ and $j$ are in the same absorbing set;\footnote{Obviously, in this case there is also a path from $j$ to $i$ that consists of edges in $E^S$.}
    \item $i$ is a \emph{predecessor} of $j$ (and $j$ is a \emph{successor} of $i$): there is 
     a path from $i$ to $j$ in $G^{SP}$ using at least one edge from $E^P$;\footnote{Note that in this case $j$ was removed from the market before $i$. Hence, there is no path from $i$ to $j$ using only edges in $E^S$ and $j$ is not a predecessor of $i$.} or
    \item $j$ is a \emph{predecessor} of $i$ (and $i$ is a \emph{successor} of $j$): there is 
    a path from $j$ to $i$ in $G^{SP}$ using at least one edge from $E^P$.
\end{itemize}

In case $i$ is a predecessor of $j$, a path from $i$ to $j$ in $G^{SP}$ is said to be a best path from $i$ to $j$ if at each node $k\neq j$ on the path, the path follows one of agent $k$'s most preferred edges in 
$$\{(k,l) \in E^{SP} : \mbox{ there is a path from $l$ to $j$ using edges in $E^{SP}$}\}.$$
Let $P^b(i,j)$ denote the \emph{set of best paths} from $i$ to $j$ in $G^{SP}$.

\section{Respecting Improvement}\label{sec:RI}

Let $R,\tilde{R}$ be two preference profiles over objects $N$. Let $i\in N$. We say that \emph{$\tilde{R}$ is an improvement for $i$ with respect to $R$} if
\begin{itemize*}
\item[(1).] $\tilde{R}_i\,=\,R_i$;

\item[(2).] for all $j\neq i$ and all $k$ with $k\, R_j\,  j$,\, $i\, I_j\, k \Longrightarrow i\,\tilde{R}_j\, k$ and $i\, P_j\, k \Longrightarrow i\,\tilde{P}_j\, k$; \mbox{ and}

\item[(3).] for all $j\neq i$ and all $k,l\neq i$ with $k,l\, R_j \, j$,\,\, $k\, R_j\, l \Longleftrightarrow k\, \tilde{R}_j\, l$.
\end{itemize*}
In other words, (1) only agents different from $i$ have possibly different preferences at $\tilde{R}$ and ${R}$, (2) for each agent $j\neq i$, object $i$ can become more preferred than some acceptable objects, and (3) for each agent $j\neq i$ and for each pair of acceptable objects different from $i$, preferences remain unchanged.

As a simple example with $N=\{1,2,3,4,5\}$, let $R$ be any preference profile such that $4\, P_5 \, 1 \, I_5 \, 2 \, I_5 \, 3 \, P_5 5$. Let $\tilde{R}$ be the preference profile where agents $1,2,3$, and $4$ have the same preferences as at $R$ and let $\tilde{R}_5$ be defined by $1\, I_5 4\, P_5 \,  2 \, I_5 \, 3 \, P_5 5$. Then, $\tilde{R}$ is an improvement for agent $1$ with respect to $R$.

\subsection{Strict preferences}

For each profile of strict preferences $R$, let $\tau(R)$ denote the unique competitive allocation (or strong core allocation). We show that $\tau$ \emph{respects improvement} on the domain of strict preferences:

\begin{theorem}\label{theorem:strictRI} For each $i\in N$ and each pair of profiles of strict preferences $R,\tilde{R}$ such that $\tilde{R}$ is an improvement for $i$ with respect to $R$, $\tau_i(\tilde{R})\, R_i \,\tau_i(R)$.
\end{theorem}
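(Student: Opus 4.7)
The plan is to leverage two standard facts about the TTC: at each step, the cycles of the top-edge graph are pairwise vertex-disjoint (since every agent has a unique top), so they may be processed in any order without altering the final allocation; and enlarging the remaining object set can only weakly improve an agent's favorite under strict preferences. I will run the TTC under $R$ and under $\tilde{R}$ in parallel, prioritizing cycles that do not contain agent $i$ and keeping the two executions synchronized for as long as possible.

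The crux is the following observation: at any common state $N_t$, every cycle $C$ of the top-edge graph under $\tilde{R}$ with $i\notin C$ is also a cycle of the top-edge graph under $R$ at $N_t$. Indeed, for each agent $a\in C$, her $\tilde{R}$-top in $N_t$ is the successor $b\in C$ with $b\neq i$; I must show $b$ is also $a$'s $R$-favorite in $N_t$. For every $c\in N_t$ with $c\neq i$, condition~(3) in the definition of improvement yields $b\,R_a\,c\Leftrightarrow b\,\tilde{R}_a\,c$, so $b$ dominates $c$ under $R$ too (and in particular $b\,R_a\,a$ gives $\tilde R$-acceptability in $N_t$ as well). If $i\in N_t$, then the inequality $b\,P_a\,i$ follows by the contrapositive of condition~(2) applied to $k=b$: were $i\,P_a\,b$ under $R$, then condition~(2) would force $i\,\tilde{P}_a\,b$, contradicting $b\,\tilde{P}_a\,i$. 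Hence $b$ is $a$'s $R$-favorite in $N_t$, and $C$ is a cycle under $R$ at state $N_t$.

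Using this, I execute a synchronized phase: starting from $N_1=N$, iteratively remove a cycle $C\not\ni i$ that exists in the top-edge graph under $\tilde{R}$ (and hence also under $R$, by the observation). Let $N^*$ denote the state reached when no such cycle remains under $\tilde{R}$. Because the top-edge graph of any finite nonempty set of remaining agents contains a cycle and none avoids $i$ at $N^*$, the unique cycle at $N^*$ under $\tilde{R}$ must contain $i$; consequently $\tau_i(\tilde{R})$ equals $i$'s $R_i$-favorite in $N^*$ (using $\tilde{R}_i=R_i$).

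Finally, under $R$ there may still be cycles not containing $i$ at $N^*$, namely cycles that were broken under $\tilde{R}$ by some agent whose top shifted to $i$. I keep removing these cycles under $R$ alone until reaching a state $N^R\subseteq N^*$ at which $i$ is in a cycle under $R$; then $\tau_i(R)$ equals $i$'s $R_i$-favorite in $N^R$. Since $N^R\subseteq N^*$ and the $R_i$-favorite over a larger set is weakly preferred, $\tau_i(\tilde{R})\,R_i\,\tau_i(R)$ follows. The principal obstacle is the displayed observation tying the top-edge graphs under $R$ and $\tilde{R}$; once it is established, the remainder of the proof is an immediate consequence of order-invariance of the TTC and of monotonicity of favorites under set inclusion, with no induction required.
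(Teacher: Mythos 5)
Your argument is correct, but it takes a genuinely different route from the paper's. The paper first reduces to the case of a single agent $j$ whose preferences change, builds the graph $G^{CP}$ recording the TTC run under $R$ (cycle edges $E^C$ plus pointing edges $E^P$), and then performs a case analysis on the relation between $i$ and $j$ (independent / cycle-members / predecessor / successor), using the best-path construction $p^b(i,j)$ in the predecessor case to exhibit $i$'s new trading cycle explicitly. You instead couple the two TTC executions: your key lemma --- that any top-cycle under $\tilde{R}$ avoiding $i$ is also a top-cycle under $R$, because under an improvement an agent's top can only change by switching \emph{to} $i$ --- lets you run both algorithms in lockstep until the only remaining $\tilde{R}$-cycle contains $i$, after which order-invariance of TTC and monotonicity of the favourite under set inclusion ($N^R\subseteq N^*$) finish the job, with no single-agent reduction and no case analysis. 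This is more economical and handles multi-agent improvements in one shot; what the paper's heavier machinery buys is the explicit identification of $i$'s new cycle, which is reused almost verbatim in the proof of Theorem~\ref{theorem:weakRI} for the Quint--Wako algorithm under weak preferences, where your coupling would need nontrivial adaptation (tops become sets of edges and cycles become absorbing sets). One small point of hygiene: your parenthetical derivation of $b\,R_a\,a$ from condition~(3) is formally circular, since (3) is stated only for objects that are already $R_a$-acceptable; you need the intended reading (also used implicitly in the paper's own proof, which takes $\tilde{R}_j$ to be obtained from $R_j$ by shifting $i$ up) that an improvement leaves the acceptability and relative order of all objects other than $i$ untouched, so that $\tilde{R}_a$-acceptability of $b\neq i$ yields $R_a$-acceptability directly.
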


\begin{proof}
Let $x=\tau(R)$ and $\tilde{x}=\tau(\tilde{R})$. We can assume that 
there is a unique agent $j\neq i$ with $\tilde{R}_j \neq R_j$ and prove that $\tilde{x}_i R_i x_i$. (If there is more than one such agent, we repeatedly apply the one-agent result to obtain the result.) We can also assume that $i \tilde{R}_j x_j$. (Otherwise $x_j \tilde{P}_j i$ and hence, from the TTC algorithm, $\tilde{x}=x$.)

We distinguish among three cases, depending on the relation between agents $i$ and $j$ in the graph $G^{CP}$ for the market $(N,R)$, i.e., the graph that is obtained in the TTC algorithm for $x$.

\tikzstyle{nd}=[circle, draw, fill = black!10, inner sep=0pt,minimum size=4mm]
\tikzstyle{nde}=[circle, draw, fill= black!10, inner sep=0pt,minimum size=2mm]
\tikzstyle{ed}=[->,shorten >=0.5pt,semithick]   
\tikzstyle{edg}=[->,shorten >=0.5pt,line width = 0.5mm, green!50!black]
    
\begin{figure}
    \centering
    \begin{tabular}{ccc}
          
    \begin{tikzpicture}[ >=stealth']
        \draw (0,0) ellipse (1.2 and 0.6);
        \draw (0,2) ellipse (1.2 and 0.6);
        \draw (1.0,3.5) ellipse (1.0 and 0.5);
        \draw (-0.5,4.8) ellipse (1.0 and 0.5);

        \node[nd] (j) at (1.0,0.3) {\footnotesize $j$};
        \node[nd] (i) at (0.5,1.5) {\footnotesize $i$};
        \node[nde] (e1) at (0,0.6) {};
        \node[nde] (e2) at (0,1.4) {};
        
        \node[nde] (e3) at (0.1, 2.6) {};
        \node[nde] (e4) at (0.4, 3.1) {};
        
        \node[nde] (e5) at (0.4, 3.9) {};
        \node[nde] (e6) at (0, 4.4) {};
        
        \node[nde] (e7) at (-0.4, 2.6) {};
        \node[nde] (e8) at (-0.6, 4.3) {};
        \node[red] (f) at (-1.5, 3.0) {$F(i)$};
        \path[every node/.style={font=\sffamily\footnotesize}]
        (e1) edge[ed] (e2)
        (e3) edge[ed] (e4)
        (e5) edge[ed] (e6)
        (e7) edge[ed] (e8);
        \draw[red, semithick] (-2,1.0) -- (-2,5);
        \draw[red, semithick] (-2,1.0) -- (2.3,1.0);
        \draw[red, semithick] (2.3,1.0) -- (2.3,5);
    \end{tikzpicture}
    &~~~
    \begin{tikzpicture}[ >=stealth']
        \draw (0,0) ellipse (2 and 0.6);
        \draw (1,2) ellipse (1.0 and 0.5);
        \draw (-0.8,3.2) ellipse (1.0 and 0.5);
        \node[green!50!black] (p) at (2.4, 0.5) {\small $p(i,j)$};
        
        \node[nd, fill = green!20, draw = green!50!black] (i) at (1.2,0.5) {\footnotesize $i$};
        \node[green!50!black, fill = white] (dd) at (0.1,0.6) {\footnotesize $\dots$};
        \node[nd, fill = green!20, draw = green!50!black] (j) at (-1.2,0.5) {\footnotesize $j$};
        \node[nde, fill = green!20, draw = green!50!black] (e1) at (0.4,0.6) {};
        \node[nde] (e2) at (0.7,1.5) {};
        
        \node[nde] (e3) at (0.1, 2.2) {};
        \node[nde] (e4) at (-0.2, 2.8) {};
        
        \node[nde, fill = green!20, draw = green!50!black] (e5) at (-0.3, 0.6) {};
        \node[nde] (e6) at (-0.7, 2.7) {};
        
        \node[red] (f) at (0.0, 4.0) {$F^*(N(i,j))$};
        \path[every node/.style={font=\sffamily\footnotesize}]
        (e1) edge[ed] (e2)
        (e3) edge[ed] (e4)
        (e5) edge[ed] (e6)
        (i) edge[edg] (e1)
        (e5) edge[edg] (j);
        \draw[red, semithick] (-2,1.0) -- (-2,5);
        \draw[red, semithick] (-2,1.0) -- (2.3,1.0);
        \draw[red, semithick] (2.3,1.0) -- (2.3,5);
    \end{tikzpicture}
    &~~~
    \begin{tikzpicture}[ >=stealth']
        \draw (0,0) ellipse (0.8 and 0.5);
        \draw (1.2,2.5) ellipse (0.8 and 0.5);
        \draw (-0.5,1.5) ellipse (0.6 and 0.3);
        \draw (-1,2.7) ellipse (0.6 and 0.3);
        
        \draw (0,4) ellipse (0.8 and 0.5);
        \draw (2.3,4.3) ellipse (0.8 and 0.5);
        \draw (2.3,5.6) ellipse (0.6 and 0.3);
    
        \node[nd, fill = green!20, draw = green!50!black] (i) at (0,0.5) {\footnotesize$i$};
        \node[nde] (e2) at (-0.4, 1.2) {};
        \node[nde] (e3) at (-0.5, 1.8) {};
        \node[nde] (e4) at (-0.7, 2.4) {};
        \node[nde] (e5) at (-1.4, 2.5) {};
        \node[nde] (e6) at (-0.4, 2.7) {};
        
        \node[nde, fill = green!20, draw = green!50!black] (l1) at (1.0, 2.0) {};   
        \node[nde, fill = green!20, draw = green!50!black] (l3) at (1.6, 2.05) {};
        \node[nd, fill = green!20, draw = green!50!black] (l) at
        (1.2, 3.0) {\footnotesize$l$};
        \node[nde] (l2) at (0.4, 2.5) {};
        \node[green!50!black, fill = white] (ddl) at (2.12, 2.6) {\footnotesize$\dots$};

        \node[nd, fill = green!20, draw = green!50!black] (j) at
        (0.4, 3.6) {\footnotesize$j$};
        \node[nde] (j1) at (0.8, 4.0) {};
        
        \node[nde] (k1) at (1.5, 4.3) {};
        \node[nd] (k) at (2.3, 3.8) {\footnotesize$k$};
        \node[nde] (k2) at (2.3, 4.8) {};
        
        \node[nde] (u1) at (2.3, 5.3) {};
        
        \node[green!50!black] (p) at (1.4, 1.5) {\small $p^b(i,j)$};
        \node[red] (f) at (1.6, 5.0) {\small $F(k)$};

        \path[every node/.style={font=\sffamily\footnotesize}]
        (i) edge[ed] (e2)
        (e3) edge[ed] (e4)
        (i) edge[ed, bend left = 35] (e5) 
        
        (l2) edge[ed] (e6)
        
        (i) edge[edg] (l1)
        (l) edge[edg] (j)
        (l) edge[ed] (k)
        (l1) edge[edg,bend right = 10] (l3)
        (ddl) edge[edg, bend right = 18] (l)
        (j1) edge[ed] (k1)
        (l3) edge[edg, bend right = 8] (ddl)
        
        (k2) edge[ed] (u1)
        ;
        \draw[red, semithick] (1.1,6) -- (1.1,3.4);
        \draw[red, semithick] (1.1,3.4) -- (3.4,3.4);
        \draw[red, semithick] (3.4,3.4) -- (3.4,6);
        
        
    \end{tikzpicture}\\
    {\sc Case I} & {\sc Case II} & {\sc Case III}
    \end{tabular}

    \caption{Graph $G^{CP}$ (simplified) in the proof of Theorem~\ref{theorem:strictRI}. Each ellipse represents a top trading cycle.} 
    \label{fig:my_label}
\end{figure}
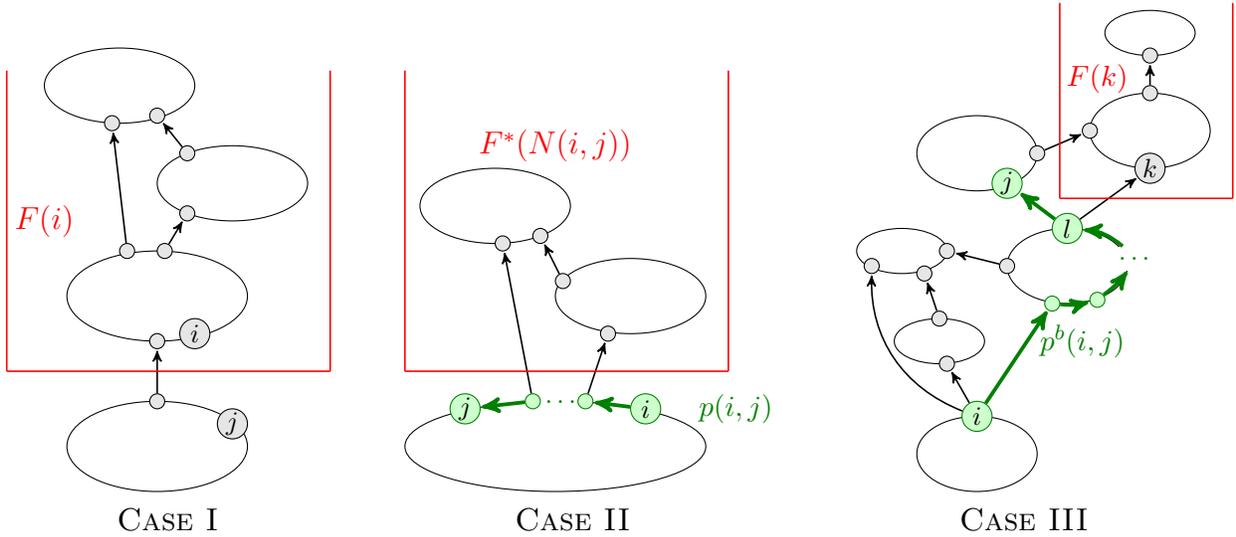

\smallskip

\noindent {\sc Case I:} $i$ and $j$ are independent or $j$ is a predecessor of $i$. Let $F(i)$ be the set of \emph{followers}\footnote{We avoid the use of the usually equivalent nomenclature ``successor'' as the latter term has already a particular (and different) meaning.} of $i$ in the graph $G^{CP}$, where we use the convention $i\in F(i)$. 
Then, $j\not \in F(i)$. In the TTC algorithm for $R$, the agents in $F(i)$ form, among themselves, trading cycles. Since for each agent $k\in F(i)$, $\tilde{R}_k = R_k$, it follows that the trading cycles formed by $F(i)$ in the TTC algorithm for $R$ are also formed in the TTC algorithm for $\tilde{R}$. 
%
%
Hence, $\tilde{x}_i=x_i$.\smallskip

\noindent {\sc Case II:} $i$ and $j$ are cycle-members. Let $C$ be the cycle in the graph $G^{CP}$ that contains $i$ and $j$. Let $p(i,j)$ be the unique path from $i$ to $j$ in the graph $G^{CP}$. Obviously, $p(i,j)$ is part of $C$. Let $N(i,j)$ be the nodes on $p(i,j)$. (So, $i,j\in N(i,j)$.) Let $F^*(N(i,j))$ be the followers outside of $N(i,j)$ that can be reached by some path in $G^{CP}$ that (1) starts from a node in $N(i,j)$ and (2) does \emph{not} contain edges in $C$. Then, the nodes in $F^*(N(i,j))$ constitute trading cycles at $x$. Moreover, the nodes in $F^*(N(i,j))$ are neither predecessors of $j$ nor cycle-members with $j$. Hence, by the same arguments as in Case I, at $\tilde{x}$ the nodes in $F^*(N(i,j))$ constitute the same trading cycles as at $x$. Since $i \tilde{R}_j x_j$, the trading cycle of agent $i$ at $\tilde{x}$ is the cycle that consists of the path $p(i,j)$ and the edge $(j,i)$. Since $p(i,j)$ is part of $C$, it follows that $\tilde{x}_i=x_i$.\smallskip

\noindent {\sc Case III:} $i$ is a predecessor of $j$.
Since for each $k\neq j$, $\tilde{R}_k=R_k$, $i\tilde{R}_j x_j$, and $\tilde{R}_j$ is obtained from $R_j$ by shifting $i$ up, it follows that at some step in the TTC algorithm for $\tilde{R}$, agent $j$ will start pointing to agent $i$ and will keep doing so if and as long as agent $i$ is present. Next, consider the predecessor of $j$ on the path $p^b(i,j)$ (in $G^{CP}$ for the market $(N,R)$), say agent $l$. Let $k\in N$ with $k P_l j$. By definition of $p^b(i,j)$, $k$ and $j$ are not cycle-members nor is $k$ a predecessor of $j$.
From the same arguments as in Case I, the nodes in $F(k)$ (the followers of $k$, where $k\in F(k)$) form among themselves the same trading cycles at $x$ and $\tilde{x}$. Hence, at some step in the TTC algorithm for $\tilde{R}$, agent $l$ will start pointing to agent $j$ and will keep doing so if and as long as agent $j$ is present. We can repeat the same arguments until we conclude that each node in the cycle formed by $p^b(i,j)$ and the edge $(j,i)$ will, at some step, start pointing to its follower and will keep doing so if and as long as the follower is present. Thus, the cycle is a trading cycle at $\tilde{x}$. Let $i'$ be the follower of $i$ in this cycle. Note that in the graph $G^{CP}$, $(i,i')\in E^C$ or $(i,i')\in E^P$. If $(i,i')\in E^C$, then $i'=x_i$, in which case $\tilde{x}_i=i'=x_i$. If $(i,i')\in E^P$, then by definition of $E^P$, $\tilde{x}_i = i' P_i x_i$.
\end{proof}

\subsection{Weak preferences}

For each profile of preferences $R$, let $\mathcal{T}(R)$ denote the set of competitive allocations.

\begin{proposition}\label{proposition:compRI}
For each $i\in N$ and each pair of profiles of preferences $R,\tilde{R}$ such that $\tilde{R}$ is an improvement for $i$ with respect to $R$,
\begin{itemize*}
\item there is $\tilde{x}\in \mathcal{T}(\tilde{R})$ such that for each $x\in \mathcal{T}(R)$, $\tilde{x}_i R_i x_i$; and
\item there is $x\in \mathcal{T}(R)$ such that for each $\tilde{x}\in \mathcal{T}(\tilde{R})$, $\tilde{x}_i R_i x_i$.
\end{itemize*}
\end{proposition}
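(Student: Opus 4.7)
My plan is to reduce the proposition to the strict-preference case (Theorem~\ref{theorem:strictRI}) via the characterization, recalled in the paper from Shapley and Scarf, that $\mathcal{T}(R)$ equals the set of TTC outcomes $\tau(R^\pi)$ obtained by running the Top Trading Cycles algorithm on every strict refinement $R^\pi$ of $R$ (and analogously for $\tilde{R}$). The central tool I would prove is a \emph{lifting lemma}: given any strict refinement of either $R$ or $\tilde{R}$, one can construct a strict refinement of the other weak profile so that the two resulting strict profiles still stand in the improvement-for-$i$ relation. With such a lemma, both assertions follow by choosing an allocation in $\mathcal{T}(R)$ or $\mathcal{T}(\tilde{R})$ whose $i$-th coordinate is $R_i$-extremal in the appropriate direction, lifting the tie-breaking that produces it, and invoking Theorem~\ref{theorem:strictRI}.

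\textbf{Lifting construction.} To lift a strict refinement $\pi$ of $R$ to a strict refinement $\tilde{\pi}$ of $\tilde{R}$ so that $\tilde{R}^{\tilde{\pi}}$ is an improvement for $i$ with respect to $R^\pi$, I would set $\tilde{\pi}_i := \pi_i$ (legitimate because $R_i = \tilde{R}_i$); for each $j \neq i$ I would copy verbatim the strict order that $\pi_j$ induces on $N \setminus \{i\}$, which refines $\tilde{R}_j$ on $N \setminus \{i\}$ by condition (3) of the improvement definition, and then insert $i$ at the top of its $\tilde{R}_j$-indifference class. The key verification is that $i P_j^\pi k$ implies $i \tilde{P}_j^{\tilde{\pi}} k$: from $i P_j^\pi k$ one has $i R_j k$, hence $i \tilde{R}_j k$ by condition (2) of the improvement definition, and then either $i \tilde{P}_j k$ (so $k$ already sits strictly below $i$ in $\tilde{\pi}_j$) or $i \tilde{I}_j k$ combined with $i$ being at the top of the common $\tilde{R}_j$-tie class forces $i \tilde{P}_j^{\tilde{\pi}} k$. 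The opposite direction is symmetric: to lift $\tilde{\pi}$ of $\tilde{R}$ to $\pi$ of $R$ I would copy $\tilde{\pi}_i$ and the non-$i$ order of $\tilde{\pi}_j$, and this time place $i$ at the \emph{bottom} of its $R_j$-indifference class; now $i P_j^\pi k$ forces $k$ strictly below $i$'s $R_j$-tie class, so $i P_j k$ and therefore $i \tilde{P}_j k$ by condition~(2).

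\textbf{Putting it together.} For the first statement, I would choose $x^\ast \in \mathcal{T}(R)$ with $x^\ast_i$ an $R_i$-maximal element of $\{x_i : x \in \mathcal{T}(R)\}$ (which exists since the set is finite), pick $\pi$ with $\tau(R^\pi) = x^\ast$, lift it to $\tilde{\pi}$, and set $\tilde{x} := \tau(\tilde{R}^{\tilde{\pi}}) \in \mathcal{T}(\tilde{R})$; Theorem~\ref{theorem:strictRI} yields $\tilde{x}_i R_i x^\ast_i$, and maximality of $x^\ast_i$ gives $\tilde{x}_i R_i x_i$ for every $x \in \mathcal{T}(R)$. For the second statement I would dually pick $\tilde{x}^\ast \in \mathcal{T}(\tilde{R})$ with $\tilde{x}^\ast_i$ an $R_i$-minimal element of $\{\tilde{x}_i : \tilde{x} \in \mathcal{T}(\tilde{R})\}$, take $\tilde{\pi}$ with $\tau(\tilde{R}^{\tilde{\pi}}) = \tilde{x}^\ast$, lift to $\pi$, and set $x := \tau(R^\pi) \in \mathcal{T}(R)$; Theorem~\ref{theorem:strictRI} gives $\tilde{x}^\ast_i R_i x_i$, and minimality of $\tilde{x}^\ast_i$ gives $\tilde{x}_i R_i \tilde{x}^\ast_i R_i x_i$ for every $\tilde{x} \in \mathcal{T}(\tilde{R})$. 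I expect the lifting lemma itself to be the delicate point: one has to exploit precisely the freedom that condition~(3) grants (the non-$i$ order is shared between $R$ and $\tilde{R}$) together with the freedom to place $i$ at the extreme end of its tie class in the target profile, which is exactly what condition~(2) demands once the profiles are strict.
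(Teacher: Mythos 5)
Your proposal is correct and follows essentially the same route as the paper: reduce to the strict case via the tie-breaking characterisation of $\mathcal{T}$, select the allocation whose $i$-allotment is extremal, and invoke Theorem~\ref{theorem:strictRI}. The only difference is that you spell out the lifting lemma (placing $i$ at the top, respectively bottom, of its indifference class) which the paper dispatches with ``it is not difficult to see''; your explicit construction is a correct justification of that step.
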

\noindent In other words, if agent $i$ compares her best allotment at the allocations in $\mathcal{T}(R)$ with her best allotment at the allocations in $\mathcal{T}(\tilde{R})$, then she prefers the latter. Similarly, if agent $i$ compares her worst allotment at the allocations in $\mathcal{T}(R)$ with her worst allotment at the allocations in $\mathcal{T}(\tilde{R})$, then she prefers again the latter. Note that in general there is no competitive allocation where each agent receives her most preferred allotment (among those that are obtained at competitive allocations), i.e., agents do not unanimously agree on the ``best'' competitive allocation (see, e.g., agents 3 and 4 and competitive allocations $x^a$ and $x^b$ in Example~\ref{example:ill}). Nonetheless, Proposition~\ref{proposition:compRI} shows that any individual agent that is systematically optimistic or pessimistic about the specific competitive allocation that is chosen subscribes to the thesis that ``the competitive mechanism'' would respect any of her potential improvements.

\begin{proof}
Let $R^1,R^2,\ldots,R^m$ be the profiles of strict preferences that are obtained from $R$ by breaking ties between acceptable objects in each possible way. 
Similarly, let $\tilde{R}^1,\tilde{R}^2,\ldots,\tilde{R}^p$ be the profiles of strict preferences that are obtained from $\tilde{R}$ by breaking ties between acceptable objects in each possible way, where possibly $p\neq m$.
Then, from Shapley and Scarf \cite{SS1974} (see also page 306 in Wako \cite{Wako1991}), $\mathcal{T}(R)=\{\tau(R^1),\tau(R^2),\ldots,\tau(R^m)\}$ and $\mathcal{T}(\tilde{R})=\{\tau(\tilde{R}^1),\tau(\tilde{R}^2),\ldots,\tau(\tilde{R}^p)\}$. 
It is not difficult to see that for each $R^k$, there is some $\tilde{R}^l$ such that $\tilde{R}^l$ is an improvement for $i$ with respect to $R^k$. Similarly, for each $\tilde{R}^l$, there is some $R^k$ such that $\tilde{R}^l$ is an improvement for $i$ with respect to $R^k$.

We can assume, without loss of generality, that for each $k=1,\ldots,m$, $\tau_i(R^1) R_i \tau_i(R^k)$.
Let $\tilde{R}^l$ be an improvement for $i$ with respect to $R^1$. Then, from Theorem~\ref{theorem:strictRI}, $\tau_i(\tilde{R}^l) R^1_i \tau_i({R}^1)$.
Since $R^1_i = R_i$, $\tau_i(\tilde{R}^l) R_i \tau_i({R}^1)$. This proves the first statement.

We can also assume, without loss of generality, that for each $l=1,\ldots,p$, $\tau_i(\tilde{R}^l) R_i  \tau_i(\tilde{R}^p)$. Let $R^k$ be such that $\tilde{R}^p$ is an improvement for $i$ with respect to $R^k$. Then, from Theorem~\ref{theorem:strictRI}, $\tau_i(\tilde{R}^p) R^k_i \tau_i({R}^k)$.
Since $R^k_i = R_i$, $\tau_i(\tilde{R}^p) R_i \tau_i({R}^k)$. This proves the second statement.
\end{proof}
\bigskip

The following example illustrates Proposition~\ref{proposition:compRI}.

\begin{example}\label{example:cor1} 
{\rm Let $N=\{1,2,3,4,5\}$ and let preferences $R$ and $\tilde{R}$ be given by Tables~\ref{illcor1} and~\ref{illcor1impr}, where only acceptable objects are displayed. Note that $\tilde{R}$ is an improvement for agent $3$ with respect to $R$.
\begin{figure}[H]
\begin{floatrow}
\capbtabbox{%
\begin{tabular}{c|c|c|c|c}
1&2&3&4&5\\
\hline
4& 1 & 4   & 1 & 2\\ 
2& 3,5 & 2 & 4 & 5\\
1& 2 & 3 & & \\
\end{tabular}

\;
}{%
  \caption{$R$}\label{illcor1}%
}
\quad\quad
%
\capbtabbox{%
\begin{tabular}{c|c|c|c|c}
1&2&3&4&5\\
\hline
4& \bf{3} & 4   & 1,\bf{3}& 2\\  
2& 1 & 2 & 4 & 5\\
1& 5 & 3 &  & \\
& 2 &  &  & \\
\end{tabular}
}{%
  \caption{$\tilde{R}$}\label{illcor1impr}%
}
\quad\quad\quad\quad
\capbtabbox{%
\begin{tabular}{l}
     $x^a=\{(1,4),(2,5)\}$ \\
     $x^b=\{(1,4),(2,3)\}$ \\
     $x^c=\{(1,2),(3,4)\}$ 
\end{tabular}
}{%
  \caption{Competitive allocations}\label{illcorall}%
}
\end{floatrow}
\end{figure}
\noindent By applying the TTC algorithm to the strict preferences obtained by breaking all ties in all possible ways, we compute the competitive allocations (Table \ref{illcorall}). In the case of $R$, the two competitive allocations are $x^a$ and $x^b$, and in the case of $\tilde{R}$, the two competitive allocations are $x^b$ and $x^c$.
Hence, both of agent 3's best and worst competitive allotment strictly improve, and at both $R$ and $\tR$ her best allotment is different from her worst allotment.
\hfill $\diamond$
}
\end{example}

In Example~\ref{example:cor1}, for the ``improving agent'' (agent 3), each competitive allotment in the new market is weakly preferred to each competitive allotment in the initial market. However, it is easy to construct housing markets without this feature. 

\smallskip
Next, we turn to the strong core, which in the case of weak preferences is a (possibly proper) subset of the set of competitive allocations. 
Formally, for a profile of preferences $R$, let $\mathcal{SC}(R)$ denote the (possibly empty) strong core of $R$. Since strong core allocations are welfare-equivalent (Remark~\ref{remark:welfare}), we can show that the correspondence $\mathcal{SC}$ \emph{conditionally respects improvement}:

\begin{theorem}\label{theorem:weakRI}
Let $i\in N$. Let $R,\tilde{R}$ be a pair of profiles of preferences such that $\tilde{R}$ is an improvement for $i$ with respect to $R$. If $\mathcal{SC}(R),\mathcal{SC}(\tilde{R})\neq \emptyset$, then
for each $\tilde{x}\in \mathcal{SC}(\tilde{R})$ and for each $x\in \mathcal{SC}(R)$, $\tilde{x}_i R_i x_i$.
\end{theorem}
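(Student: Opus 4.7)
The plan is to reduce the statement to the strict-preference case of Theorem~\ref{theorem:strictRI} via a tie-breaking argument, exploiting the welfare-equivalence of strong core allocations (Remark~\ref{remark:welfare}).

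First, by Remark~\ref{remark:welfare}, any two allocations in $\mathcal{SC}(R)$ assign $i$ allotments that are indifferent under $R_i$, and likewise for $\mathcal{SC}(\tilde{R})$. Hence it suffices to exhibit a single pair $(x,\tilde{x}) \in \mathcal{SC}(R)\times\mathcal{SC}(\tilde{R})$ with $\tilde{x}_i\, R_i\, x_i$. Using Shapley and Scarf's characterization $\mathcal{T}(R) = \bigcup_{R'}\{\tau(R')\}$ over strict refinements $R'$ of $R$ (combined with $\mathcal{SC}(R)\subseteq\mathcal{T}(R)$), every strong core allocation is realised as $\tau(R^*)$ for some strict refinement $R^*$, and similarly on the $\tilde{R}$ side.

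Next, I would construct compatible strict refinements $R^*$ of $R$ and $\tilde{R}^*$ of $\tilde{R}$ satisfying three conditions: (i) $\tilde{R}^*$ is an improvement for $i$ with respect to $R^*$; (ii) $\tau(R^*)\in\mathcal{SC}(R)$; and (iii) $\tau(\tilde{R}^*)\in\mathcal{SC}(\tilde{R})$. For (i), I would have $R^*_k$ and $\tilde{R}^*_k$ share the same strict linear order on $N\setminus\{i\}$ for each $k\neq i$ (automatically satisfying improvement condition~(3)) and insert $i$ in $R^*_k$ at a position weakly lower than her position in $\tilde{R}^*_k$ (for condition~(2)), chosen consistently with $R_k$. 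Feasibility of this joint insertion rests on two structural observations about the original improvement: objects strictly below $i$ in $R_k$ remain strictly below in $\tilde{R}_k$ (condition~(2) of the improvement definition), and, by the contrapositive of the same condition, objects strictly above $i$ in $\tilde{R}_k$ are strictly above in $R_k$. Together these squeeze $i$'s admissible positions in $R^*_k$ into a non-empty range weakly below her position in $\tilde{R}^*_k$.

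The main obstacle is achieving (ii) and (iii) jointly with (i). Because $\mathcal{SC}(R)$ and $\mathcal{SC}(\tilde{R})$ are non-empty, each absorbing set in $G^{SP}(R)$ and $G^{SP}(\tilde{R})$ admits a cycle cover (Remark~\ref{remark:strongcore}). Starting from a choice of cycle covers realising some $\tilde{x}\in\mathcal{SC}(\tilde{R})$, I would define $\tilde{R}^*$ so that its tie-breaking encodes these cycle covers (giving $\tau(\tilde{R}^*)=\tilde{x}$), and then argue that the linear order on $N\setminus\{i\}$ inherited by $R^*$ from $\tilde{R}^*$ remains compatible with some cycle-cover selection in $G^{SP}(R)$ (giving $\tau(R^*)\in\mathcal{SC}(R)$). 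I expect the technical core of this compatibility argument to follow a case analysis on the relationship between $i$ and the agents with changed preferences in $G^{SP}(R)$, paralleling Cases~I--III in the proof of Theorem~\ref{theorem:strictRI} but with absorbing sets in place of top trading cycles; this is the hard part, as it requires simultaneously honouring the shared-order constraint imposed by (i) and producing a strong-core-valid cycle cover for $R$.

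Finally, with compatible $(R^*,\tilde{R}^*)$ in hand, Theorem~\ref{theorem:strictRI} yields $\tau_i(\tilde{R}^*)\, R^*_i\, \tau_i(R^*)$, i.e., $\tilde{x}_i\, R_i\, \tau_i(R^*)$ since $R^*_i = R_i$. From $\tau(R^*)\in\mathcal{SC}(R)$ and Remark~\ref{remark:welfare}, $\tau_i(R^*)\, I_i\, x_i$ for any $x\in\mathcal{SC}(R)$; the same remark applied to $\mathcal{SC}(\tilde{R})$ gives $\tilde{x}_i\, I_i\, \tilde{y}_i$ for any $\tilde{y}\in\mathcal{SC}(\tilde{R})$. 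Chaining the welfare identities yields $\tilde{y}_i\, R_i\, x_i$ for all pairs $(x,\tilde{y})\in\mathcal{SC}(R)\times\mathcal{SC}(\tilde{R})$, which is the desired conclusion.
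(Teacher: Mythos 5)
Your overall scaffolding is sound: welfare-equivalence (Remark~\ref{remark:welfare}) does reduce the claim to exhibiting a single suitable pair, and it is true that every strong core allocation arises as $\tau(R^*)$ for some strict refinement $R^*$ (via $\mathcal{SC}\subseteq\mathcal{T}$ and the Shapley--Scarf characterisation). However, the proof has a genuine gap exactly where you flag ``the hard part'': you never establish that improvement-compatible strict refinements $(R^*,\tilde{R}^*)$ can be chosen so that \emph{both} $\tau(R^*)\in\mathcal{SC}(R)$ and $\tau(\tilde{R}^*)\in\mathcal{SC}(\tilde{R})$. This is not a routine verification. A strict refinement's TTC outcome is only guaranteed to be competitive, and Example~\ref{example:ill} shows that a competitive allocation can lie outside a non-empty strong core; so only \emph{some} tie-breakings of $R$ land in $\mathcal{SC}(R)$, and the constraint that $R^*$ inherit its order on $N\setminus\{i\}$ from a $\tilde{R}^*$ chosen to realise a particular $\tilde{x}\in\mathcal{SC}(\tilde{R})$ may well conflict with the tie-breakings that realise strong core allocations of $R$ (and vice versa). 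Resolving this tension is essentially equivalent in difficulty to the theorem itself, and you explicitly defer it rather than prove it. Without that compatibility lemma, the chain in your final paragraph cannot start, because $\tau_i(R^*)\,I_i\,x_i$ is only available if $\tau(R^*)$ really is in $\mathcal{SC}(R)$.

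For comparison, the paper does not route through Theorem~\ref{theorem:strictRI} at all. It fixes $x\in\mathcal{SC}(R)$, reduces (via Remark~\ref{remark:welfare}) to finding one $\tilde{x}\in\mathcal{SC}(\tilde{R})$ with $\tilde{x}_i\,R_i\,x_i$, normalises to a single agent $j$ with changed preferences and $i\,\tilde{R}_j\,x_j$, and then runs the Case I--III analysis (independent/predecessor, same absorbing set, $i$ a predecessor of $j$) directly on the Quint--Wako graph $G^{SP}$, using Remark~\ref{remark:strongcore} and the indifference of an agent among her outgoing edges within an absorbing set. That is precisely the absorbing-set analogue of the TTC argument that you say your compatibility step would ``parallel'' --- so completing your plan would in effect require reproducing the paper's argument inside the tie-breaking layer, which adds machinery without avoiding the core difficulty. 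To repair your write-up, either supply the joint tie-breaking compatibility lemma in full, or abandon the reduction and argue directly on $G^{SP}$ as the paper does.
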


%

\begin{proof}
Let $x\in\setSC(R)$. From Remark~\ref{remark:welfare} it follows that it is sufficient to show that there exists $\tilde{x}\in\setSC(\tR)$ with $\tilde{x}_i R_i x_i$.
We can assume that there is a unique agent $j\neq i$ with $\tilde{R}_j \neq R_j$. (If there is more than one such agent, we repeatedly apply the one-agent result to obtain the result.) We can also assume that $i \tilde{R}_j x_j$. (Otherwise $x_j \tilde{P}_j i$ and hence, from the Quint-Wako algorithm, $x\in\setSC(R)=\setSC(\tR)$.) 

We distinguish among three cases, depending on the relation between agents $i$ and $j$ in the graph $G^{SP}$ for the market $(N,R)$, i.e., the graph that is generated in the Quint-Wako algorithm to obtain $x$.\smallskip

\noindent {\sc Case I:} $i$ and $j$ are independent or $j$ is a predecessor of $i$. Let $F(i)$ be the followers of $i$ in the graph $G^{SP}$, where we use the convention $i\in F(i)$. Then, $j\not \in F(i)$. In the Quint-Wako algorithm for $R$, the nodes in $F(i)$ are exactly the nodes of a collection of absorbing sets. Since for each agent $k\in F(i)$, $\tilde{R}_k = R_k$, it follows that in the Quint-Wako algorithm for $\tR$ 
the nodes in $F(i)$ are exactly the nodes of the same collection of absorbing sets.
Since $\setSC(\tR)\neq \emptyset$, it follows from Remark~\ref{remark:strongcore} that there exists $\tilde{x}\in \setSC(\tR)$ such that for each agent $k\in F(i)$, $\tilde{x}_k =x_k$. In particular, $\tilde{x}_i=x_i$.\smallskip

\noindent {\sc Case II:} $i$ and $j$ are absorbing set members. Let $S_t$ be the absorbing set that contains $i$ and $j$ in the Quint-Wako algorithm for $R$. Let $(i,l)$ be an edge in $S_t$, where possibly $l=i$.
Let $F^*(N_t(S_t))$ be the followers outside of $N_t(S_t)$ that can be reached by some path in $G^{SP}$ that starts from a node in $N_t(S_t)$. Then, the nodes in $F^*(N_t(S_t))$
are exactly the nodes of a collection of absorbing sets in the Quint-Wako algorithm for $R$. Moreover, the nodes in $F^*(N_t(S_t))$ are neither predecessors of $j$ nor absorbing set members with $j$. Hence, by the same arguments as in Case I, in the Quint-Wako algorithm for $\tR$ the nodes in $F^*(N_t(S_t))$ are again the nodes of the same collection of absorbing sets. Therefore, when the Quint-Wako algorithm is applied to $\tR$, the absorbing set that contains $i$ will again contain $l$. Since $\setSC(\tR)\neq \emptyset$, at each $\tilde{x}\in \setSC(\tR)$, agent $i$ will receive an object $\tilde{x}_i$ such that $\tilde{x}_i \, I_i l$. Since also ${x}_i \, I_i l$, we obtain $\tilde{x}_i \, I_i {x}_i$.
\smallskip

\noindent {\sc Case III:} $i$ is a predecessor of $j$.
Since for each $k\neq j$, $\tilde{R}_k=R_k$, $i\tilde{R}_j x_j$, and $\tilde{R}_j$ is obtained from $R_j$ by shifting $i$ up, it follows that at some step in the Quint-Wako algorithm for $\tilde{R}$, agent $j$ will start pointing to agent $i$ and will keep doing so if and as long as agent $i$ is present. Next, consider the predecessor of $j$ on a best path $p^b(i,j)\in P^b(i,j)$ (in $G^{SP}$ for the market $(N,R)$), say agent $l$. Let $k\in N$ with $k P_l j$. By definition of $p^b(i,j)$, $k$ and $j$ are not absorbing set members nor is $k$ a predecessor of $j$. From the same arguments as in Case I, the nodes in $F(k)$ (the followers of $k$, where $k\in F(k)$) form among themselves the same absorbing sets in the Quint-Wako algorithm for both $R$ and $\tilde{R}$. Hence, during the Quint-Wako algorithm for $\tilde{R}$, at some step agent $l$ will start pointing to agent $j$ and will keep doing so if and as long as agent $j$ is present. We can repeat the same arguments until we conclude that each node in the cycle formed by $p^b(i,j)$ and the edge $(j,i)$ will, at some step, start pointing to its follower and will keep doing so if and as long as the follower is present.
Hence, at some step of the algorithm the cycle formed by $p^b(i,j)$ and the edge $(j,i)$ is part of an absorbing set.
Let $i^b$ denote the follower of agent $i$ in path $p^b(i,j)$.
Since $\setSC(\tR)\neq \emptyset$, at each $\tilde{x}\in \setSC(\tR)$, agent $i$ will receive an object $\tilde{x}_i$ such that $\tilde{x}_i \, I_i i^b$. 
Note that in the graph $G^{SP}$, $(i,i^b) \in E^S$ or $(i,i^b) \in E^P$. If $(i,i^b) \in E^S$, then $i^b I_i x_i$, in which case $\tilde{x}_i I_i x_i$. If $(i,i^b) \in E^P$, then by definition of $E^P$, $i^b R_i x_i$, in which case  $\tilde{x}_i R_i x_i$.
\end{proof}
\smallskip

\begin{corollary}\label{corollary:SC}
For each $i\in N$ and each pair of profiles of strict preferences $R,\tilde{R}$ such that $\mathcal{SC}(R),\mathcal{SC}(\tilde{R})\neq \emptyset$ and $\tilde{R}$ is an improvement for $i$ with respect to $R$,
\begin{itemize*}
\item there is $\tilde{x}\in \mathcal{SC}(\tilde{R})$ such that for each $x\in \mathcal{SC}(R)$, $\tilde{x}_i R_i x_i$; and
\item there is $x\in \mathcal{SC}(R)$ such that for each $\tilde{x}\in \mathcal{SC}(\tilde{R})$, $\tilde{x}_i R_i x_i$.
\end{itemize*}
\end{corollary}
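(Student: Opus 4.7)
The plan is to observe that Corollary~\ref{corollary:SC} is essentially a weakening of Theorem~\ref{theorem:weakRI} and therefore follows immediately from it. Indeed, Theorem~\ref{theorem:weakRI} provides the stronger universal-universal conclusion that \emph{every} $\tilde{x}\in \mathcal{SC}(\tilde{R})$ is weakly preferred (by agent $i$ and with respect to $R_i$) to \emph{every} $x\in \mathcal{SC}(R)$, whereas the corollary only asserts the corresponding existential-universal statements.

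First, I would invoke Theorem~\ref{theorem:weakRI} applied to the pair $(R,\tilde{R})$, which yields that for each $\tilde{x}\in \mathcal{SC}(\tilde{R})$ and each $x\in \mathcal{SC}(R)$, $\tilde{x}_i R_i x_i$. Since $\mathcal{SC}(\tilde{R})\neq\emptyset$ by hypothesis, pick any $\tilde{x}\in\mathcal{SC}(\tilde{R})$; this choice witnesses the first bullet, because the conclusion of the theorem then gives $\tilde{x}_i R_i x_i$ for \emph{every} $x\in\mathcal{SC}(R)$. Symmetrically, since $\mathcal{SC}(R)\neq\emptyset$, pick any $x\in\mathcal{SC}(R)$; then the same application of Theorem~\ref{theorem:weakRI} yields $\tilde{x}_i R_i x_i$ for every $\tilde{x}\in\mathcal{SC}(\tilde{R})$, establishing the second bullet.

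There is no genuine obstacle: the substantive work has already been carried out in the proof of Theorem~\ref{theorem:weakRI} (via the Quint--Wako absorbing-set analysis). I note in passing that if one reads the corollary literally for strict preferences, then by Roth and Postlewaite's result the strong core is automatically a non-empty singleton, so the two non-emptiness hypotheses are vacuous and both existential clauses collapse to the unique strong core allocation; in that specialisation the statement also follows from Theorem~\ref{theorem:strictRI}. Either reading leads to a one-line derivation from an already-proved result.
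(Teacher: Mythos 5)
Your derivation is correct and is exactly the intended route: the paper states Corollary~\ref{corollary:SC} without any proof, treating it as an immediate consequence of Theorem~\ref{theorem:weakRI}, whose universal--universal conclusion trivially yields both existential--universal clauses once the two non-emptiness hypotheses supply witnesses. Your side remark is also apt: as literally stated for strict preferences the strong core is automatically a non-empty singleton (Roth and Postlewaite), so the statement collapses to Theorem~\ref{theorem:strictRI}, which suggests the corollary is really meant for weak preferences.
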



\subsection{Bounded length exchanges}

In this subsection we provide several examples to demonstrate the possible violations of the respecting improvement property (or variants/extensions of the property) in the setting of bounded length exchanges.

\subsubsection*{Pairwise exchanges}

As mentioned in Section~\ref{sec:intro}, the maximisation of the number of pairwise exchanges does not respect improvement. Example~\ref{example:pairwise1} below proves this formally. A consequence is that the priority mechanisms studied by Roth et al.\ \cite{RSU2005} need not be  donor-monotonic if agents' preferences can be non-dichotomous.

\begin{example}\label{example:pairwise1} 
{\rm Let $N=\{1,\ldots,4\}$ and let preferences $R$ be given by Table~\ref{pairwise1pref} and the new preferences $\tilde{R}$ by Table~\ref{pairwise1prefnew}, where the only improvement is that agent 1 becomes acceptable for agent 3. 
\begin{figure}[H]
\begin{floatrow}

\capbtabbox{%

\begin{tabular}{c|c|c|c}
1&2&3&4\\
\hline
2& 1 & 3 &2\\ 
3&4&  & 4\\
1&2& & \\
\end{tabular}

}{%
  \caption{$R$}\label{pairwise1pref}%
}

\hspace*{1.5cm}
\capbtabbox{%
\begin{tabular}{c|c|c|c}
1&2&3&4\\
\hline
2& 1 & \bf{1} &2\\ 
3&4& 3 & 4\\
1&2& & \\

\end{tabular}
}{%
  \caption{$\tilde{R}$}\label{pairwise1prefnew}%
}
\hspace*{1.5cm}
\ffigbox{%

  \begin{tikzpicture}[->, >=stealth',scale = 0.2, inner sep=2mm]
\tikzstyle{nd}=[circle,draw,fill=black!10,inner sep=0pt,minimum size=6mm]
 \node[nd] (1) at (0,0)  {$1$};
 \node[nd] (2) at (0,-8) {$2$};
 \node[nd] (3) at (-8,0) {$3$};
 \node[nd] (4) at (-8,-8) {$4$};
 \path[every node/.style={font=\sffamily\footnotesize}]
    (1) edge[bend right=15, very thick] (2) 
    (1) edge[bend right=15, darkgray] (3)

    (2) edge[bend right=15, very thick] (1) 
    (2) edge[bend right=15, darkgray] (4) 
    
    (3) edge[dashed, bend right=15,  very thick] (1) 

    (4) edge[bend right=15, very thick] (2) 
    
    ;
 \end{tikzpicture}

}{%
  \caption{Acceptability graph}\label{pairwise1graph}%
}
\end{floatrow}
\end{figure}
\noindent Initially, at $R$, there are two ways to maximise the number of pairwise exchanges, namely by picking either of the two-cycles $(1,2)$ and $(2,4)$.
Assume, without loss of generality, that $(1,2)$ is selected. (In case $(2,4)$ is selected, similar arguments can be employed.)
Suppose the discontinuous edge (in~Figure~\ref{pairwise1graph}) is included so that agent 1 “improves” and we obtain $\tilde{R}$.
Then, the unique way to maximise the number of pairwise exchanges is obtained by picking the 2 two-cycles (1,3) and (2,4), which means that agent 1 is strictly worse off than in the initial situation.
%
%
%
\hfill $\diamond$
}
\end{example}

\smallskip

We illustrate that the respecting improvement property can be violated in a weak sense 
, namely the best allotment remains the same, but a worse allotment is created for the improving agent. 

\begin{example}\label{example:pairwise2} 
{\rm Let $N=\{1,\ldots,4\}$ and let preferences $R$ be given by Table~\ref{pairwise2pref} and the new preferences $\tilde{R}$ by Table~\ref{pairwise2prefnew}, where the only improvement is that agent 1 becomes acceptable for agent 3. 
\begin{figure}[H]
\begin{floatrow}

\capbtabbox{%

\begin{tabular}{c|c|c|c}
1&2&3&4\\
\hline
2& 4 & 4 & 3\\ 
3& 1 & 3 & 2\\
1& 2 &   & 4 \\
\end{tabular}

}{%
  \caption{$R$}\label{pairwise2pref}%
}

\hspace*{1.5cm}
\capbtabbox{%

\begin{tabular}{c|c|c|c}
1&2&3&4\\
\hline
2& 4 & \bf{1} & 3\\ 
3& 1 & 4 & 2\\
1& 2 & 3 & 4 \\
\end{tabular}

}{%
  \caption{$\tilde{R}$}\label{pairwise2prefnew}%
}

\hspace*{1.5cm}
\ffigbox{%

  \begin{tikzpicture}[->, >=stealth',scale = 0.2, inner sep=2mm]
\tikzstyle{nd}=[circle,draw,fill=black!10,inner sep=0pt,minimum size=6mm]
 \node[nd] (1) at (0,0)  {$1$};
 \node[nd] (2) at (0,-8) {$2$};
 \node[nd] (3) at (-8,0) {$3$};
 \node[nd] (4) at (-8,-8) {$4$};
 \path[every node/.style={font=\sffamily\footnotesize}]
    (1) edge[bend right=15, very thick] (2) 
    (1) edge[bend right=15, darkgray] (3)

    (2) edge[bend right=15, very thick] (4) 
    (2) edge[bend right=15, darkgray] (1) 
    
    (3) edge[dashed, bend right=15,  very thick] (1) 
    (3) edge[bend right=15, darkgray] (4) 

    (4) edge[bend right=15, darkgray] (2) 
    (4) edge[bend right=15, very thick] (3) 
    
    ;
 \end{tikzpicture}

}{%
  \caption{Acceptability graph}\label{pairwise2graph}%
}
\end{floatrow}
\end{figure}
\noindent Initially, at $R$, the unique (strong) core allocation is $x^a=\{(1,2),(3,4)\}$. Suppose the discontinuous edge (in~Figure~\ref{pairwise2graph}) is included so that agent 1 “improves” and we obtain $\tilde{R}$. Then, another (strong) core solution is created, $x^b=\{(1,3),(2,4)\}$, which is strictly worse for 1.
\hfill $\diamond$
}
\end{example}
\smallskip

The following example illustrates violation of best-Ri property for pairwise exchanges with ties for core and Wako-core.
\begin{example}\label{example:pairwise1} 
{\rm Let $N=\{1,\ldots,4\}$ and let preferences $R$ be given by Table~\ref{pairwise_ties_exampleR}. In the new preferences $\tilde{R}$,  Table~\ref{pairwise_ties_examplenewR}, agent 4 makes an improvement and becomes acceptable for agent 1.

\begin{figure}[H]
\begin{floatrow}

\capbtabbox{%

\begin{tabular}{c|c|c|c}
1&2&3&4\\
\hline
3& 4 & 1,4 & 1\\ 
&  &  & 3\\
&  &   & 4 \\
\end{tabular}

}{%
  \caption{$R$}\label{pairwise_ties_exampleR}%
}

\hspace*{1.5cm}
\capbtabbox{%

\begin{tabular}{c|c|c|c}
1&2&3&4\\
\hline
3& 4 & 1,4 & 1\\ 
\bf{4}&  &  & 3\\
&  &   & 4 \\
\end{tabular}

}{%
  \caption{$\tilde{R}$}\label{pairwise_ties_examplenewR}%
}

\hspace*{1.5cm}
\ffigbox{%

  \begin{tikzpicture}[->, >=stealth',scale = 0.2, inner sep=2mm]
\tikzstyle{nd}=[circle,draw,fill=black!10,inner sep=0pt,minimum size=6mm]
 \node[nd] (1) at (0,0)  {$1$};
 \node[nd] (2) at (8,0) {$2$};
 \node[nd] (3) at (0,-8) {$3$};
 \node[nd] (4) at (8,-8) {$4$};
 \path[every node/.style={font=\sffamily\footnotesize}]
    (1) edge[bend right=15, line width = 0.7mm] (3) 
    (1) edge[bend right=15, dashed, line width = 0.4mm] (4)
    
    (2) edge[bend right=15, line width = 0.7mm] (4) 
    
    (3) edge[bend right=15, line width = 0.7mm] (1)
    (3) edge[bend right=15, line width = 0.7mm] (4)
    
    (4) edge[bend right=15, line width = 0.1mm] (2)
    (4) edge[bend right=15, line width = 0.4mm] (3)
    (4) edge[bend right=15, line width = 0.7mm] (1)
    ;
    
\end{tikzpicture}

}{%
  \caption{Acceptability graph}\label{pairwise_ties_graph}%
}
\end{floatrow}
\end{figure}

For original preferences $R$ there exists two (Wako-) core allocations $x^a = \{(3,4)\}$ and $x^b = \{(1,3), (2,4)\}$, i.e.,  $\mathcal{C}(R) = \mathcal{WC}(R) = \{x^a, x^b\}$. The best allotment for agent 4 agent 3, i.e. allocation $x^a$ is to be chosen.

For preferences $\tilde R$,  newly formed cycle $(1,4)$ is blocking for allocation $x^a$, while $(1,3)$ is blocking for allocation $(1,4)$. Then $\mathcal{C}(\tilde R) = \mathcal{WC}(\tilde R) = \{x^b\}$, hence the improvement is not respected.
}
\end{example}

\subsubsection*{Three-way exchanges}

The following example exhibits three housing markets where for each housing market the three cores coincide (and are non-empty).
Subsequently, we will use the example to show that the three cores do not respect improvement when the maximal allowed length of exchange cycles is 3.

For a housing market $(N,R)$, let, with a slight abuse of notation, $\mathcal{SC}(R)$, $\mathcal{WC}(R)$, and $\mathcal{C}(R)$ denote the (possibly empty) strong core, Wako-core, and core of $R$, respectively.

%
%

\begin{example}
\label{example:length3}
{\rm  Throughout the example we focus on the core. However, since all blocking arguments can be replaced by weak blocking arguments, all statements also hold for the strong core, and hence also for the Wako-core. Let $N=\{1,\ldots,10\}$ be the set of agents. We consider three different housing markets that only differ in preferences. First, consider the housing market $(N,R)$, or simply $R$ for short, with the following ``cyclic'' strict preferences (where unacceptable objects are not displayed):
%
\vspace*{-0.1cm}
\begin{center}
preferences $R$\quad\quad 
\begin{tabular}{c|c|c|c|c|c|c|c|c|c}
1&2&3&4&5&6&7&8&9&10\\
\hline
2 & 3 & 4 & 5 & 6 & 7 & 8 & 9 & 10 & 1\\ 
10& 1 & 2 & 3 & 4 & 5 & 6 & 7 & 8  & 9\\
1 & 2 & 3 & 4 & 5 & 6 & 7 & 8 & 9 & 10\\
\end{tabular}
\end{center}
\vspace*{-0.1cm}
Since only directly neighbouring objects (and one's own object) are acceptable, it follows that the only exchange cycles where each agent is assigned an acceptable object are the 10 self-cycles and the 10 two-cycles $(i,i+1)$ (mod 10) where agents $i$ and $i+1$ swap their objects.\footnote{So, the core coincides with the set of stable matchings of the corresponding ``roommate problem'' (Gale and Shapley, 1962).} The core $\mathcal{C}(R)=\{x^a,x^b\}$ consists of the following two allocations:
\begin{eqnarray*}
x^{a} & = & \{(1,2), (3,4), (5,6), (7,8), (9,10)\} \mbox{ and}\\
x^{b} & = &\{(10,1), (2,3), (4,5), (6,7), (8,9)\}.
\end{eqnarray*}

Next, we create an extended housing markets $R^b$ by inserting one three-cycle in $R$. 

Preferences $R^b$ are  provided in Table~\ref{example3waypref_b}, where the changes with respect to $R$ are bold-faced and depicted in Figure~\ref{example3waygraph_b}.

\begin{figure}[H]
\begin{floatrow}

\capbtabbox{%

\begin{tabular}{c|c|c|c|c|c|c|c|c|c}
1&2&3&4&5&6&7&8&9&10\\
\hline
\bf{4} & 3 & 4 & 5 & 6 & 7 & 8 & \bf{1} & 10 & 1\\
2 & 1 & 2 & \bf{8} & 4 & 5 & 6 & 9 & 8  & 9\\
10 & 2 & 3 & 3 & 5 & 6 & 7 & 7 & 9 & 10\\
1 &  &  & 4 &  &  &  & 8 &  & \\
\end{tabular}
}{%
  \caption{Preferences $R^b$}\label{example3waypref_b}%
}

\ffigbox{%

  \definecolor{cadgreen}{rgb}{0.0, 0.42, 0.24}
\begin{tikzpicture}[shorten >=2pt,->, scale=0.65,  >=stealth',transform shape]
    \tikzstyle{vertex}=[circle,fill=black!25,draw, minimum size=17pt,inner sep=0pt]
    
    \foreach \name/\x in {2/36,3/72, 5/144,  6/180, 7/216, 9/288, 10/324}
    {
        \node[vertex] (\name) at (\x:3) {$\name$};
    }
    \foreach \name/\x in {1/0, 4/108,  8/252}
    {
        \node[vertex, fill = green!35] (\name) at (\x:3) {$\name$};
    }
    
    \foreach \i/\j in {2/3,3/4,5/6,6/7,7/8,9/10,10/1} {
             \draw[very thick, bend right=15, line width = 0.8mm] (\i) to (\j);
    }
    \foreach \i/\j in {1/2,2/3,4/5,5/6,6/7,8/9,9/10} {
             \draw[line width = 0.5mm, bend right=15] (\j) to (\i);
    }
    \draw[green!60!black, line width = 0.8mm] (1) to (4);
    \draw[green!60!black, line width=0.5mm,bend right=15] (1) to (2);
    \draw[green!60!black,,bend right=15] (1) to (10);
    \draw[green!60!black, line width=0.8mm,bend right=15] (4) to (5);
    \draw[green!60!black, line width = 0.5mm] (4) to (8);
    \draw[green!60!black, bend right=15] (4) to (3);
    \draw[green!60!black,line width = 0.8mm] (8) to (1);
    \draw[green!60!black, line width = 0.5mm,bend right=15] (8) to (9);
    \draw[green!60!black,bend right=15] (8) to (7);
    
\end{tikzpicture}
}{%
  \caption{Acceptability graph for $R^b$}\label{example3waygraph_b}%
}
\end{floatrow}
\end{figure}

Apart from the earlier mentioned self-cycles and two-cycles, the only additional exchange cycle with only acceptable objects in $R^b$ is $c^{b}=(1,4,8)$.
Allocation $x^{b}$ is in the core of $R^b$ because $c^{b}$ does not block $x^b$: agent $4$ obtains object $8$ in $c^{b}$, which is strictly less preferred to her assigned object at $x^{b}$. In fact, $x^b$ is the unique core allocation of $R^b$. To see this, note first  that $x^a$ is not in the core of $R^b$ as $c^{b}$ blocks it. And second, the only new exchange cycle created in $R^b$, i.e., $c^{b}$, cannot be part of a core allocation, because if it were, then to avoid blocking cycle $(4,5)$, the next two-cycle $(5,6)$ would have to be part of the allocation, in which case $7$ would remain unmatched (i.e., be a self-cycle) and cycle $(6,7)$ would block the allocation. Therefore, $x^b$ is the unique core allocation of $R^b$, i.e., $\mathcal{C}(R^b)=\{x^b\}$.
\hfill $\diamond$
}
\end{example}

Using the above example we can easily prove the following result. For $i\in N$, preferences $R_i$, and a set of allocations $X$, agent $i$'s most preferred allotment in $X$ is her most preferred allotment among those that she receives at the allocations in $X$.

\begin{proposition}\label{proposition:length3}
Suppose the maximum allowed length of exchange cycles is 3. Then, there are 3-housing markets with strict preferences $(N,R)$ and $(N,\tilde{R})$ with
\begin{itemize*}
\item
$X(R)\equiv \mathcal{SC}(R)=\mathcal{WC}(R)=\mathcal{C}(R)\neq \emptyset$ and
\item
$X(\tilde{R})\equiv \mathcal{SC}(\tilde{R})=\mathcal{WC}(\tilde{R})=\mathcal{C}(\tilde{R})\neq \emptyset$
\end{itemize*}
such that for some $i\in N$, $\tilde{R}$ is an improvement for $i$ with respect to $R$ but
\begin{itemize*}
\item $X(\tilde{R}) \subseteq X(R)$,
\item for the unique $x\in X(R) \backslash X(\tilde{R})$ and for each $\tilde{x} \in X(\tilde{R}) \backslash X(R)$,\,\, $x_i\, P_i \tilde{x}_i$. 
\end{itemize*}
\end{proposition}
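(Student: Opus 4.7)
The plan is to construct a specific example by adapting Example~\ref{example:length3}. The transition from $R$ to $R^b$ there changes three agents' preferences simultaneously, so it is not a single-agent improvement. To fit the definition of improvement, I will absorb two of those changes into my initial profile, leaving only one change that constitutes an improvement for agent $1$.

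Concretely, I would take $R$ to be the cyclic profile of Example~\ref{example:length3} modified so that $R_1 = 4, 2, 10, 1$ and $R_4 = 5, 8, 3, 4$ (the lists from $R^b$) while keeping $R_8 = 9, 7, 8$ as in the original. I would let $\tilde{R}$ be the profile $R^b$ of Example~\ref{example:length3}, which differs from $R$ only in $\tilde{R}_8 = 1, 9, 7, 8$. From the definition it is immediate that $\tilde{R}$ is an improvement for agent $1$: $\tilde{R}_1 = R_1$; object $1$ moves from unacceptable to the top of agent $8$'s list; and the relative order of the remaining objects in $\tilde{R}_8$ is preserved.

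Next, I would handle the core computations. The key observation about $R$ is that the extensions $1\to 4$ and $4\to 8$ do not generate any new feasible cycle of length at most $3$: every candidate $2$- or $3$-cycle using them would require a missing return arc among $4\to 1$, $8\to 4$, or $8\to 1$. Hence the feasible cycles of $R$ are exactly the $2$-cycles $(i, i+1)$ of the cyclic structure, and the core-analysis for the original $R$ of Example~\ref{example:length3} yields $\mathcal{C}(R) = \{x^a, x^b\}$. For $\tilde{R} = R^b$, Example~\ref{example:length3} already provides $\mathcal{C}(\tilde{R}) = \{x^b\}$.

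The main obstacle is upgrading these core identities to $\mathcal{SC} = \mathcal{WC} = \mathcal{C}$ in both markets, that is, ruling out weak (and antisymmetric weak) blocks of the core allocations. For $R$ this reduces to the $2$-cycles $(j, j+1)$ outside $x^a$ (resp.\ outside $x^b$): in each such cycle one of the two agents currently enjoys her top choice and would therefore strictly worsen, precluding a weak block. For $\tilde{R}$ the only new candidate is the $3$-cycle $(1, 4, 8)$, where agent $4$ would move from her top object $5$ to her second object $8$, so she strictly worsens and no weak block of $x^b$ results. The unique $x \in X(R) \setminus X(\tilde{R})$ is then $x^a$ and, since $R_1 = \tilde{R}_1$, we have $x^a_1 = 2 \, P_1 \, 10 = x^b_1$; the second bullet of the proposition is vacuously satisfied because $X(\tilde{R}) \subseteq X(R)$.
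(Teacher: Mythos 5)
Your construction coincides with the paper's own proof: there too $\tilde{R}=R^b$ from Example~\ref{example:length3} and $R$ is obtained from it by making object $1$ unacceptable for agent $8$, yielding exactly $X(R)=\{x^a,x^b\}$, $X(\tilde{R})=\{x^b\}$, and the comparison $x^a_1=2\ P_1\ 10=x^b_1$ as you derive them. Your verification is correct and in fact more detailed than the paper's ``one easily verifies''; the only tiny imprecision is that the missing return arcs needed to exclude new cycles of length at most $3$ in $R$ also include $5\to 1$, $3\to 1$, $9\to 4$, and $7\to 4$, not just the three you name, though all of these are indeed absent.
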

\begin{proof}
Let $(N,\tilde{R})$ be the 3-housing market with $N=\{1,\ldots,10\}$ and $\tilde{R}=R^b$ from Example~\ref{example:length3}. 
Let $(N,R)$ be the 3-housing market that is obtained from $(N,\tilde{R})$ by making object 1 unacceptable for agent 8.
Obviously, $\tilde{R}$ is an improvement for agent 1 with respect to $R$.
As shown in Example~\ref{example:length3}, $\mathcal{SC}(\tilde{R})=\mathcal{WC}(\tilde{R})=\mathcal{C}(\tilde{R})=\{x^b\}\neq \emptyset$.
One also easily verifies that $\mathcal{SC}(R)=\mathcal{WC}(R)=\mathcal{C}(R)=\{x^a,x^b\}\neq \emptyset$. Finally, agent 1's most preferred allotment in $\mathcal{SC}(R)=\mathcal{WC}(R)=\mathcal{C}(R)$ is object 2, while agent 1's unique (hence, most preferred) allotment in $\mathcal{SC}(\tilde{R})=\mathcal{WC}(\tilde{R})=\mathcal{C}(\tilde{R})$ is object 10. Since agent 1 strictly prefers object 2 to object 10, the result follows.
\end{proof}

\section{Integer Programming Formulations}\label{sec:IP}

In this section we propose novel integer programming (IP) formulations for the core, the set of competitive allocations (the \emph{Wako-core}), and the strong core. First, we propose  models for the unbounded  case, for the three solution concepts. Second, we propose alternative cycle formulations for the Wako-Quint formulations for core and strong core.
Finally, we propose a new formulation for the competitive allocations. 

\subsection{Novel edge-formulations}\label{sec:novelIP}

Let $(N,R)$ be a housing market and $G\equiv G(N,R)=(N,E)$ its acceptability graph.
Since all three cores only contain individually rational allocations, we can restrict attention to the edges of the acceptability graph.
Specifically, with each edge $(i,j)\in E$ we associate a variable $y_{ij}$ as follows: 
\[  y_{ij} = \left \{ \begin{array}{ll}
                1 & \mbox{if agent $i$ receives object $j$;}\\
                0 & \mbox{otherwise.}
                           \end{array}
\right. \]
%
Then, the base model reads as follows:
%
\begin{align}
& 			     & &  \sum_{j:(i,j)\in E} y_{ij} 	  						 =   1 				& &\forall i\in N \label{eq:EdgeMostOne} \\
&  & &  \sum_{j:(j,i)\in E} y_{ji}  						 = 1	& & \forall i\in N  \label{eq:EdgeInOut} \\
&			     & &  y_{ij} \in \{0,1\}						\label{eq:Edge01}								& & \forall (i,j)\in E
\end{align}
Constraints~\eqref{eq:EdgeMostOne} guarantee that agent $i$ receives exactly one (acceptable) object (possibly her own). Constraints~\eqref{eq:EdgeInOut} guarantee that object $i$ is given to exactly one agent.
Each vector ${(y_{ij})}_{(i,j)\in E}$ that satisfies \eqref{eq:EdgeMostOne}, \eqref{eq:EdgeInOut}, and \eqref{eq:Edge01} yields an allocation $x$ defined by $x_i=j$ if and only if $y_{ij}=1$. Moreover, each allocation can be obtained in this way. So, there is a one-to-one correspondence between allocations and vectors that satisfy \eqref{eq:EdgeMostOne}, \eqref{eq:EdgeInOut}, and \eqref{eq:Edge01}.

We introduce for each $i\in N$ an additional integer variable $p_i$ that represents the price of object $i$.
%
\begin{align}
&			     & &  p_i\in\{1,\dots, n\}  						\label{eq:price}								& & \forall i\in N
\end{align}

In what follows we give our IP formulations for the general case of weak preferences and explain how they can be simplified for strict preferences. We tackle the core, the set of competitive allocations, and the strong core (in this order), by subsequently adding constraints. Given an allocation $x$, we say that $x$ \emph{dominates} an edge $(i,j)$ in the acceptability graph $G$ if agent $i$ weakly prefers her allotment $x_i$ to object $j$, i.e., $x_i\, R_i\, j$.  

\subsubsection*{IP for the core}

It follows from Lemma~\ref{lemma:cycles} that an allocation $x$ is in the core if and only if each cycle in $G$ contains an edge that is dominated by $x$. Or equivalently, there exists no cycle in $G$ that consists of undominated edges. Note that the undominated edges form a cycle-free subgraph if and only if there is a topological order of the objects. The existence of this topological order is equivalent to the existence of prices of the objects such that for each undominated edge $(i,j)$, $p_i<p_j$. Therefore, an allocation $x$ is in the core if and only if there exist prices ${(p_i)}_{i\in N}$ such that 
\begin{equation}\label{domprices}
(i,j)\in E \mbox{ is not dominated by } x \,\, \Longrightarrow \,\, p_i< p_j.\tag{*}
\end{equation}
Thus, core allocations are characterised by constraints \eqref{eq:EdgeMostOne}--\eqref{eq:price} together with \eqref{eq:unbounded_stable} below:

\begin{align}
&			     & &  p_i+1 \leq p_j + n\cdot \sum_{k: kR_ij} y_{ik}  						\label{eq:unbounded_stable}								& & \forall (i,j)\in E
\end{align}

\begin{proposition}\label{proposition:core}
Let $x$ be an allocation. Let $y$ be the corresponding vector that satisfies \eqref{eq:EdgeMostOne}, \eqref{eq:EdgeInOut}, and \eqref{eq:Edge01}. Allocation $x$ is in the core if and only if there are prices ${(p_i)}_{i\in N}$ such that \eqref{eq:price} and \eqref{eq:unbounded_stable} hold.
\end{proposition}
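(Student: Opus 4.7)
The plan is to reduce the proposition to Lemma~\ref{lemma:cycles}, which characterises the core as the set of (individually rational) allocations that are not blocked by any directed cycle of $G$. A cycle $(i_1,\dots,i_k,i_1)$ blocks $x$ precisely when every edge $(i_l,i_{l+1})$ is \emph{undominated} by $x$, i.e., $i_{l+1}P_{i_l}x_{i_l}$. So the core condition is exactly that the subgraph $H\subseteq E$ of undominated edges contains no directed cycle. The constraints \eqref{eq:price}--\eqref{eq:unbounded_stable} will be shown to encode the existence of an integer topological potential on $H$, with the vector $(p_i)_{i\in N}$ playing that role.

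The first concrete step is to rewrite the big-$M$ constraint \eqref{eq:unbounded_stable} on a fixed edge $(i,j)\in E$. Since by \eqref{eq:EdgeMostOne} exactly one $y_{ik}$ is $1$, namely $k=x_i$, the sum $\sum_{k:kR_ij} y_{ik}$ is the indicator of the event $x_iR_ij$, i.e., of the edge $(i,j)$ being dominated. When $(i,j)$ is dominated the constraint reads $p_i+1\le p_j+n$, which is vacuous on the range $\{1,\dots,n\}$; when $(i,j)$ is undominated it reads $p_i+1\le p_j$, i.e., $p_i<p_j$. Hence \eqref{eq:unbounded_stable} is equivalent to ``$p_i<p_j$ for every undominated edge $(i,j)\in E$''.

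Given this equivalence, both directions are short. For the ``only if'' direction, if $x$ is in the core then $H$ is acyclic by Lemma~\ref{lemma:cycles}; any topological ordering of $N$ consistent with $H$ provides integer prices in $\{1,\dots,n\}$ satisfying \eqref{eq:unbounded_stable}. For the ``if'' direction, assuming such prices exist, a blocking cycle $(i_1,\dots,i_k,i_1)$ would yield the impossible chain $p_{i_1}<p_{i_2}<\dots<p_{i_k}<p_{i_1}$, so Lemma~\ref{lemma:cycles} again gives $x\in\mathcal{C}$. No step is a real obstacle; the only subtlety worth flagging is that the coefficient $n$ in \eqref{eq:unbounded_stable} is exactly large enough (given prices in $\{1,\dots,n\}$) to make the constraint vacuous on dominated edges, and that applying the constraint to the self-loop $(i,i)$ forces $x_iR_ii$, so individual rationality is implicit and need not be added as a separate requirement.
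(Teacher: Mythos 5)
Your proof is correct and follows essentially the same route as the paper: the key observation that $\sum_{k:kR_ij}y_{ik}$ is the indicator of $(i,j)$ being dominated, the big-$M$ analysis of \eqref{eq:unbounded_stable}, and the acyclicity/topological-order argument via Lemma~\ref{lemma:cycles} all appear in the paper, the only difference being that the paper places the ``core iff prices increase along undominated edges'' equivalence in the text preceding the proposition and reserves the proof environment for the translation to constraint \eqref{eq:unbounded_stable}.
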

\begin{proof}
First observe that for each $(i,j)\in E$,
\begin{align}
\mbox{ $(i,j)$ is dominated by $x$ } &  \Longleftrightarrow \,\,  x_i\, R_i \, j \nonumber \\ 
& \Longleftrightarrow  \,\, \mbox{there is } k\in N \mbox{ with } k\, R_i \, j \mbox{ and } y_{ik}=1 \nonumber \\
& \Longleftrightarrow  \,\, \sum_{k: kR_ij} y_{ik}=1.\tag{**} \label{keyobs}
\end{align}

Suppose $x$ is in the core. Then, there exist prices ${(p_i)}_{i\in N}$ that satisfy \eqref{eq:price} and \eqref{domprices}. We verify that \eqref{eq:unbounded_stable} holds. Let $(i,j)\in E$. If $(i,j)$ is not dominated by $x$, then \eqref{eq:unbounded_stable} follows immediately from \eqref{domprices}. Suppose $(i,j)$ is dominated by $x$. From \eqref{keyobs}, $\sum_{k: kR_ij} y_{ik}=1$. Hence, 
$$
  p_i+1 \leq n + 1 \leq p_j + n = p_j + n\cdot \sum_{k: kR_ij} y_{ik}.
$$

Suppose that there exist prices ${(p_i)}_{i\in N}$ such that \eqref{eq:price} and \eqref{eq:unbounded_stable} hold. We verify that \eqref{domprices} holds. Let $(i,j)\in E$ and suppose it is not dominated by $x$. From \eqref{keyobs}, $\sum_{k: kR_ij} y_{ik}=0$. Hence, from \eqref{eq:unbounded_stable}, $p_i+1\leq p_j + n\cdot 0$, i.e., $p_i<p_j$.
\end{proof}

\subsubsection*{IP for the set of competitive allocations, i.e., the Wako-core}

The set of competitive allocations is characterised by constraints \eqref{eq:EdgeMostOne}--\eqref{eq:unbounded_stable} together with \eqref{eq:unbounded_ce} below:
\begin{align}
&			     & &  p_i\leq p_j +n \cdot (1- y_{ij})  						\label{eq:unbounded_ce}								& & \forall (i,j)\in E
\end{align}


\begin{proposition}\label{proposition:comp}
Let $x$ be an allocation. Let $y$ be the corresponding vector that satisfies \eqref{eq:EdgeMostOne}, \eqref{eq:EdgeInOut}, and \eqref{eq:Edge01}. Allocation $x$ is competitive if and only if there exist prices ${(p_i)}_{i\in N}$ such that \eqref{eq:price}, \eqref{eq:unbounded_stable}, and \eqref{eq:unbounded_ce} hold. Moreover, if such prices exist, then together with $x$ they constitute a competitive equilibrium.
\end{proposition}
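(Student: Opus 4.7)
The plan is to mirror the structure of the proof of Proposition~\ref{proposition:core} and repeatedly exploit the key dichotomy \eqref{keyobs}: for each edge $(i,j)\in E$, the sum $\sum_{k:kR_ij}y_{ik}$ equals $1$ when $x_i R_i j$ and $0$ when $j P_i x_i$. Together with the fact that prices live in $\{1,\ldots,n\}$, this dichotomy is what translates the two numerical inequalities into the preference-theoretic conditions of a competitive equilibrium.

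For the forward direction I would start from a competitive equilibrium $(x,p)$, normalised so that $p\in\{1,\ldots,n\}^N$, and verify the two new constraints separately. To obtain \eqref{eq:unbounded_ce}, I would invoke Remark~\ref{remark:prices} to deduce $p_{x_i}=p_i$ for every $i$; hence if $y_{ij}=1$ (so $j=x_i$) the inequality reduces to $p_i\leq p_j$, with equality; if $y_{ij}=0$ it is trivial because $p_i\leq n<p_j+n$. To obtain \eqref{eq:unbounded_stable}, I would split on the value of the sum: whenever $x_i R_i j$ the sum is $1$ and the inequality $p_i+1\leq p_j+n$ is automatic; whenever $j P_i x_i$ the sum is $0$, and here the definitional condition (ii) of a competitive equilibrium, combined with the integrality of prices, gives exactly $p_i+1\leq p_j$.

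For the reverse direction I would assume prices $p$ satisfying \eqref{eq:price}, \eqref{eq:unbounded_stable}, \eqref{eq:unbounded_ce} exist and directly check the two defining conditions of a competitive equilibrium. Condition (ii) is the easy half: if $j P_i x_i$ then by \eqref{keyobs} the sum in \eqref{eq:unbounded_stable} vanishes, so $p_i+1\leq p_j$, i.e.\ $p_j>p_i$. Condition (i) is the delicate half and is also the main obstacle, because \eqref{eq:unbounded_ce} evaluated at the edge $(i,x_i)$ gives $p_i\leq p_{x_i}$, which is the \emph{reverse} of the inequality required for affordability. The resolution is to apply Remark~\ref{remark:prices}: since $p_i\leq p_{x_i}$ holds for every $i\in N$, summing around each exchange cycle of $x$ forces $p_i=p_{x_i}$, and so $x_i$ is affordable with equality.

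Finally, the ``moreover'' clause is essentially a free byproduct: the very argument used in the reverse direction exhibits the pair $(x,p)$ as a competitive equilibrium, so no additional work is needed. No new machinery beyond the observation \eqref{keyobs}, Remark~\ref{remark:prices}, and the integrality of prices is required.
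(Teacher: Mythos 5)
Your proposal is correct and follows essentially the same route as the paper's proof: both directions rest on the dichotomy \eqref{keyobs}, Remark~\ref{remark:prices} applied to the inequalities $p_i\leq p_{x_i}$ around each exchange cycle, and the integrality of prices, with the ``moreover'' clause coming for free from the reverse direction. The only cosmetic difference is that the paper delegates the verification of \eqref{eq:unbounded_stable} and of condition (ii) to the two halves of the proof of Proposition~\ref{proposition:core}, whereas you re-derive those steps inline.
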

\begin{proof}
    Suppose $x$ is competitive. Let ${(p_i)}_{i\in N}$ be prices such that $(x,p)$ is a competitive equilibrium. Then, \eqref{eq:price} and \eqref{domprices} hold. From the first part of the proof of Proposition~\ref{proposition:core} it follows that \eqref{eq:unbounded_stable} holds. We now prove that \eqref{eq:unbounded_ce} holds as well.
Let $(i,j)\in E$. If $y_{ij}=0$, then immediately $p_i\leq p_j + n = p_j + n \cdot (1-y_{ij})$. If $y_{ij}=1$, then $x_i=j$, and since $(x,p)$ is a competitive equilibrium it follows from Remark~\ref{remark:prices} that $p_i= p_{x_i} =p_j$.

Suppose that there exist prices ${(p_i)}_{i\in N}$ such that \eqref{eq:price}, \eqref{eq:unbounded_stable}, and \eqref{eq:unbounded_ce} hold. We verify that $(x,p)$ is a competitive equilibrium. First, it follows from \eqref{eq:unbounded_ce} that for each $i\in N$, taking $j=x_i$ yields $p_i\leq p_{x_i} + n \cdot (1-1)=p_{x_i}$, i.e., $p_i\leq p_{x_i}$. Hence, from Remark~\ref{remark:prices}, for each $i\in N$,
$p_i= p_{x_i}$.
Second, let $j\in N$ be an object such that $j \, P_i \, x_i$. Then, $(i,j)\in E$ is not dominated by $x$.
From the second part of the proof of Proposition~\ref{proposition:core} it follows that  \eqref{domprices} holds. Hence, we obtain $p_i < p_j$.
\end{proof}

\subsubsection*{IP for the strong core}

The strong core is characterised by constraints \eqref{eq:EdgeMostOne}--\eqref{eq:unbounded_ce} together with \eqref{eq:unbounded_stronglystable} below: 

\begin{align}
&			     & &  p_i\leq p_j + n\cdot \left(\sum_{k: kP_i j} y_{ik}\right)  						\label{eq:unbounded_stronglystable}								& & \forall (i,j)\in E
\end{align}


\begin{proposition}\label{proposition:strongcore}
Let $x$ be an allocation. Let $y$ be the corresponding vector that satisfies \eqref{eq:EdgeMostOne}, \eqref{eq:EdgeInOut}, and \eqref{eq:Edge01}. Allocation $x$ is in the strong core if and only if there exist prices ${(p_i)}_{i\in N}$ such that \eqref{eq:price}, \eqref{eq:unbounded_stable}, \eqref{eq:unbounded_ce}, and \eqref{eq:unbounded_stronglystable} hold. Moreover, if such prices exist, then together with $x$ they constitute a competitive equilibrium.
\end{proposition}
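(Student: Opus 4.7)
The plan is to argue the two implications separately, leveraging the inclusion of the strong core in the Wako-core and the Quint-Wako algorithm from Section~\ref{sec:defs}. The ``if'' direction parallels the contrapositive-style arguments in Propositions~\ref{proposition:core} and~\ref{proposition:comp}, while the ``only if'' direction requires an explicit construction of prices from the algorithm.

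For the ``if'' direction I would proceed by contradiction using Lemma~\ref{lemma:cycles}: assume prices exist satisfying \eqref{eq:price}--\eqref{eq:unbounded_stronglystable} and that some cycle $C=(i_1,\ldots,i_k)$ weakly blocks $x$. Around the cycle I distinguish two cases. When $i_{l+1}\,P_{i_l}\,x_{i_l}$, observe that $x_{i_l}\notin\{k:k\,R_{i_l}\,i_{l+1}\}$, so $\sum_{k:kR_{i_l}i_{l+1}}y_{i_lk}=0$, and \eqref{eq:unbounded_stable} yields $p_{i_l}<p_{i_{l+1}}$. When $i_{l+1}\,I_{i_l}\,x_{i_l}$, $x_{i_l}$ is not strictly preferred to $i_{l+1}$, so $\sum_{k:kP_{i_l}i_{l+1}}y_{i_lk}=0$, and \eqref{eq:unbounded_stronglystable} yields $p_{i_l}\leq p_{i_{l+1}}$. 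Chaining these inequalities around $C$ gives $p_{i_1}\leq p_{i_2}\leq\cdots\leq p_{i_k}\leq p_{i_1}$ with at least one strict inequality (at the strictly-preferring agent), a contradiction.

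For the ``only if'' direction, let $x\in\setSC(R)$. By Remark~\ref{remark:strongcore} one can execute the Quint-Wako algorithm so that the cycle cover picked in each absorbing set $S_t$ coincides with the cycles of $x$ restricted to $S_t$. Let $r(i)$ denote the round at which agent $i$ is removed and define $p_i := T - r(i) + 1$, where $T$ is the total number of rounds; since $T\leq n$, constraint \eqref{eq:price} holds. The key structural fact I would invoke is that an absorbing set has no outgoing top edges, so if $(i,j)\in E$ is a top edge of $i$ at round $r(i)$, then $j\in S_{r(i)}$ and hence $r(j)=r(i)$. Combined with $x_i$ being a top choice of $i$ among the objects remaining at round $r(i)$, this gives: for any object $j$ with $j\,R_i\,x_i$, either $j$ was removed strictly before round $r(i)$ (so $p_j>p_i$) or $j\in S_{r(i)}$ (so $p_j=p_i$); and for any $j$ with $j\,P_i\,x_i$ only the former is possible.

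Using this, I would verify the three remaining constraints. For \eqref{eq:unbounded_stable}, if $\sum_{k:kR_ij}y_{ik}=0$ then $j\,P_i\,x_i$, so $r(j)<r(i)$ and $p_i+1\leq p_j$; otherwise the right-hand side is at least $p_j+n\geq 1+n>n\geq p_i$. For \eqref{eq:unbounded_ce}, if $y_{ij}=1$ then $i,j$ are in the same absorbing set, so $p_i=p_j$; otherwise the right-hand side is at least $1+n>p_i$. For \eqref{eq:unbounded_stronglystable}, if $\sum_{k:kP_ij}y_{ik}=0$ then $j\,R_i\,x_i$, so $r(j)\leq r(i)$ and $p_j\geq p_i$; otherwise the right-hand side is again at least $1+n>p_i$. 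The ``Moreover'' clause follows immediately from Proposition~\ref{proposition:comp}, since constraints \eqref{eq:price}, \eqref{eq:unbounded_stable}, and \eqref{eq:unbounded_ce} hold. The main obstacle is the verification of \eqref{eq:unbounded_stronglystable}, which is precisely where the structural property of absorbing sets (outgoing-top-edge-free) distinguishes strong core allocations from merely competitive ones.
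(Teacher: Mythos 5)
Your proposal is correct and follows essentially the same route as the paper: the ``only if'' direction constructs prices from the rounds of the Quint--Wako algorithm (constant on each absorbing set, strictly decreasing with later rounds), and the ``if'' direction uses Lemma~\ref{lemma:cycles} to reduce to a weakly blocking cycle and chains the price inequalities from \eqref{eq:unbounded_stable} and \eqref{eq:unbounded_stronglystable} around it to reach a contradiction. The only cosmetic difference is that you verify all four constraint families directly from the price construction, where the paper shortcuts \eqref{eq:unbounded_stable} and \eqref{eq:unbounded_ce} by first observing that $(x,p)$ is a competitive equilibrium and invoking Proposition~\ref{proposition:comp}.
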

\begin{proof}
Suppose $x$ is in the strong core. By Remark~\ref{remark:strongcore}, $x$ can be obtained in the Quint-Wako algorithm by choosing for each absorbing set in the algorithm a particular cycle cover.
Hence, there exist price ${(p_i)}_{i\in N}$ such that (i) constraints~\eqref{eq:price} are satisfied, (ii) all objects in the same absorbing set have the same price, and (iii) an absorbing set that is processed earlier by the algorithm has a strictly higher associated price (of its objects).
It is easy to verify that $(x,p)$ is a competitive allocation. Hence, from the first part of the proof of Proposition~\ref{proposition:comp} it follows that  \eqref{eq:unbounded_stable} and \eqref{eq:unbounded_ce} hold. Finally, to see that \eqref{eq:unbounded_stronglystable} holds
note that from the definition of the prices it follows that 
(i) if $j R_i x_i$ then $p_i\leq p_j$ and (ii) if $x_i P_i j$ then $p_i \leq n = n (\sum_{k: kP_i j} y_{ik})$.

Suppose that there exist prices ${(p_i)}_{i\in N}$ such that \eqref{eq:price}, \eqref{eq:unbounded_stable}, \eqref{eq:unbounded_ce}, and \eqref{eq:unbounded_stronglystable} hold. It follows from Proposition~\ref{proposition:comp} that $(x,p)$ is a competitive equilibrium. We prove that $x$ is a strong core allocation. Suppose there is a coalition $S$ that weakly blocks $x$ through an allocation $z$. From Lemma~\ref{lemma:cycles} it follows that we can assume, without loss of generality, that $S=\{1,\ldots,r\}$ and that for each $i=1,\ldots,r-1$, $z_i=i+1$, $z_r=1$, and $z_1 P_1 x_1$. Since $x$ is individually rational, $r>1$. Since $(x,p)$ is a competitive equilibrium, $p_1<p_2$. 
Since $3 = z_2 \, R_2 \, x_2$, we have $\sum_{k: kP_2 3} y_{2k}=0$. Hence, from \eqref{eq:unbounded_stronglystable},
$$p_2\leq p_3 + n\cdot \left(\sum_{k: kP_2 3} y_{2k}\right)= p_3.$$
So, $p_2\leq p_3$. By repeatedly applying the same arguments we find $p_2\leq p_3 \leq \cdots \leq p_r\leq p_1$. Since $p_1<p_2$, we obtain a contradiction. Therefore, there is no coalition that weakly blocks $x$. Hence, $x$ is a strong core allocation.
\end{proof}
\smallskip

\begin{remark}
{\rm
We note that in the case of strict preferences, constraints  \eqref{eq:unbounded_stronglystable} are satisfied by any competitive equilibrium $(x,p)$.
To see this note that if $y_{ij}=1$ then \eqref{eq:unbounded_ce} implies \eqref{eq:unbounded_stronglystable}, since $1-y_{ij}=0$, and hence $$p_i\leq p_j + n\cdot (1-y_{ij})=p_j\leq p_j + n\cdot \left(\sum_{k: kP_i j} y_{ik}\right).$$  Otherwise, if $y_{ij}=0$ then \eqref{eq:unbounded_stable} implies \eqref{eq:unbounded_stronglystable}, since for strict preferences $\sum_{k: kP_i j} y_{ik}=\sum_{k: kP_i j} y_{ik}+y_{ij}=\sum_{k: kR_i j} y_{ik}$, and hence $$p_i<p_i+1\leq p_j + n\cdot \left(\sum_{k: kR_i j} y_{ik}\right)= p_j + n\cdot \left(\sum_{k: kP_i j} y_{ik}\right).$$ Therefore, in either case, constraints \eqref{eq:unbounded_stronglystable} are satisfied. This reflects the fact that for strict preferences the strong core is a singleton that consists of the unique competitive allocation.
\hfill $\diamond$
}
\end{remark}

\subsection{Quint and Wako's IP formulations}

To compare our IP formulations with the IP formulations for the core and the strong core given by Quint and Wako \cite{QW2004}, we describe the latter IP formulations using our notation. 

First, for both the core and the strong core, Quint and Wako \cite{QW2004} used the ``basic'' constraints \eqref{eq:EdgeMostOne}, \eqref{eq:EdgeInOut}, and \eqref{eq:Edge01}. We refer to their formulas (9.2), (9.3), (9.4), as well as (8.2), (8.3), (8.4), together with an integrality condition.

Next, to obtain the core Quint and Wako \cite{QW2004} imposed the following additional no-blocking condition (see (9.1) in \cite{QW2004}):
\begin{align}
&			     & &  \sum_{i\in S} \left (\sum_{j: jR_i \pi_i} y_{ij} \right )\geq 1  						\label{eq:QW2004_weak}								& & \forall S\subseteq N, \pi\in \Pi_S
\end{align}

Finally, to obtain the strong core Quint and Wako \cite{QW2004} imposed the following additional no-blocking condition (see (8.1) in \cite{QW2004}):
\begin{align}
&			     & &  \sum_{i\in S} \left (\sum_{j: jP_i \pi_i} y_{ij} + \frac{1}{|S|}\sum_{j: jI_i \pi_i} y_{ij} \right )\geq 1  						\label{eq:QW2004_strong}								& & \forall S\subseteq N, \pi\in \Pi_S,
\end{align}
where $\Pi_S$ is the set of allocations in the submarket $M_S$ (so that $\pi$ is an allocation in $M_S$). 

Constraints (\ref{eq:QW2004_weak}) and (\ref{eq:QW2004_strong}) directly describe that no coalition $S$ can block / weakly block through an allocation $\pi$, respectively. Both sets of constraints are highly exponential (in the number of agents), since they are required not only for all subsets $S$ of $N$, but also for all possible redistributions within each $S$.

%
%

\smallskip

\noindent {\bf{Alternative cycle-formulations}}

\noindent
In view of Lemma~\ref{lemma:cycles}, it is sufficient to impose constraints (\ref{eq:QW2004_weak}) and (\ref{eq:QW2004_strong}) for the cycles of the acceptability graph $G$. Based on this observation and results in \cite{Klimentova2020stable}, we will
describe alternative cycle-formulations for the core and the strong core.
Furthermore, we will provide a new proposition and IP formulation for the Wako-core.





Let $M=(N,R)$ be a housing market. Let $\mathcal{C}$ denote the set of exchange cycles in $G(N,R)$. For a cycle $c\in \mathcal{C}$, let $N(c)$ and $A(c)$ denote the set of nodes and edges in $c$, respectively, and let $|c|$ denote the size/length of $c$. We write $c_i=j$ if agent $i$ receives object $j$ in the exchange cycle $c$, i.e., $(i,j)\in A(c)$.

\begin{proposition}[\cite{Klimentova2020stable}]
An allocation $x$ is in the core if and only if for each cycle $c\in\mathcal{C}$, for some agent $i\in N(c)$, $x_iR_ic_i$.\label{X1}
\end{proposition}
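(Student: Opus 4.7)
The plan is to apply Lemma~\ref{lemma:cycles}, which tells us that core membership is characterised by the absence of blocking \emph{cycles} (rather than by the absence of arbitrary blocking coalitions), and then to observe that the condition stated in the proposition is just a verbatim restatement of ``no cycle in $\mathcal{C}$ is a blocking cycle for $x$''. So the whole proof reduces to unpacking two definitions and invoking one lemma.

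First I would unpack what it means for a cycle $c\in\mathcal{C}$ to block $x$: by the cycle-version of the blocking definition used in Lemma~\ref{lemma:cycles}, $c$ blocks $x$ precisely when every agent $i\in N(c)$ satisfies $c_i\, P_i \, x_i$. Taking the logical negation and using completeness of each $R_i$, ``$c$ does not block $x$'' is equivalent to ``there exists $i\in N(c)$ with $x_i\, R_i \, c_i$''. Hence the condition appearing in the proposition is logically equivalent to ``no $c\in\mathcal{C}$ blocks $x$.''

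From this reformulation the $(\Rightarrow)$ direction is immediate: if $x$ is in the core, then by Lemma~\ref{lemma:cycles} no cycle in $\mathcal{C}$ blocks $x$, so for every $c\in\mathcal{C}$ some $i\in N(c)$ has $x_i\, R_i\, c_i$. For the $(\Leftarrow)$ direction I would first derive individual rationality of $x$ from the hypothesis by applying it to each self-cycle $c=(i)$, which belongs to $\mathcal{C}$ since all self-loops $(i,i)$ lie in $G$: here $N(c)=\{i\}$ and $c_i=i$, so the hypothesis forces $x_i\, R_i\, i$ for every $i\in N$. Once individual rationality is in hand, Lemma~\ref{lemma:cycles} applies, and the equivalence of ``no blocking cycle'' and ``no blocking coalition'' gives that $x$ is in the core.

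The only potential pitfall, and what I would flag as the main thing to be careful about, is that Lemma~\ref{lemma:cycles} presupposes that $x$ is individually rational, so in the $(\Leftarrow)$ direction one must verify individual rationality before invoking it; the self-loop argument above resolves this cleanly. No other obstacle arises: the proposition is essentially a compact repackaging of Lemma~\ref{lemma:cycles} for the core, tailored to the cycle-based IP encoding that will be used in the rest of Section~\ref{sec:IP}.
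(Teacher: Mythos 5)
Your proposal is correct and matches the paper's intended derivation: the paper states Proposition~\ref{X1} as a consequence of Lemma~\ref{lemma:cycles} (citing \cite{Klimentova2020stable} rather than reproving it), and your argument — negating the cycle-blocking condition via completeness of $R_i$ and recovering individual rationality from the self-cycles before invoking the lemma in the $(\Leftarrow)$ direction — is exactly the right way to fill in that reduction. The care you take with individual rationality is the one genuine subtlety, and you handle it correctly.
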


\noindent The corresponding IP constraints, which reduce the constraints \eqref{eq:QW2004_weak} to cycles, are as follows:
\begin{align}
\sum_{(i,j)\in A(c)}\sum_{k:k R_i j} y_{ik} \geq 1 && \forall c\in\mathcal{C}\label{eq:stab_edge}
\end{align}

Next, we describe the alternative cycle-formulation for the strong core. First we focus on the special case of strict preferences.

\begin{proposition}[\cite{Klimentova2020stable}]
Suppose preferences are strict. Then, an allocation $x$ is in the strong core if and only if for each cycle $c\in\mathcal{C}$, $c$ is an exchange cycle in $x$ or for some agent $i\in N(c)$, $x_iP_ic_i$.\label{X2}
\end{proposition}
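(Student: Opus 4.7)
My plan is to reduce both directions to Lemma~\ref{lemma:cycles}, which characterises membership in the strong core by the absence of weakly blocking cycles in the acceptability graph $G$, i.e., the absence of weakly blocking elements of $\mathcal{C}$.

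For the forward direction I would argue by contrapositive. Suppose some $c\in\mathcal{C}$ fails both alternatives, that is, $c$ is not an exchange cycle of $x$ and no $i\in N(c)$ satisfies $x_iP_ic_i$. Since $c$ does not coincide with any $x$-exchange cycle, there is at least one $i_0\in N(c)$ with $c_{i_0}\neq x_{i_0}$. This is where strictness is the decisive ingredient: as $c\in\mathcal{C}$ all its edges are acceptable, and by Lemma~\ref{lemma:cycles} each $x_i$ is also acceptable, so on every $i\in N(c)$ we may invoke the trichotomy $c_iP_ix_i$, $x_iP_ic_i$, or $c_i=x_i$ to upgrade ``not $x_iP_ic_i$'' into $c_iR_ix_i$; at $i_0$ this sharpens to $c_{i_0}P_{i_0}x_{i_0}$. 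Hence $N(c)$ weakly blocks $x$ through the permutation encoded by $c$, contradicting $x\in\setSC(R)$.

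For the converse I would let $c\in\mathcal{C}$ be arbitrary and show it cannot weakly block $x$. If $c$ already equals one of the exchange cycles of $x$, then $c_i=x_i$ for every $i\in N(c)$, so no member of $N(c)$ strictly improves and the ``some agent strictly improves'' requirement for weak blocking fails. Otherwise, by hypothesis there is $i\in N(c)$ with $x_iP_ic_i$, so that agent would strictly lose, and the ``all agents weakly improve'' requirement fails. Either way $c$ is not a weakly blocking cycle, and Lemma~\ref{lemma:cycles} yields $x\in\setSC(R)$. Individual rationality of $x$, if one wishes to verify it independently of Lemma~\ref{lemma:cycles}, falls out from applying the hypothesis to the self-loop $c=(i)$, which lies in $\mathcal{C}$ since every $(i,i)\in E$ by construction.

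The only delicate point is the upgrading step in the forward direction: passing from ``no agent strictly prefers $x_i$ to $c_i$'' to ``every agent weakly prefers $c_i$ to $x_i$'' relies precisely on the absence of ties among acceptable objects. This is where the strict-preferences hypothesis is essential and, correspondingly, where the clean characterisation fails once ties are admitted, motivating the separate treatment of weak preferences in the subsequent development.
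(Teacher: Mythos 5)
Your proof is correct. Note that the paper itself states Proposition~\ref{X2} without proof, importing it from the cited reference, so there is no in-paper argument to diverge from; your route — reducing both directions to the blocking-cycle characterisation of Lemma~\ref{lemma:cycles}, using strictness to upgrade ``not $x_iP_ic_i$'' to $c_iR_ix_i$ (strict at some $i_0$ since $c$ is not an $x$-cycle), and recovering individual rationality from the self-loops in $\mathcal{C}$ — is exactly the natural argument and is consistent with how the paper handles the analogous cycle reductions.
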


\noindent Proposition~\ref{X2} leads to the following constraints:
\begin{align}
\sum_{(i,j)\in A(c)} y_{ij} + |c|\cdot\left[ \sum_{(i,j)\in A(c)}\,\, \sum_{k: kP_ij} y_{ik}\right]   \geq |c| && \forall c\in\mathcal{C}
\end{align}

The alternative cycle-formulation for the strong core in the general case (where  preferences can have ties) is as follows.

\begin{proposition}[\cite{Klimentova2020stable}]
An allocation $x$ is in the strong core if and only if for each cycle $c\in\mathcal{C}$,\\ 
(i) $c$ is an exchange cycle in $x$, or\\
(ii) for some agent $i\in N(c)$, $x_iP_ic_i$, or\\
(iii) for each agent $i\in N(c)$, $c_iI_ix_i$.\label{X3} 
\end{proposition}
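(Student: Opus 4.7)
The plan is to prove both directions by combining the cycle-based characterization of the strong core from Lemma~\ref{lemma:cycles} with a careful case analysis of the three conditions. The key insight is that (i), (ii), and (iii) are jointly equivalent to the assertion that cycle $c$ is not a weakly blocking cycle for $x$. Hence the proposition is essentially a reformulation of Lemma~\ref{lemma:cycles} in easily checkable, cycle-by-cycle terms. Note in particular that (i) is logically subsumed by (iii) (if $c$ is an exchange cycle in $x$ then $c_i=x_i$ for every $i\in N(c)$ and so $c_i\,I_i\,x_i$ holds trivially); including (i) separately just makes the statement easier to apply in practice.

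For necessity, I would assume $x\in \mathcal{SC}(R)$ and fix an arbitrary cycle $c\in\mathcal{C}$. Arguing by contradiction, suppose (i), (ii), and (iii) all fail. The failure of (ii) gives $c_i\,R_i\,x_i$ for every $i\in N(c)$. The failure of (iii), combined with this, forces the existence of some $j\in N(c)$ with $c_j\,P_j\,x_j$ (indeed, $\neg(c_j\,I_j\,x_j)$ together with $c_j\,R_j\,x_j$ yields $c_j\,P_j\,x_j$). Then $N(c)$ weakly blocks $x$ through the allocation that applies the cycle $c$ on $N(c)$ and coincides with $x$ elsewhere, contradicting Lemma~\ref{lemma:cycles}.

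For sufficiency, I would argue by contrapositive. Suppose $x$ is not in the strong core. By Lemma~\ref{lemma:cycles} there exists a weakly blocking cycle $c\in\mathcal{C}$, i.e., for the allocation $z$ that swaps along $c$ we have $c_i=z_i\,R_i\,x_i$ for all $i\in N(c)$ and $c_j\,P_j\,x_j$ for some $j\in N(c)$. Then for every $i\in N(c)$ the relation $x_i\,P_i\,c_i$ fails, so (ii) fails; the strict improvement at $j$ makes $c_j\,I_j\,x_j$ fail, so (iii) fails; and the same strict improvement at $j$ precludes $c$ from being an exchange cycle in $x$, so (i) fails as well. This contradicts the hypothesis.

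The only subtlety worth flagging is the treatment of self-loops $c=(i)$: the three conditions then read $x_i=i$, $x_i\,P_i\,i$, or $x_i\,I_i\,i$, which together encode individual rationality at $i$. Once this boundary case is absorbed, the argument is a direct translation of the weak-blocking definition into cycle language via Lemma~\ref{lemma:cycles}, and I do not anticipate a significant technical obstacle; the main care is the propositional bookkeeping relating ``not strictly better'' to ``weakly better'' and the observation that the failures of (ii) and (iii) jointly manufacture the strict improvement required for weak blocking.
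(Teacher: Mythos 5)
The paper does not actually prove this statement: Proposition~\ref{X3} is imported from \cite{Klimentova2020stable} and no proof is given in the text (the paper only remarks that the proof of the related Proposition~\ref{prop:Wako-bounded} is ``very similar''). So there is no in-paper argument to compare against, and your proof must be judged on its own. It is correct, and it is the natural argument: Lemma~\ref{lemma:cycles} reduces weak blocking to blocking cycles, and then the three conditions are exactly the negation of ``$c$ weakly blocks $x$.'' Your bookkeeping is right in both directions --- completeness of $R_i$ turns the failure of (ii) into $c_i\,R_i\,x_i$ for all $i\in N(c)$, and the failure of (iii) then manufactures the strict improvement; conversely a weakly blocking cycle kills (i), (ii) and (iii) simultaneously. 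Your observations that (i) is logically subsumed by (iii) and that self-loops in $\mathcal{C}$ encode individual rationality are both correct and worth making explicit. The only point to tighten: the object ``apply $c$ on $N(c)$ and coincide with $x$ elsewhere'' need not be an allocation, since $x$ may assign objects of $N(c)$ to agents outside $N(c)$; the definition of weak blocking only constrains $z$ on the coalition, so you should instead complete $z$ by, say, self-loops (or any permutation) on $N\setminus N(c)$. This is cosmetic --- the paper's own proof of Lemma~\ref{lemma:cycles} is equally loose on this point --- but it is the one place a referee could object.
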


\noindent The corresponding IP constraints, which reduce the constraints \eqref{eq:QW2004_strong} to cycles, are as follows:
\begin{align}
\sum_{(i,j)\in A(c)} \sum_{k: kI_ij} y_{ik}+|c|\cdot\left[ \sum_{(i,j)\in A(c)}\,\, \sum_{k:kP_i j} y_{ik}\right]   \geq |c| && \forall c\in\mathcal{C}\label{eq:s_stab_edge_ties}
\end{align}

Finally, similarly to the core and strong core, we provide a new alternative characterisation for the Wako-core.

\begin{proposition}\label{prop:Wako-bounded}
An allocation $x$ is in the Wako-core if and only if for each cycle $c\in\mathcal{C}$,\\ 
(i) $c$ is an exchange cycle in $x$, or\\
(ii) for some agent $i\in N(c)$, $x_iP_ic_i$, or\\
(iii) for some agent $i\in N(c)$, $c_iI_ix_i$ and $c_i\neq x_i$. 
\end{proposition}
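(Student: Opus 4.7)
The plan is to reduce to Wako's characterization of the set of competitive allocations as the core defined by antisymmetric weak domination, and then apply Lemma~\ref{lemma:cycles} (whose proof trivially carries over to this variant of blocking) to restrict blocking to cycles of the acceptability graph $G$. It then suffices to show, for each fixed cycle $c\in\mathcal{C}$, that $c$ does \emph{not} antisymmetrically weakly block $x$ if and only if one of (i), (ii), (iii) holds. Recall that $c$ antisymmetrically weakly blocks $x$ iff (a) $c_i R_i x_i$ for all $i\in N(c)$, (b) $c_j P_j x_j$ for some $j\in N(c)$, and (c) for each $i\in N(c)$, $c_i I_i x_i$ implies $c_i=x_i$.

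For the ($\Leftarrow$) direction, I would verify each alternative separately. If (i) holds, then $c_i=x_i$ for every $i\in N(c)$, so no agent strictly prefers $c_i$ to $x_i$ and requirement (b) fails. If (ii) holds, then at the witnessing agent $i$ one has $x_i P_i c_i$, so requirement (a) fails. If (iii) holds, then the witnessing agent $i$ satisfies $c_i I_i x_i$ with $c_i\neq x_i$, which directly violates requirement (c). In all cases $c$ does not antisymmetrically weakly block $x$.

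For the ($\Rightarrow$) direction, I would argue by contradiction, assuming $c$ satisfies none of (i), (ii), (iii) and showing that $c$ antisymmetrically weakly blocks $x$. The negation of (ii) gives $c_i R_i x_i$ for every $i\in N(c)$, hence (a) holds. The negation of (iii) states that $c_i I_i x_i$ forces $c_i=x_i$, which is exactly condition (c). Combining these, whenever $c_i\neq x_i$ we have $c_i R_i x_i$ and not $c_i I_i x_i$, hence $c_i P_i x_i$; the negation of (i) produces at least one such $j\in N(c)$, establishing (b). All three conditions of antisymmetric weak blocking being satisfied contradicts $x\in\mathcal{WC}(R)$.

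I do not expect any genuine obstacle: the argument is a short truth-table-style case analysis once Wako's equivalence and the cycle-reduction of Lemma~\ref{lemma:cycles} are in hand. The only minor point to keep in mind is that $\mathcal{C}$ includes self-loops, but these are harmless, since individual rationality of any Wako-core allocation automatically ensures that a one-agent cycle can never be a blocking cycle.
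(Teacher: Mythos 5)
Your proof is correct, and it is essentially the argument the paper has in mind: the paper omits its own proof of Proposition~\ref{prop:Wako-bounded}, deferring to the analogous Proposition~\ref{X3} in the cited reference, and the intended verification is exactly your route of invoking Wako's equivalence, reducing to blocking cycles via Lemma~\ref{lemma:cycles}, and then doing the three-way case analysis (including the observation that $\neg$(i)$\wedge\neg$(ii)$\wedge\neg$(iii) forces a strict improver on the cycle). Your handling of self-loops is also consistent with the paper's conventions.
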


\noindent The proof of Proposition~\ref{prop:Wako-bounded} is omitted as it is very similar to that of Proposition~\ref{X3} (see \cite{Klimentova2020stable}). Proposition~\ref{prop:Wako-bounded} 
leads to the following constraints, which can be used to find competitive allocations (i.e., allocations in the Wako-core):
\begin{align}
\sum_{(i,j)\in A(c)} y_{ij}+|c|\cdot\left[ \sum_{(i,j)\in A(c)}\,\, \sum_{k: kR_ij, k\neq j} y_{ik}\right]   \geq |c| && \forall c\in\mathcal{C}\label{eq:Wako-core}
\end{align}
To see the correctness of this new formulation, observe that the first term of \eqref{eq:Wako-core} is equal to $|c|$ if condition (i) of Proposition~\ref{prop:Wako-bounded} is satisfied and less than $|c|$ otherwise; and the second term has value at least $|c|$ if condition (ii) or (iii) of Proposition~\ref{prop:Wako-bounded} is satisfied and 0 otherwise. Therefore, constraint \eqref{eq:Wako-core} is satisfied if and only if at least one of the three conditions of Proposition~\ref{prop:Wako-bounded} holds.

\subsection{Bounded length exchanges}

Note that the above cycle-formulations are not very practical due to the exponentially large number of cycles. In fact, this justified the novel IP formulations proposed in Section~\ref{sec:novelIP}. 
However, the cycle-formulations are practical for the case of \emph{bounded} length exchanges.

One easily verifies that Lemma~\ref{lemma:cycles} can be extended to bounded length exchanges in a natural way: the strong core, Wako-core, and core of a $k$-housing market can be defined equivalently by the absence of corresponding blocking \emph{cycles} of size at most $k$. In fact, Klimentova et al.~\cite{Klimentova2020stable} proposed associated IP formulations by adapting constraints \eqref{eq:stab_edge} and \eqref{eq:s_stab_edge_ties} to bounded exchange cycles.
One can similarly adapt constraints \eqref{eq:Wako-core} to obtain an IP formulation for the Wako-core of a $k$-housing market. In our simulations we used the most efficient cycle-edge formulations by Klimentova et al. (see the detailed description in section 3.3 of ~\cite{Klimentova2020stable}).

\section{Computational Experiments}\label{sec:comp}
%

In this section we perform a computational analysis of the models proposed in Section~\ref{sec:IP} and compare them with the models for bounded length exchange cycles, proposed in~\cite{Klimentova2020stable}. Furthermore, we estimate the frequency of violations of the respecting improvement property for all models by computational simulations.
The models are run for both strict and weak preferences and considering two objective functions: maximisation of the size of the exchange (corresponding to the maximisation of the number of transplants in the context of KEPs), denoted by Max$_t$, and maximisation of total weight (where weights can mean the scores given to the corresponding transplants in a KEP, reflecting the qualities of the transplants), denoted by Max$_w$.
For bounded length exchanges, the maximum length considered are $k=2$ and $k=3$.



In Section \ref{sec:results_stable}, we compare the size/weight of the maximum size/weight allocation to the core, competitive and strong core allocations under the same objective. For the unbounded case, we further analyse the 
\emph{price of fairness}: the difference in percentage in the number of transplants of the maximum total weight solution, and the core, competitive and strong core allocations for both objectives, when compared to the maximum size solution.

In Section \ref{sec:results_unstable} we calculate the average number of weakly blocking cycles  for allocations provided by each formulation. By doing so, we give a rough indication of  how far each solution is from the strong core. We complement that analysis with the quantification of the average number of vertices of an instance that may obtain a strictly better allotment in at least one weakly blocking cycle.

Average CPU times required to solve an instance of a given size for each of the formulations in Section \ref{sec:IP} are presented in subsection \ref{sec:time}.



Finally, in subsection~\ref{sec:violRI} we provide results on the frequency of violations of the respecting improvement property for all of our models.

All programs were implemented using Python programming language and tested using Gurobi as optimisation solver~\cite{gurobi}. Tests were executed on a MacMini 8 running macOS version
10.14.3 in a Intel Core i7 CPU with 6 cores at 3.2 GHz with 8GB of RAM.


Test instances were generated with the generator proposed in~\cite{Santos17,fair19} and are available from http: TBA. The number of pairs of an instance ranges from 20 to 150; 50 instances of each size were generated. 
The weights associated to the arcs of the graph were generated randomly within the interval $(0,1)$, and preferences were assigned in accordance with those weights: the higher is the weight of an outgoing arc for a given vertex, the more preferred is the corresponding good. For weak preferences, outgoing arcs with weights within each interval of length $\frac{1}{|V|}$ were considered equally preferable. 







    

\subsection{Impact of stability on the number of transplants}\label{sec:results_stable}

Figure \ref{fig:abs_val_weak} presents average results for the maximum size and maximum weight objectives for weak preferences under different settings: no stability requirements ({Max}), core, competitive and strong core allocations. We refrain ourselves from presenting the results for the case of strict preferences as all curves are similar, except that for strict preferences the competitive and strong core allocations are the same. 

As expected, both the number of transplants and total weight decrease by increasing the number of constraints from {Max} to Core, then to Competitive, and then to Strong Core allocations. The strong core curve is non-monotonic, which is explained by the absence of feasible solutions for several instances. Next to the curve we present the number of instances out of the total 50 where a feasible solution existed. 

\begin{figure}[htbp]

    \centering
    \includegraphics[scale = 0.65]{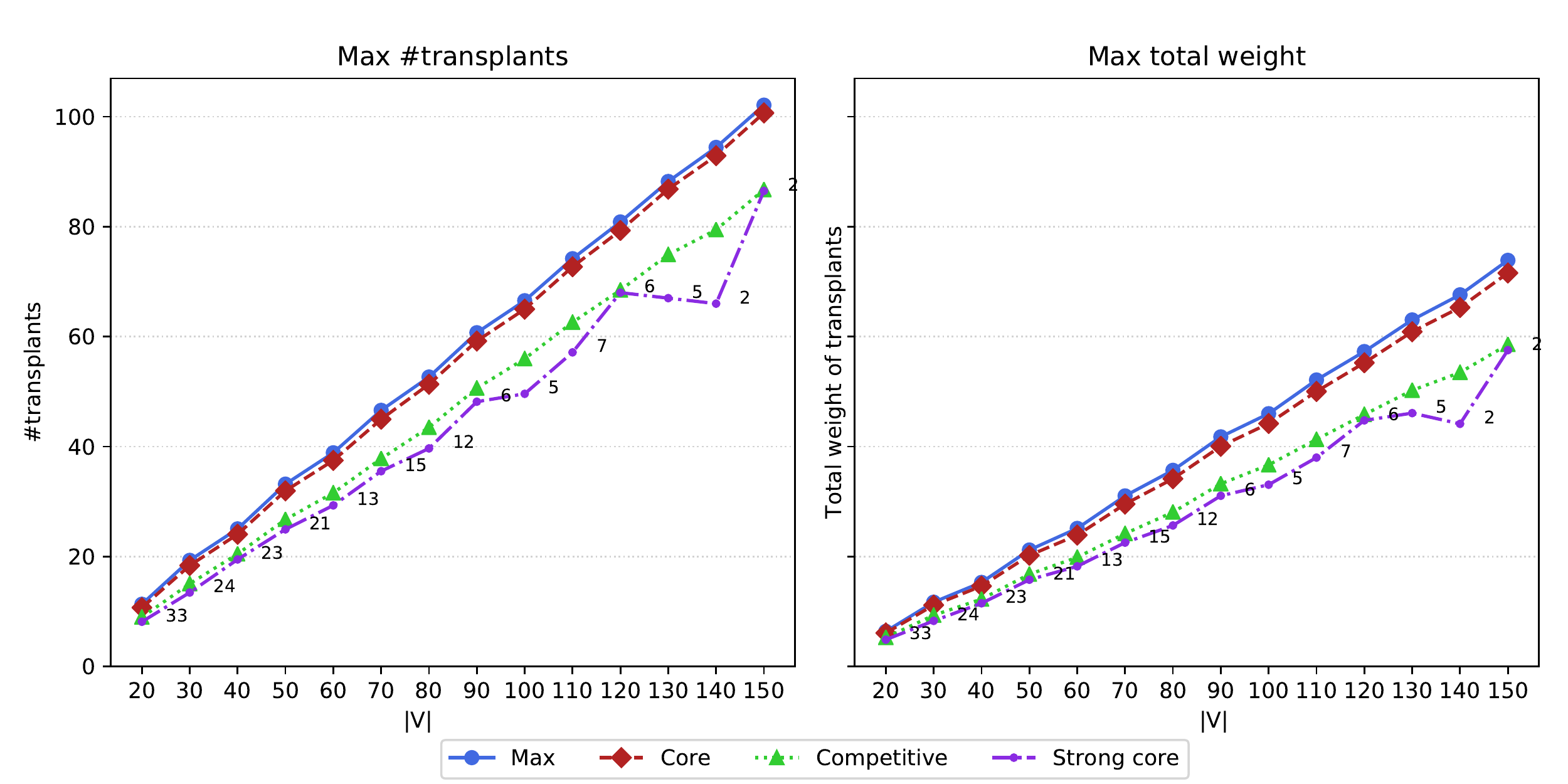}
    \caption{Number of transplants (left) and total weight of transplants (right) for unbounded length and weak preferences. The numbers in the chart reflect the number of instances where a feasible solution existed.}
    \label{fig:abs_val_weak}
\end{figure}

Figure \ref{fig:abs_val_all} makes a similar analysis for the bounded case, when $k = 2$ and $k = 3$. Max$^{k = \infty}$  refers to the unbounded exchange problem, whilst Max$^{k =2}$ and Max$^{k =3}$ correspond to the bounded problem for $k = 2$ and $k = 3$, respectively. The same reasoning is used for the notation associated to the Wako-core \footnote{As mentioned before, for the unbounded case competitive allocations are equivalent to Wako-core, and for bounded case we only have Wako-core.} ({W.-Core}) and strong core ({S.Core)}.     
 For easiness of comparison between the bounded and the unbounded cases, we again plot the two curves from Figure \ref{fig:abs_val_weak}  associated with maximum utility ({Max}) which, in both cases, represent an upper bound for our solutions.
Naturally, the curves associated with  $k = 2$ are dominated by those associated with  $k = 3$.
We can observe that the maximum number of transplants for $k = 3$ and for unbounded $k$ are very similar (see Figure \ref{fig:abs_val_all} (left)).
Notice also that even though some curves overlap and seem identical, there are minor differences among them, except for the case of core and Wako-core allocations for $k=2$, that coincide.
Again, we only present results for weak preferences, as this is the more general case. For strict preferences, for $k = 3$ the curves are similar, 
for $k = 2$, core, competitive and strong core allocations coincide, and the latter two are also the same for unbounded exchanges.


\begin{figure}

    \centering
    \includegraphics[scale = 0.7]{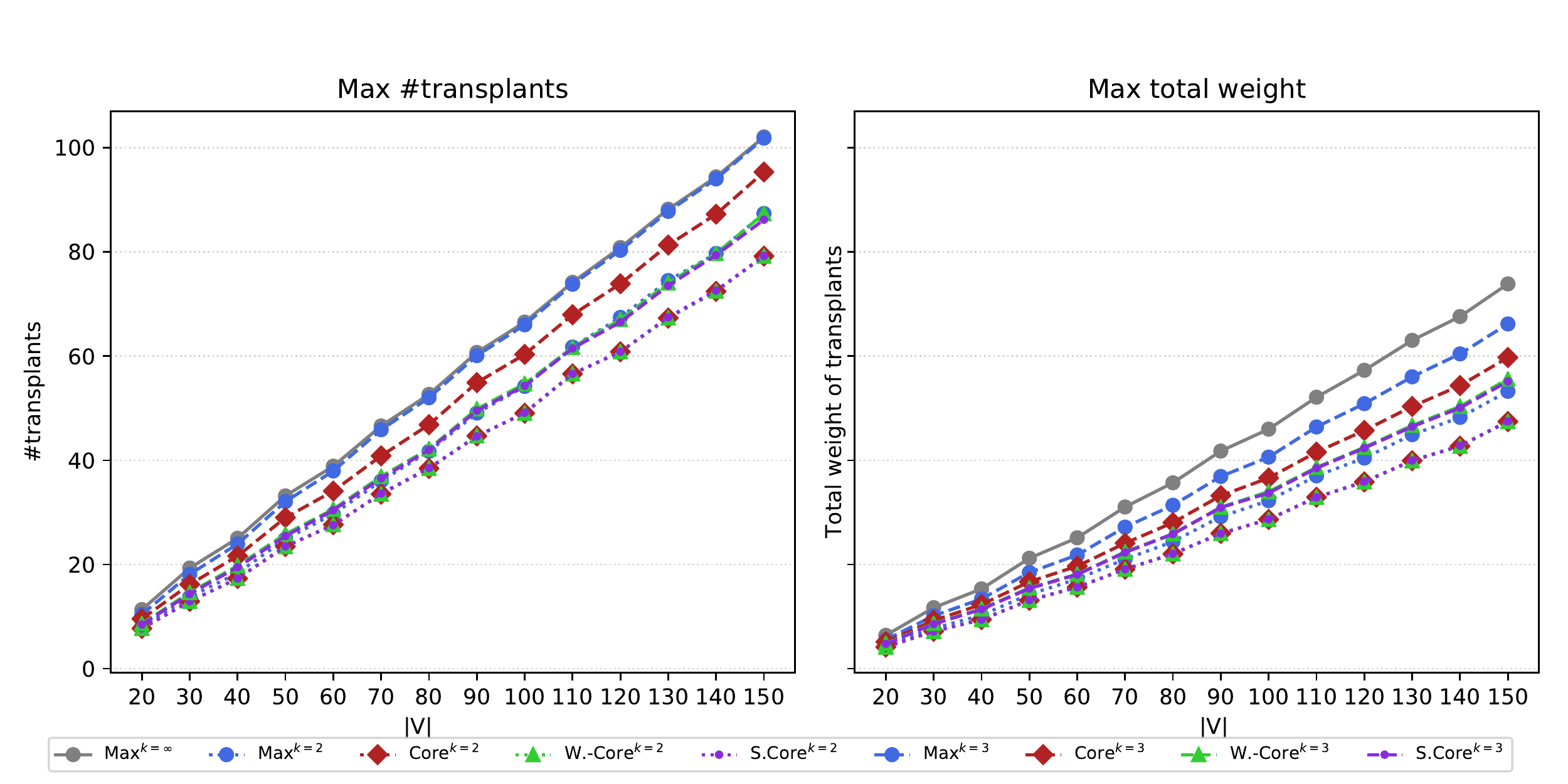}
    \caption{Comparison of the number of transplants (left) and the total weight of transplants (right) for bounded length exchanges ($k=2,3$) and weak preferences. A solid line is used for the unbounded case, dotted lines are used for $k = 2$ and dashed lines for $k = 3$.}
    \label{fig:abs_val_all}
\end{figure}

From a practical point of view an interesting question is to study the impact of stability requirements on the number of transplants achievable. Although KEPs have many other key performance indicators, this is unarguably the most relevant one, as this criterion is optimised as a first objective in all the European KEPs \cite{biroetal2019b}. 
Figure \ref{fig:pf_weak} presents the \emph{price of fairness}, that is difference in percentage in the number of transplants for Max$_w$ allocation, and for Core, Competitive and Strong Core allocations under both objectives, when compared to the maximum number of transplants achievable (Max$_t$).
Subscripts $t$ and $w$ identify the objective functions used for each allocation. 
As shown, 
the price of fairness for competitive and strong core allocations is extremely high, when compared to the core. 
It decreases with problem size for both objective functions and for all allocation models, being slightly higher for the core for the total weight objective (see curve Core$_w$). For the  maximisation of the number of transplants (curve Core$_t$), for instances with more than 50 nodes the reduction is of less than 3\%, decreasing to 1\% for the largest instance. Such result is of major practical relevance as it indicates that with increasing size of the programs one can consider pairs' preferences in the matching with no significant reduction in the number of pairs transplanted.  
\begin{figure}[htbp]
    \centering
    \includegraphics[scale = 0.7]{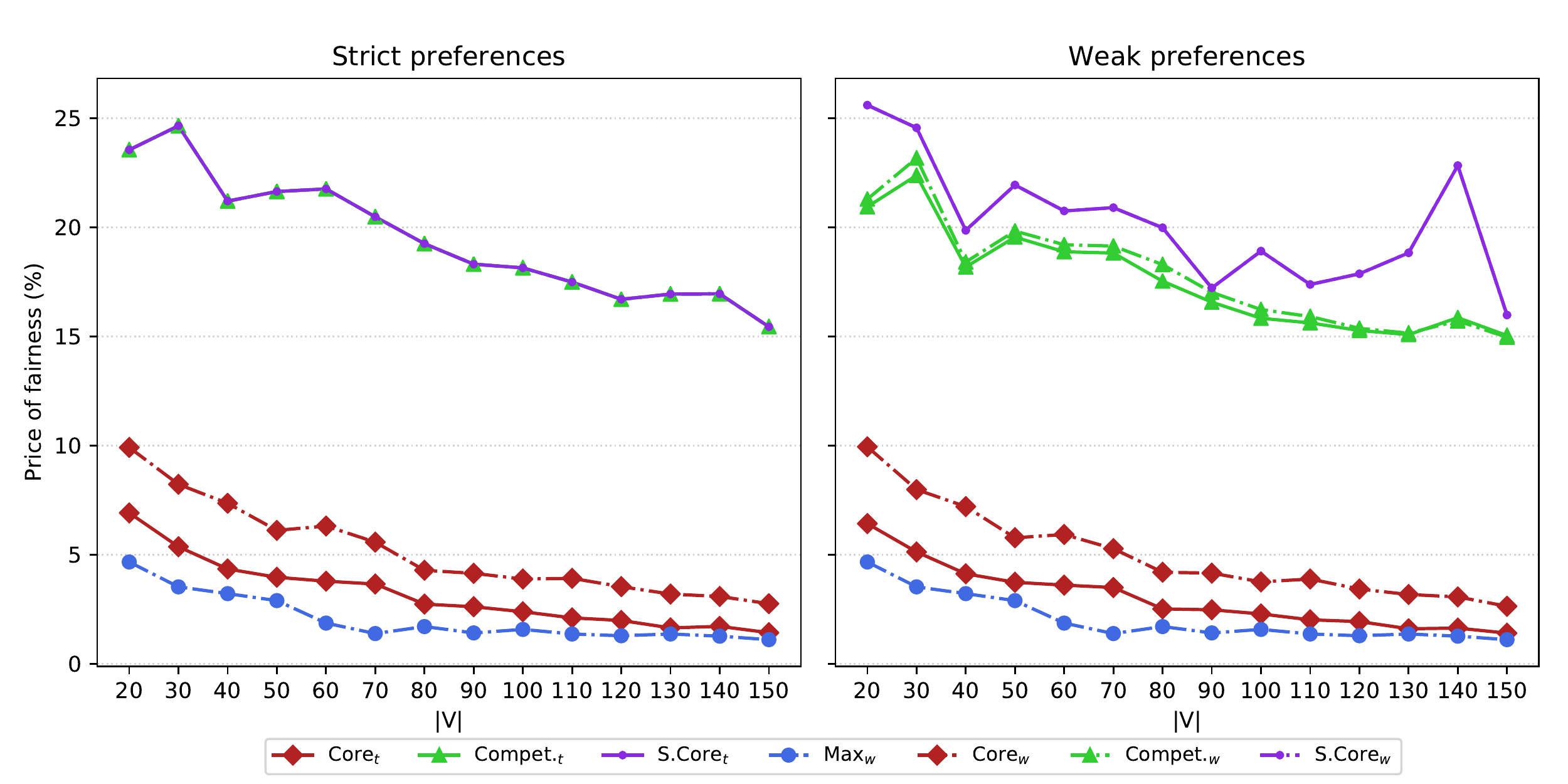}
     \caption{Price of fairness with respect to maximum number of transplants for core, competitive and strong core allocations, with maximum number of transplants and maximum total weight, and solution with maximum total weight; for strict (left) and weak (right) preferences.}
    \label{fig:pf_weak}
\end{figure}

\subsection{Assessing the distance of different solutions from the strong core}\label{sec:results_unstable}

Figure \ref{fig:wb_up2} (left) presents the average number of weakly blocking cycles of size 2 in  Max,  Core,  and Competitive (Wako-core) allocations. We denote the maximum length of the blocking cycles considered by $l$. 
For the bounded case, following the same reasoning as in \cite{Klimentova2020stable}, the figure also reports the minimum average number of weakly blocking cycles for the cases where the strong core does not exist, i.e., for the maximum number of transplants/total weight solution with minimum number of weakly blocking cycles.  Interestingly, when the objective function is the number of transplants, the ``unstability'' of the solutions barely depends on the size of exchanges allowed. The same does not hold for the core, where the number of blocking cycles is considerably smaller for $k = 2$. For this and all the remaining cases, the average number of weakly blocking cycles is very low, in most cases below 1. It is worth to note that the average number of blocking cycles tends to be smaller when the objective is to maximise the total weight (Figure \ref{fig:wb_up2} (right)). A plausible justification for this is that the weights reflect patients preferences and therefore a solution obtained by considering that objective will be closer to a stable solution.
\begin{figure}
    \centering
    \includegraphics[scale = 0.7]{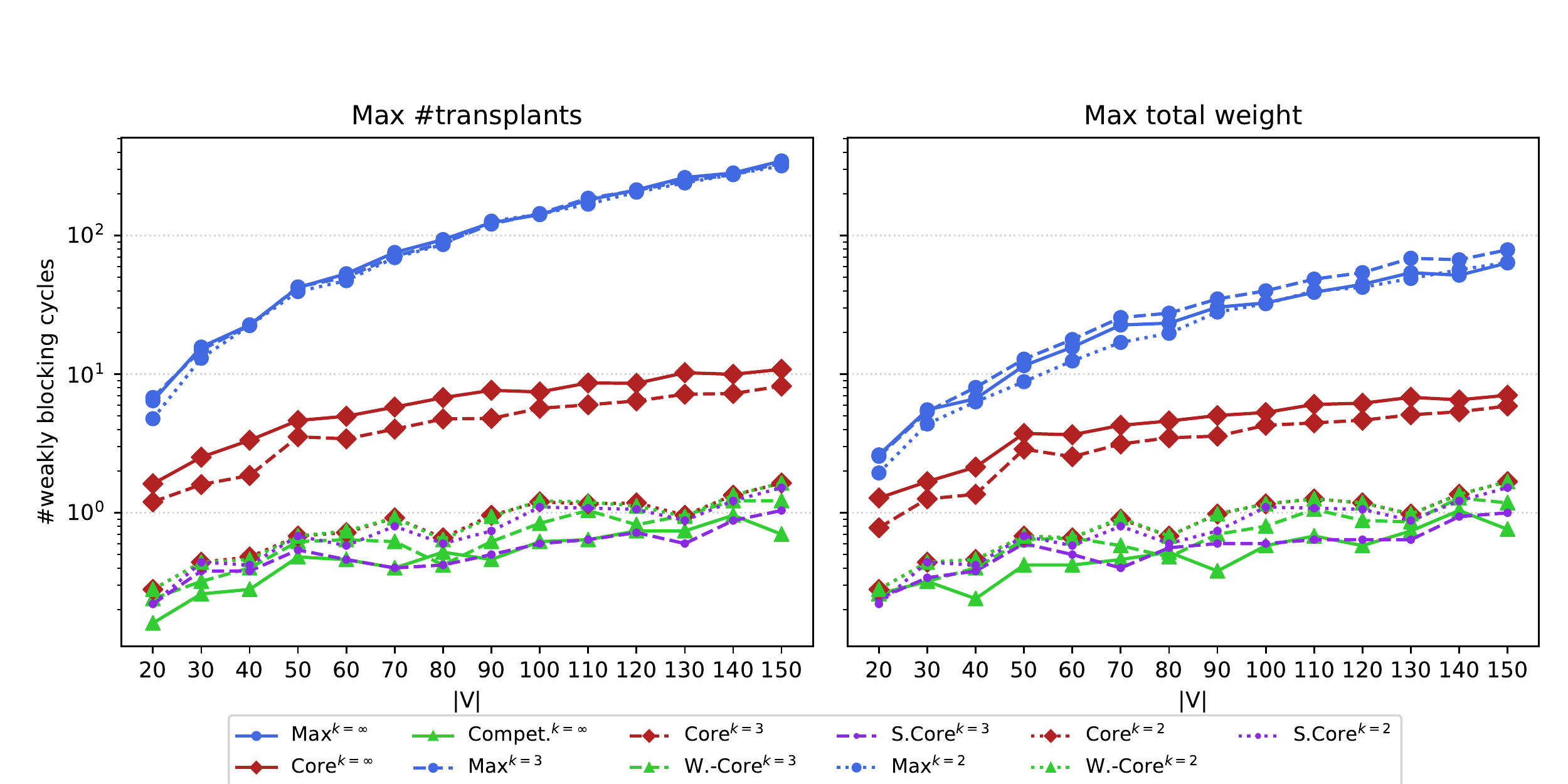}
    \caption{Number of weakly blocking cycles of size ${l=2}$ for solutions with maximum number of transplants (left) and maximum total weight of transplants (right), for unbounded exchanges and exchanges of size up to $k=2$ and $k=3$ for weak preferences.  A solid line is used for the unbounded case, dotted lines are used for $k = 2$ and dashed lines for $k = 3$.}
    \label{fig:wb_up2}
\end{figure}

Figure \ref{fig:wb_up3} presents the same analysis, now considering weakly blocking cycles of size up to 3. Naturally, the solutions for $k = 2$ are excluded from this analysis, as they are fully reflected in Figure \ref{fig:wb_up2}. The conclusions drawn for $l = 2$ remain valid for this case. 
\begin{figure}
\includegraphics[scale = 0.7]{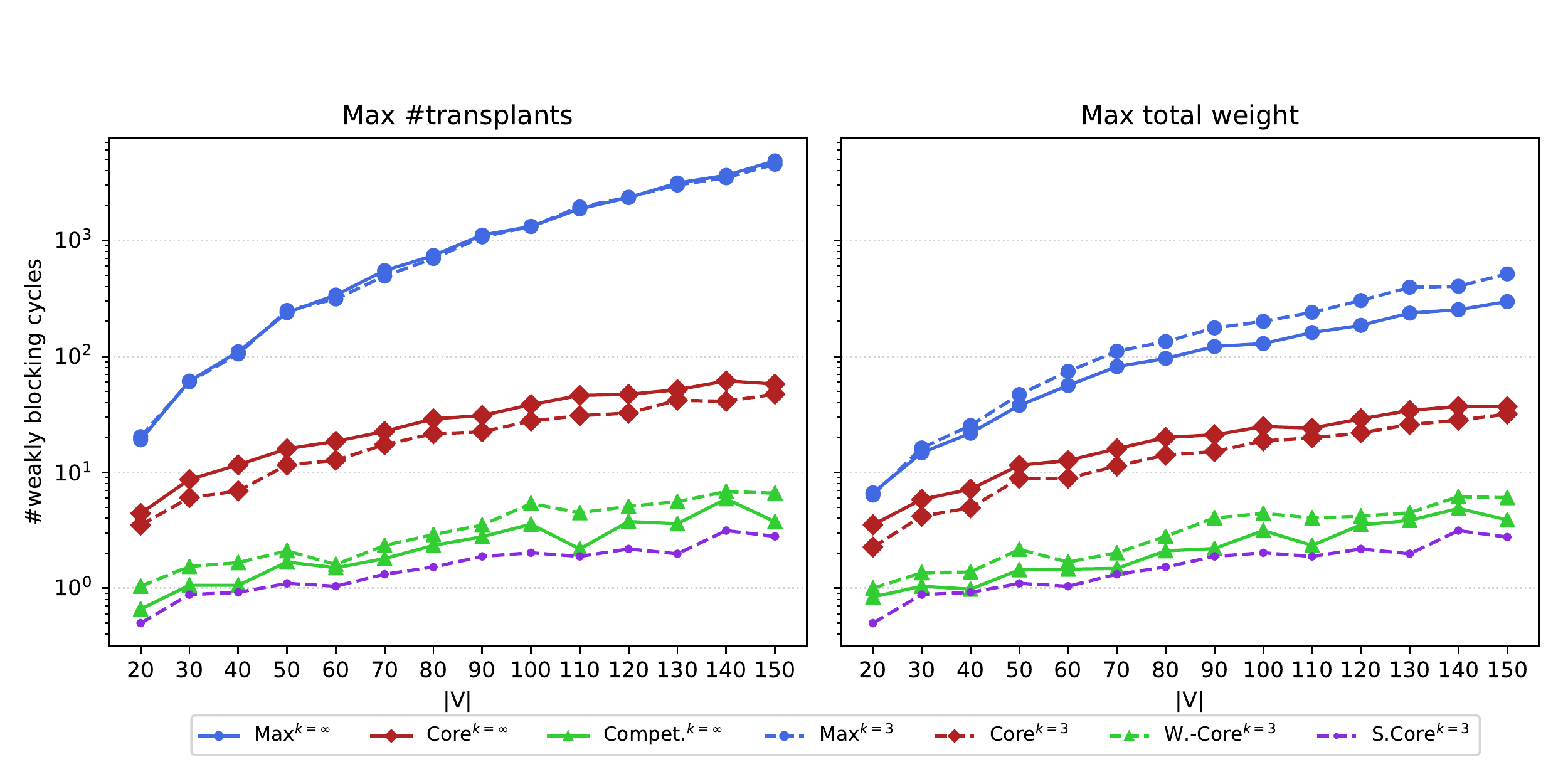}
    \caption{Number of weakly blocking cycles of size ${l=3}$ for solutions with maximum number of transplants (left) and maximum total weight of transplants (right), for unbounded exchanges and exchanges of size up to $k=3$ for weak preferences.}
     \label{fig:wb_up3}
\end{figure}
For the unbounded case, the number of blocking cycles is larger, since one must consider also the cases when $l > 3$.  Figure \ref{fig:wb_up5} provides information on the number of weakly blocking cycles of size up to 4 and up to 5. We do not present results for $l > 5$ as searching for those blocking cycles would exceed our CPU time. 

\begin{figure}
\includegraphics[scale = 0.7]{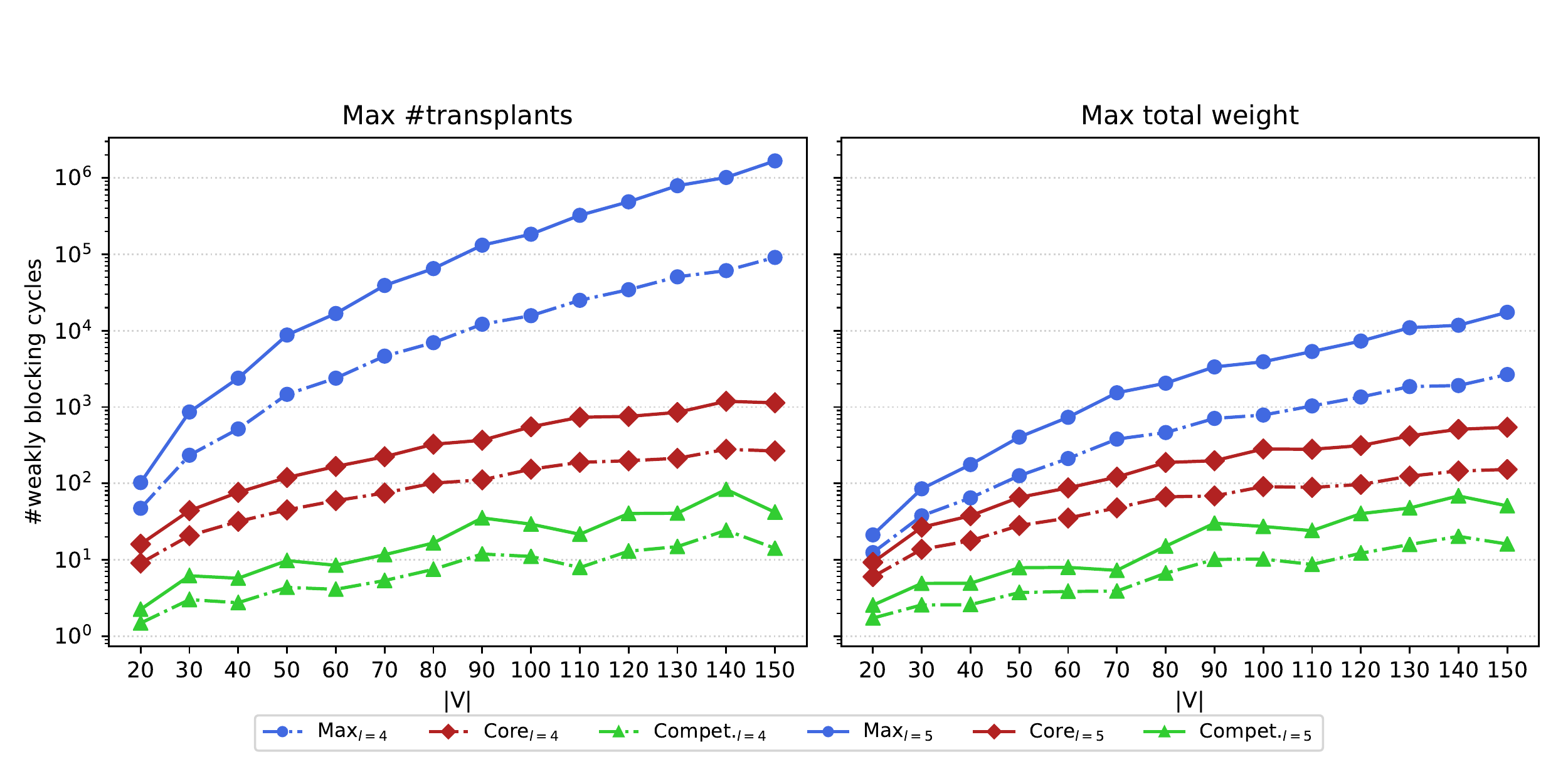}
    \caption{Number of weakly blocking cycles of size ${l=4}$ and $l=5$ for solutions with maximum number of transplants (left) and maximum total weight of transplants (right), for unbounded exchanges and weak preferences.}
    \label{fig:wb_up5}
\end{figure}
Although the information above is already insightful,  to complement our analysis  we provide in Figure \ref{fig:better_verts} information on the average number of vertices of an instance that strictly prefer their allotments in at least one weakly blocking cycle (i.e., on the number of patients that could receive a strictly better kidney in a deviating allocation). An important conclusion can be drawn from the results in the figure: the maximisation of total weight decreases the number of agents that can get a better allotment in a blocking cycle when compared to the maximum size solutions (compare curves {Max} in Figure \ref{fig:better_verts} (left) and (right)). It also allows, by comparison with Figure \ref{fig:pf_weak}, to analyse the trade-off that would be necessary to make in terms of reduction of the total number of transplants to meet a certain level of patients preferences.  



\begin{figure}[htbp]
    \centering
    \includegraphics[scale = 0.7]{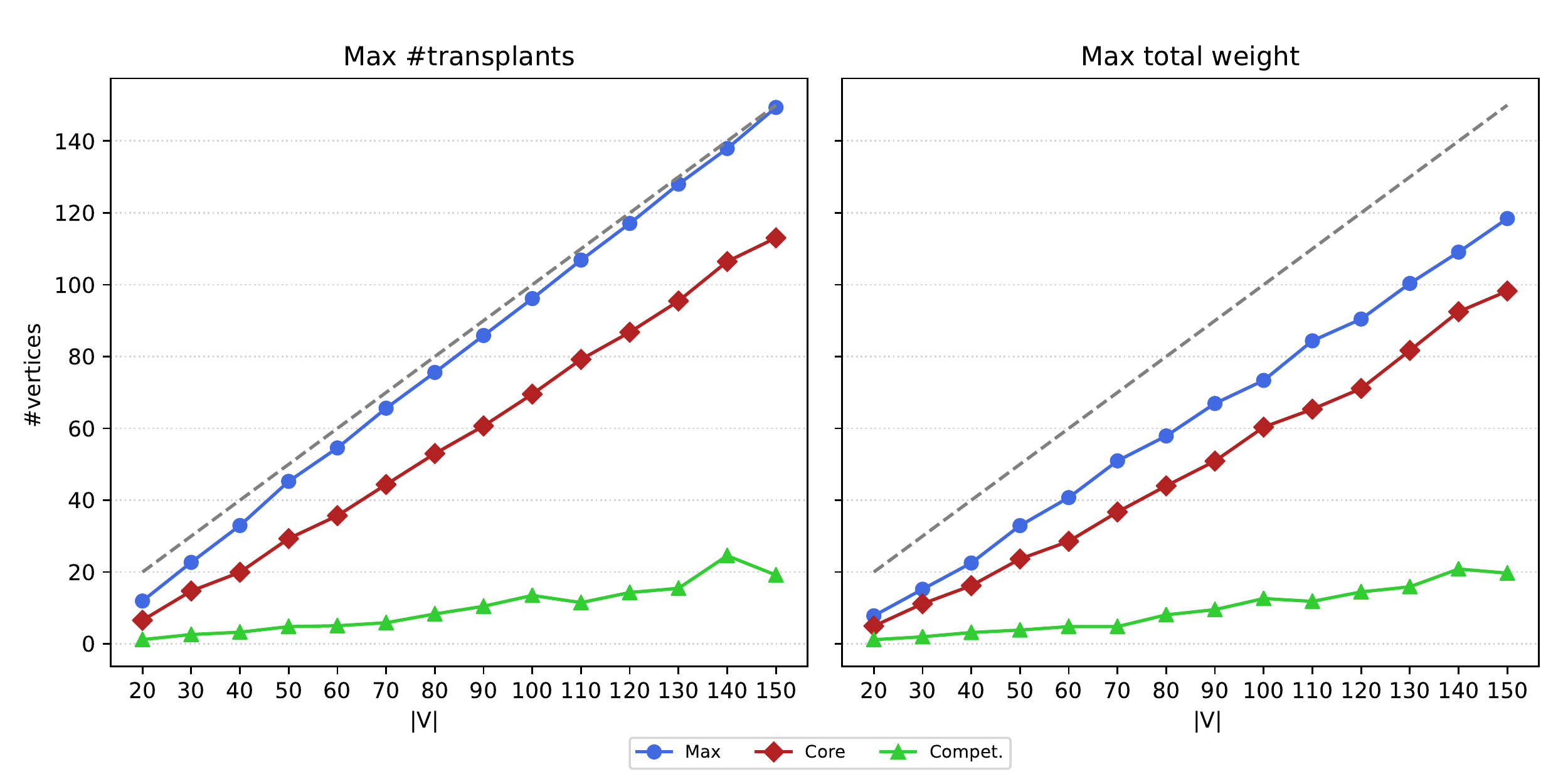}
    \caption{Average number of agents for an instance  for those there exists at least one weakly blocking cycles, where this agent receives a strictly better allotment for weak preferences. The grey line is a reference line showing the number of vertices in an instance.}
    \label{fig:better_verts}
\end{figure}

\subsection{CPU time for unbounded models}\label{sec:time}

In Table~\ref{tab:time} we present the average CPU time for solving an instance of a given size with one of the tree newly proposed IP models for unbounded case.
\begin{table}[htbp]
    \centering
    \footnotesize
    \begin{tabular}{c|rrr| rrr || rrr|rrr}

    &\multicolumn{3}{c|}{Max \# transplants} & \multicolumn{3}{c}{Max total weight} & \multicolumn{3}{c|}{Max \# transplants} & \multicolumn{3}{c}{Max total weight} \\
    $|V|$ & Core & Compet. & S.Core &Core & Compet. & S.Core & Core & Compet. & S.Core &Core & Compet. & S.Core \\
    \cline{1-13}
        &\multicolumn{6}{c|}{\bf Strict preferences}&\multicolumn{6}{c}{\bf Weak preferences}\\

20	&	0.00	&	0.03	&	0.01	&	0.00	&	0.02	&	0.01	&	0.00	&	0.04	&	0.01	&	0.00	&	0.03	&	0.01	\\
30	&	0.03	&	0.13	&	0.04	&	0.02	&	0.11	&	0.03	&	0.02	&	0.28	&	0.04	&	0.02	&	0.17	&	0.03	\\
40	&	0.08	&	0.48	&	0.12	&	0.06	&	0.25	&	0.11	&	0.09	&	0.63	&	0.10	&	0.06	&	0.44	&	0.08	\\
50	&	0.24	&	1.74	&	0.38	&	0.16	&	0.58	&	0.34	&	0.20	&	2.15	&	0.25	&	0.17	&	1.06	&	0.21	\\
60	&	0.47	&	2.39	&	0.87	&	0.28	&	0.91	&	0.79	&	0.52	&	6.03	&	0.44	&	0.26	&	2.87	&	0.39	\\
70	&	1.06	&	3.91	&	1.94	&	0.66	&	2.29	&	1.50	&	0.84	&	16.99	&	1.09	&	0.53	&	7.35	&	0.77	\\
80	&	1.62	&	6.54	&	3.26	&	0.82	&	3.39	&	2.32	&	1.41	&	32.21	&	1.63	&	0.76	&	17.47	&	1.01	\\
90	&	3.14	&	36.34	&	5.31	&	3.27	&	5.38	&	3.59	&	3.29	&	167.15	&	2.36	&	1.82	&	80.88	&	1.49	\\
100	&	3.53	&	16.19	&	19.26	&	2.43	&	6.15	&	9.81	&	4.51	&	188.35	&	8.87	&	3.08	&	95.39	&	4.62	\\
110	&	8.73	&	21.42	&	28.26	&	4.97	&	9.01	&	13.79	&	6.68	&	331.64	&	16.40	&	5.92	&	159.12	&	7.24	\\
120	&	17.84	&	72.87	&	57.36	&	6.81	&	15.36	&	24.32	&	20.14	&	392.88	&	19.60	&	6.79	&	218.58	&	10.87	\\
130	&	14.34	&	46.92	&	84.49	&	14.24	&	22.68	&	34.11	&	14.78	&	586.27	&	21.75	&	12.32	&	438.23	&	10.42	\\
140	&	29.50	&	61.99	&	110.82	&	21.51	&	34.33	&	46.67	&	41.59	&	708.92	&	40.97	&	16.43	&	539.56	&	14.89	\\
150	&	41.99	&	161.10	&	214.32	&	30.66	&	52.61	&	70.77	&	57.13	&	786.43	&	61.79	&	27.82	&	682.99	&	23.91	\\

    \end{tabular}
    \caption{Average CPU time (in seconds) for solving an instance of a given size with the proposed formulation.}
    \label{tab:time}
\end{table}

The instances with the weak preferences are more complicated, for core and, in particular, for the competitive allocation model. However, it was faster to find the strong core for weak, rather than for strict preferences. Moreover, surprisingly,  finding the strong core is the most time consuming task for strict preferences, while it is least time consuming for weak preferences. Finally, we can notice that models for finding core and strong core allocations are performing within the same ranges of magnitude with respect to the CPU time if compared with the corresponding models for the bounded case, analysed in~\cite{Klimentova2020stable}.

\subsection{Violation of respecting improvement property}\label{sec:violRI}




In this section we will make a computational analysis on how often the respecting improvement (RI) property is violated for different models, for both unbounded and bounded cases. To do so, for each model and for instances with 20 and 30 vertices we run the following procedure, presented in Algorithm~\ref{alg:RIcheck}. For the unbounded case  we considered the  Max and Core models under both objectives.

Let $r^i_k$, $r \in \{1,\dots, |V|\}$ be the rank of good $k$ for  agent $i$, that will reflect preferences of $i$,  i.e. if $r^i_k\leq(<, =) r^i_j$, then $kR_i(P_i, I_i)j$.

\begin{algorithm}[H]
\SetAlgoLined
\KwResult{$N$ number of violations of RI property}
$N \leftarrow 0$\;
\For{$i\in V$, $j\in V$, $i\neq j$}{
    Let $R$ be the current preferences of agents\;
    Find allocation with the best allotment for $i$ with respect to $R$, denote the solution by $\bar y$\;
    For $\bar y_{il} = 1$ denote $ \bar r \leftarrow r^i_l$\; 
    \While{$\exists kP_ji$}
     {
        Let $k$ be the first strictly preferred agent for $j$ that precedes $i$ in $R$\;
               
        \If {Strict preferences}{ Swap $i$ with $k$ in the list of preferences of $j$\;}
        \If {Weak preferences}{ Let $i$ become equally preferred for $j$ as $k$ (i.e. $r^j_i = r^j_k$)\;}
        Denote modified preferences by $\tilde R$\;
        Find allocation with the best allotment for $i$ with respect to $\tilde R$, denote solution by $\tilde y$;\\
       For $\tilde y_{it} = 1$, denote $\tilde r = r^i_t$\;    
      \If{$ \bar r< \tilde r$}{
             The respecting improvement property is violated:            $N\leftarrow N+1$\;
       }
       $\bar r \leftarrow \tilde r$; 
       $R \leftarrow \tilde R$\;
     }
}
 
 \caption{Procedure for Checking RI property }\label{alg:RIcheck}
\end{algorithm}
For each pair of agents $i$ and $j$, agent $i$ is consecutively making improvements, moving up in the preference list of agent $j$ until its top. In each step (see \textbf{while} loop in the algorithm) for the case of strict preferences, $i$ is swapped with $k$, who is the first strictly preferred agent by $j$ to $i$. For the case of ties, agent $i$ first becomes equally preferred for $j$ as $k$. After the improvements, the best allocations for the original ($R$) and improved  ($\tilde R$) preferences are compared for $i$. It is considered that there is a violation of the RI property if $i$ obtains a  strictly worse allotment in allocation for $\tilde R$.


Figures~\ref{fig:ri_strict} and~\ref{fig:ri_weak} present box plots for the number of violations of the RI property for instances of a given size 
for strict and weak preferences, respectively, for those models where the RI property is violated at least once. 
Models whose results are the same, independently of the objective considered, are plotted together. That is the case, for example, of Core$_{t}$ and Core$_{w}$, for $k=3$ and strict preferences (see figure~\ref{fig:ri_strict}), or of Wako-core and Core, weak preferences and $k = 2$ and $k = 3$ (see figure~\ref{fig:ri_weak}).

\begin{figure}
\includegraphics[scale = 0.7]{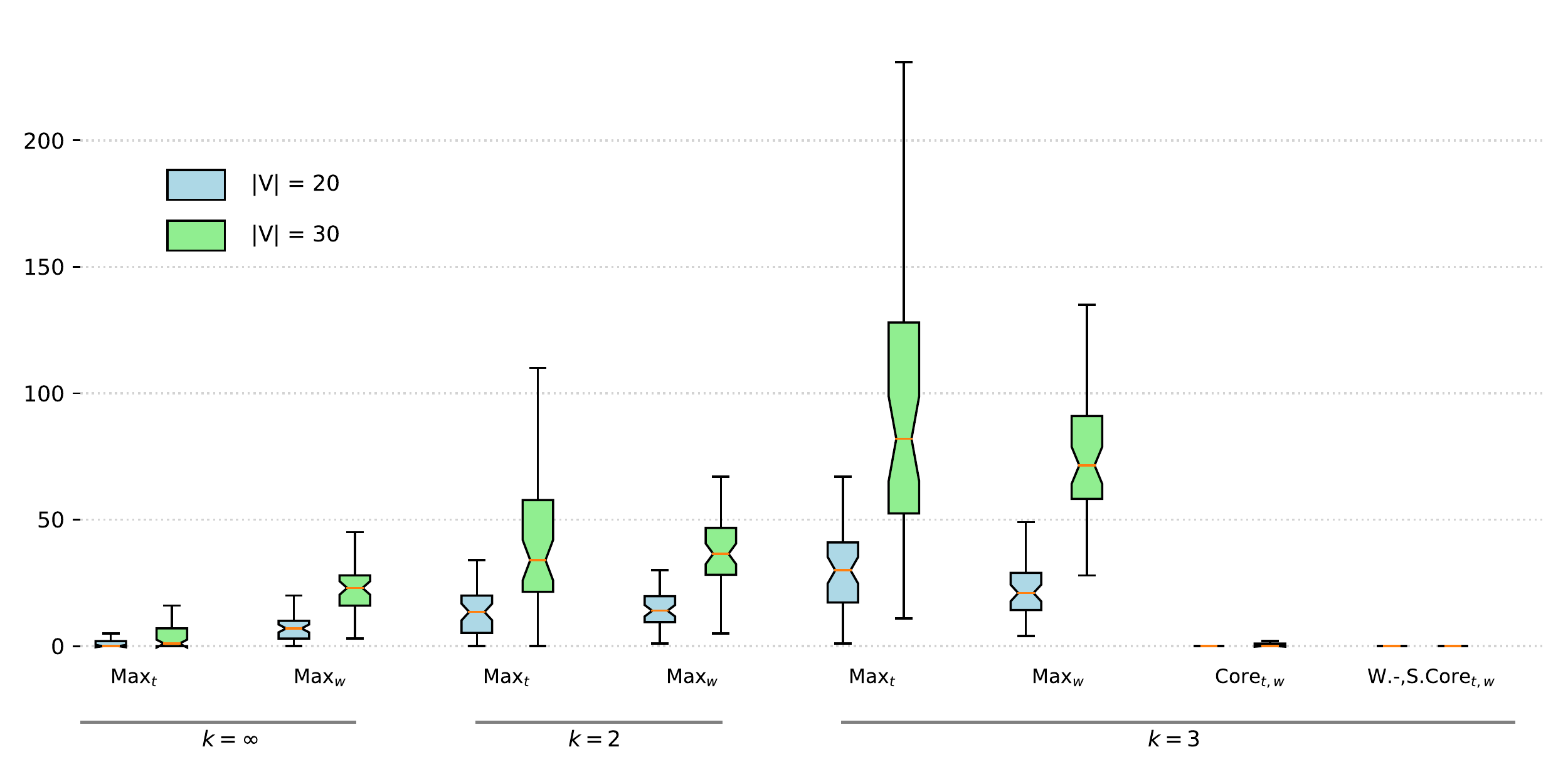}
    \caption{Number of violations of the respecting improvement property for all instances in total of a given size, $|V| = 20,30$, for strict preferences.}
    \label{fig:ri_strict}
\end{figure}
\begin{figure}
\includegraphics[scale = 0.7]{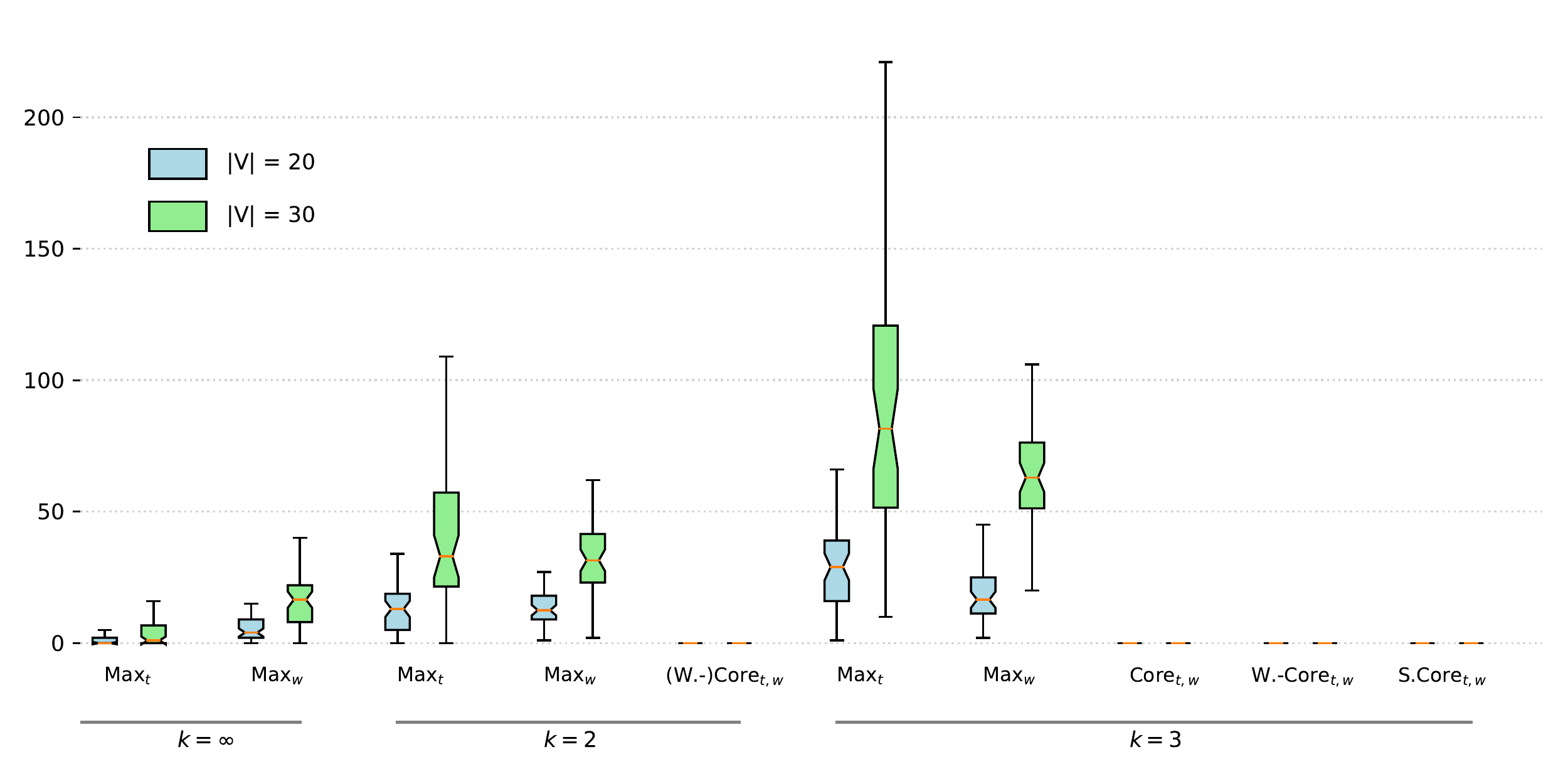}
    \caption{Number of violations of the respecting improvement property for all instances in total of a given size, $|V| = 20,30$, for weak preferences.}
    \label{fig:ri_weak}
\end{figure}

For (Wako-, Strong) Core models there were few cases of violations of the RI property, as reflected in the figure. To give an indication, the total number of violations for all instances with 30 vertices for the weak preferences and $k=3$ was 4549 for Max$_t$, 3145 for Max$_w$, 10 for Core$_{t,w}$, 20 for W.-Core$_{t,w}$, and 2 for S.Core$_{t,w}$. 
For maximum size and maximum weight solutions (Max$_t$ and Max$_w$, respectively), both for the unbounded and  the bounded cases, one can observe a significant number of violations.  Those numbers increase with  instance size. Interestingly, for the unbounded case the number of violations for Max$_t$  was lower than that for Max$_w$. This can be explained by the fact that the former problem has a larger number of alternative best allocations, while for the weighted objective problem the solution is usually unique. On the contrary, for the bounded case maximum weighted solutions violated the property less times, compared to maximum size solutions.

\section{Conclusion}

This paper advances current state of the art in several lines of research. We first prove that in case of strict preferences the unique competitive allocation respects improvement; an extension of that result is provided for the case of ties.

We also advance the work in the housing market of Shapley and Scarf presented in \cite{QW2004} by providing Integer Programming models that do not require exponential number of constraints for the weak core, strong core, and the set of competitive allocations. These models assume that there is no limit on the maximum size of an exchange cycle. However, since there are problems where such assumption may be difficult to hold (e.g. Kidney Exchange Programmes) we further propose alternative IP models for bounded cycles. This contribution is inspired by the definition of competitive equilibrium allocations provided in \cite{Wako1999}.

We proceed with computational experiments that provide insights on the trade-off between stability requirements and maximum number of transplants.  Results show that with increasing size of the instances, such trade-off decreases:  for instances with  more  than  50  nodes core allocations impact on the reduction of transplants is less  than  3\%, decreasing  to  1\%  for  the  largest instance.  

Furthermore, results show that when the objective is to maximise the number of transplants, the ``unstability'' of the solutions, measured by the number of weakly blocking cycles barely depends on the length of the exchanges. Additionally, the maximisation of total weight instead of the number of transplants, leads to solutions where patients' preferences matter more.

As the main open question we left open whether the respecting improvement property with regard to the best allotment holds for a) the core for unbounded exchanges b) for stable matchings in the roommates problem with strict preferences. It would also be interesting to study whether the respecting improvement property can be used to characterise the TTC mechanism for the classical housing markets with strict preferences. 

\section*{Acknowledgements}

We thank Antonio Nicol\'o for his contribution to an earlier version of this paper, and
Tayfun S\"onmez, and Utku \"Unver for valuable comments. 

\bibliographystyle{plain}
\bibliography{KEPstable-v1}
\end{document}